\documentclass[format=acmsmall, review=false]{acmart}

\pdfoutput=1

\usepackage{acm-ec-19}
\usepackage{booktabs}
\usepackage{microtype}
\usepackage{color}
\citestyle{acmauthoryear}
\usepackage{url}
\usepackage{wrapfig}
\usepackage{enumitem}
\usepackage{tikz}
\usepackage{subcaption}

\newcommand{\phat}{\hat{p}}
\newcommand{\qhat}{\hat{q}}

\renewcommand{\epsilon}{\varepsilon}

\newcommand{\IM}{\textit{IM}}

\DeclareMathOperator{\med}{med}

\begin{document}
	\title{Truthful Aggregation of Budget Proposals}  
	
	\author{Rupert Freeman}
	\affiliation{Darden School of Business, University of Virginia}
	\author{David M. Pennock}
	\affiliation{DIMACS, Rutgers University}
	\author{Dominik Peters}
	\affiliation{School of Engineering and Applied Sciences, Harvard University}
	\author{Jennifer Wortman Vaughan}
	\affiliation{Microsoft Research New York City}

\begin{abstract} 
	We consider a participatory budgeting problem in which each voter submits a proposal for how to divide a single divisible resource (such as money or time) among several possible alternatives (such as public projects or activities) and these proposals must be aggregated into a single aggregate division. Under $\ell_1$ preferences---for which a voter's disutility is given by the $\ell_1$ distance between the aggregate division and the division he or she most prefers---the social welfare-maximizing mechanism, which minimizes the average $\ell_1$ distance between the outcome and each voter's proposal, is incentive compatible~\citep{GKSA16}. However, it fails to satisfy a natural fairness notion of \emph{proportionality}, placing too much weight on majority preferences. Leveraging a connection between market prices and the generalized median rules of \citet{Moul80}, we introduce the \emph{independent markets} mechanism, which is both incentive compatible and proportional. We unify the social welfare-maximizing mechanism and the independent markets mechanism by defining a broad class of \emph{moving phantom} mechanisms that includes both. We show that every moving phantom mechanism is incentive compatible. Finally, we characterize the social welfare-maximizing mechanism as the unique Pareto-optimal mechanism in this class, suggesting an inherent tradeoff between Pareto optimality and proportionality.
\end{abstract}

\maketitle

\section{Introduction}
\label{sec:introduction}

Participatory budgeting allows members of a community to collectively decide how to divide a portion of the community's budget among a set of proposed alternatives~\citep{Cab04}. Participatory budgeting was first introduced in Brazil~\citep{SG02}, and it has now been used in more than 3,000 cities around the world, including New York, Boston, Chicago, San Francisco, and Toronto~\citep{PBP}.  It has been used to determine how to allocate the budgets of states and cities as well as housing authorities and schools.

Many participatory budgeting elections are run using a variant of $k$-approval voting, in which each voter chooses up to $k$ projects to approve, and the projects with the highest number of approvals are funded, subject to budget constraints~\citep{GKSA16}.  Under such a voting scheme, each proposed project is either fully funded or not funded at all. This makes sense for well-delineated projects such as renovating a school or adding an elevator to a public library. For other kinds of projects, funding decisions need not be all-or-nothing.  For example, participatory budgeting could be used to decide how to divide a part of a city's spending between its departments of health, education, infrastructure, and parks. A voter might propose a division of the tax surplus among the four departments into the fractions (30\%, 40\%, 20\%, 10\%). The city could invite each citizen to submit such a budget proposal, and they could then be aggregated by a suitable mechanism.

A first idea for aggregating the proposals would be to take the arithmetic mean.
But the mean has a serious flaw as an aggregator: it's easily manipulated. A voter preferring a (60\%, 40\%) division across two alternatives may vote (100\%, 0\%) instead in order to distort the mean calculation and move the aggregate closer to his or her true preference if the first alternative has little community support.

In this paper, we seek mechanisms that are resistant to such manipulation.  In particular, we require that no voter can, by lying, move the aggregate division toward his or her preference on one alternative without moving it away from his or her preference by an equal or greater amount on other alternatives. In other words, we seek budget aggregation mechanisms that are incentive compatible under $\ell_1$ preferences, with each voter's disutility for a budget division equal to the $\ell_1$ distance between that division and the division she prefers most. 
One can show that such preferences emerge naturally from a model where voters have separable uniform utilities over alternatives together with a funding cap for each alternative beyond which it provides no value to the voter.

\citet{midpoint} and \citet{GKSA16} have shown that choosing an aggregate budget division that maximizes the utilitarian social welfare of the voters---that is, a division that minimizes the total $\ell_1$ distance from each voter's report---is both incentive compatible and Pareto optimal under this voter utility model.  However, this utilitarian aggregate has a tendency to overweight majority preferences. For example, imagine that one hundred voters prefer (100\%, 0\%) while ninety-nine prefer (0\%, 100\%). The utilitarian aggregate is (100\%, 0\%) even though the mean is close to (50\%, 50\%). In many participatory budgeting scenarios, the latter solution is more in the spirit of consensus. For example, suppose that we vote over the allocation of education funding to schools, and imagine that each family votes for all education funding to go to their own neighborhood school. The utilitarian aggregate would earmark the entire budget to the most populous school district. However, we may prefer that funds are split in proportion to the districts' populations.  To capture this fairness constraint, we define a notion of \emph{proportionality}, requiring that when voters are single-minded (as in this example), the fraction of the budget assigned to each alternative is equal to the proportion of voters who favor that alternative. Do there exist aggregators that are both incentive compatible and proportional?

This question is easiest to answer for two alternatives.
In this case, $\ell_1$ preferences are a special case of \emph{single-peaked} preferences, well studied in the voting literature. The seminal results of \citet{Moul80} imply that, in this setting, all incentive-compatible voting schemes correspond to inserting $n+1$ ``phantom'' proposals, where $n$ is the number of voters, and returning the median of the $n$ true proposals and the $n+1$ phantoms. We show that there exists a unique way of placing the phantoms that results in a proportional mechanism for two alternatives. 

Generalizing Moulin's phantom median mechanisms to allow for more than two alternatives is difficult. Existing proposals for such generalizations take a median in each dimension independently \citep{BJ83,PSS92,BM94}, but this strategy is doomed in our application with normalization constraints that require funding to sum up to 100\%. Unlike the mean, taking a coordinate-wise median will usually fail to normalize. We address this problem by allowing the set of phantoms to continuously shift upwards, increasing the sum of the generalized medians until the aggregate becomes normalized. This idea allows us to define a very general class of \emph{moving phantom} mechanisms. One might think that allowing the final phantom locations to depend on voters' reports might give voters an incentive to misreport. However, we prove that every moving phantom mechanism is incentive compatible under $\ell_1$ preferences.

Among this large family of incentive-compatible mechanisms, we find one that satisfies our proportionality requirement. This moving phantom mechanism is obtained when phantoms are placed uniformly between 0 and a value $x\ge 0$ which increases until the coordinate-wise medians sum to one. To analyze this mechanism, we prove that the aggregate found by this mechanism can be interpreted as the clearing prices in a market system, and hence call it the \emph{independent markets} mechanism. This reveals an unexpected connection between market prices and generalized medians that may be of broader interest. 
The independent markets mechanism can also be justified from a game-theoretic perspective as the unique Nash equilibrium of a natural voting game. Thus, this proportional moving phantom mechanism has two alternative interpretations:
\begin{enumerate}
\item Market interpretation: For each alternative, we set up a market in which $x$ units of a divisible good are sold. This amount $x$ is the same across all markets.  Each voter has a value for the good in market $j$ that is equal to the fraction of the budget that the voter would prefer be allocated to alternative $j$ in the budget division setting, and has \$1 to spend in each market.  Increase $x$ until the point at which the market clearing prices across these independent markets sum to \$1.  At this point, the market clearing prices are exactly the aggregate division selected by the independent markets moving phantom mechanism.

\item Voting game: Each agent receives one credit for each alternative and may choose any amount of that credit to spend on the alternative.  The outcome of the game is a normalized vector proportional to the amount spent on each alternative.  Agents choose their spending in such a way as to minimize the $\ell_1$ distance between their preferred budget division and the outcome of the game. This game has a unique Nash equilibrium outcome which is exactly the division selected by the independent markets mechanism.
\end{enumerate}
By analyzing the market and Nash equilibria of these systems, we can show that our mechanism satisfies several important social choice properties, such as participation and reinforcement. We also show that our mechanism retains its proportionality guarantees even for voters that approve exactly $k$ projects each, for some $k\ge 1$, and propose to split the budget equally among them.

In contrast, the independent markets mechanism fails to satisfy Pareto optimality.  We show that this is unavoidable, as no proportional moving phantom mechanism is Pareto optimal. 
In fact, we prove that there is a \textit{unique} moving phantom mechanism that is Pareto optimal (provided the number of alternatives is sufficiently large). In this mechanism, all phantoms start at 0 and then, one by one, transition to 1, with no two phantoms moving at the same time. 
This mechanism turns out to also have a phantom-free interpretation: it is equivalent to selecting the budget division that maximizes utilitarian social welfare (breaking ties in favor of divisions closer to a uniform division)---the same mechanism previously studied by \citet{midpoint} and \citet{GKSA16}.

While the motivation of our formal model is participatory budgeting, it applies to the division of other resources such as time.
For example, one might imagine using such a mechanism as a way to reach consensus among a team of conference organizers who wish to divide a day between talks, poster sessions, and social activities.
Another example would be a government that needs to decide on a target energy mix (that is, how much energy should come from fossil fuels, nuclear, or renewable sources) and wishes to aggregate expert proposals.
In all these applications, our class of moving phantom mechanisms can be used to make better decisions.

\paragraph{Related Work}

Several recent papers study voting rules for participatory budgeting, considering both axiomatics and computational complexity, but under the assumption that projects are indivisible, so can either be fully funded or not funded at all~\citep{aziz2020participatory,GKSA16,BPNS17,LB11c,ALT18}.  
The setting in which partial funding of alternatives is permitted has also been studied, but generally under a different utility model in which voters assign utility scores to the alternatives rather than having an ideal distribution~\citep{FGM16,BMS05,ABM17,airiau2019portioning}. In this model, \citet{GuNe14a} characterize a rule based on maximizing the Nash product using an axiom very similar to our proportionality property: If voters are single-minded in the sense of assigning utility 0 to all but one project, then the projects receive funding proportional to the number of supporters. \citet{Dudd15a} proposes a rule for this setting that satisfies a stronger proportionality criterion (which gives representation guarantees to every subset of voters) and is strategyproof if voters have dichotomous preferences. This body of work includes positive results for weak versions of incentive compatibility, but impossibilities for obtaining full incentive compatibility.
\citet{GKG+18} perform a Mechanical Turk study exploring preference structure in a high-dimensional continuous setting similar to ours.

Closely related to our work is an unpublished paper by \citet{midpoint}, as well as the PhD thesis (in German language) of \citet{lindner2011manipulierbarkeit}. Like us, they study the aggregation of points in the standard simplex. Their work focusses on the set of points that maximize utilitarian social welfare, defined in the $\ell_1$ sense, and studies its structure. Further, they discuss a sense in which this set, considered as an irresolute voting rule, is strategyproof  \citep[Sec.~3.2.2]{lindner2011manipulierbarkeit}. \citet[Sec.~3.2.4]{lindner2011manipulierbarkeit} introduces a tie-breaking scheme to turn this rule into a resolute (single-valued) aggregation rule, and gives several interpretations of this scheme. The same tie-breaking scheme also emerges in our treatment as the unique neutral way to break ties compatible with moving phantom mechanisms. \citet[Sec.~3.4]{lindner2011manipulierbarkeit} then proposes several other aggregation rules inspired by different notions of multi-dimensional medians, some of which turn out to be strategyproof. Finally, \citet[Sec.~4]{lindner2011manipulierbarkeit} presents results of a detailed simulation study.

Also closely related is the work of \citet{GKSA16}. The primary focus of their paper is on \emph{knapsack voting}, in which each voter submits her preferred set of projects to fully fund.  However, they also consider the use of \emph{fractional} knapsack voting in a setting in which partial funding of alternatives is permitted and voters have $\ell_1$ preferences, which is exactly our setting. They re-discover the result of \citet{midpoint} that the mechanism that maximizes social welfare is incentive compatible, though they use a different tie-breaking rule (namely, lexicographic). \citeauthor{GKSA16} do not consider other mechanisms for this setting.

More broadly, the truthful aggregation of preferences over numerical values (such as the temperature for an office) has been extensively studied. A famous result of \citet{Moul80} characterizes the set of incentive-compatible voting rules under the assumption that voters have single-peaked preferences over values in $[0,1]$. These voting rules are generalized median schemes. The best-known example is the standard median, in which each voter reports her ideal point in $[0,1]$ and the median report is selected. Other voting rules in this class insert ``phantom voters'' who report a fixed top choice. \citet{Barbera93:Generalized} obtained a multi-dimensional analogue of this result for $[0,1]^m$, and there are further generalizations that characterize truthful rules if other constraints are imposed on the feasible set \citep{Barbera97:Voting}. Crucially, the constraints allowed by \citet{Barbera97:Voting} do not include the normalization constraint 
that is fundamental to our setting. 
Several other papers [\citealp{BJ83,PSS92,BM94}; see \citealp[Section 5]{bordes2011euclidean}, for an overview] introduced multidimensional models in which one can achieve truthfulness by taking a generalized median in each coordinate, but such a strategy does not work with normalization constraints. 
In many of these models, all strategyproof mechanisms decompose into one-dimensional mechanisms [e.g., \citealp[Theorem 1]{BJ83}].
Prior to this work, we are only aware of one multidimensional model in which truthful mechanisms that do not decompose were found: the aggregation of points in $\mathbb R^d$, where $n = 2$ agents report an ideal point and have $\ell_2$ preferences [\citealp{laffond1980revelation}; \citealp[Example 2]{BJ83}; \citealp[Example 39]{kim1984nonmanipulability}], though none of these mechanisms can be Pareto optimal \citep[Theorem 41]{kim1984nonmanipulability}.

In the computer science literature, the above-mentioned generalized median schemes have also been studied in the context of truthful facility location \citep{PT13}. In this context, the aim is to approximate the optimum social welfare subject to incentive compatibility.

One could apply our results to the aggregation of probabilistic beliefs. There is a large literature on \emph{probabilistic opinion pooling} \citep{Genest86,French85,Clemen89,intriligator1973probabilistic} which studies aggregators in this context. The main focus of that literature is to preserve stochastic and epistemic properties. To the best of our knowledge, strategic aspects have not been considered.

Finally, recently proposed rules for crowdsourcing societal tradeoffs
\citep{Conitzer15:Crowdsourcing,conitzer2016rules} can also be used to aggregate budget divisions (with full support) after converting them into pairwise ratios of funding amounts, but this setting has also not been analyzed from a strategic viewpoint.

 \section{Preliminaries}
\label{sec:preliminaries}

For $k \in \mathbb{N}$, we write $[k] = \{ 1, \ldots, k \}$. Let $N = [n]$ be a set of voters and $M = [m]$ be a set of possible alternatives. Voters have structured preferences over budget \emph{divisions} $\mathbf{p} \in [0,1]^m$, with $\sum_{j \in [m]} p_j=1$, where $p_j$ is the fraction of a public resource (such as money or time) allocated to alternative $j$. Each voter $i$ has a most preferred division (or \emph{ideal point}) $\mathbf{p}_i=(p_{i,1},\ldots,p_{i,m})$, with their preference over other divisions induced by $\ell_1$ distance from $\mathbf{p}_i$. 
Specifically, each voter $i$ has a disutility for division $\mathbf{q}$ equal to $d(\mathbf{p}_i,\mathbf{q})$, where $d(\mathbf{x},\mathbf{y}) = \sum_{j=1}^m |x_j-y_j|$
denotes the $\ell_1$ distance between $\mathbf{x}$ and $\mathbf{y}$.
Note that a voter's complete preference over all possible divisions can be deduced from their most preferred division $\mathbf{p}_i$.

\citet[Sec.~3.2.2]{nehringpuppe} study $\ell_1$ preferences in a related setting, and suggest the following model where they arise naturally:
Suppose voter $i$ has an additively separable utility function $u_\mathbf{x}(\mathbf{q}) = \sum_{j=1}^m \min (q_j, x_j)$, where $x_1, \dots, x_m \in [0,1]$ are \emph{funding caps} that sum to 1. According to this specification, for every alternative $j$, voter $i$ derives utility linear in the amount spent on $j$, with no additional utility for spending over and above the cap $x_j$.\footnote{A more expressive class of utilities would allow agents to obtain utility from different alternatives at different rates, up to the cap. \citet{nehringpuppe} argue that this restricted model has the advantage of having low informational requirements, and that the assumption of equal rates can be justified from the principle of insufficient reason.}
Note that the distribution maximizing $u_\mathbf{x}$ is $\mathbf{q} = (x_1, \dots, x_m)$. In fact, when the funding caps are determined by voter $i$'s preference $\mathbf{p}_i$, we can check that $i$ has $\ell_1$ preference with ideal point $\mathbf p_i$: note that $u_{\mathbf{p}_i}(\mathbf{q}) = 1- \sum_{j \in \{ h \in [m] : q_h > p_{i,h} \} } (p_{i,j}-q_j)=1-\frac{1}{2} d(\mathbf{p}_i,\mathbf{q})$, and hence voter $i$ prefers one distribution to another according to $u_{\mathbf p_i}$ if and only if it is $\ell_1$-closer to $\mathbf p_i$.

A \emph{preference profile} $\mathbf{P} = ( \mathbf{\phat}_1, \mathbf{\phat}_2, \ldots, \mathbf{\phat}_n )$ consists of a reported division $\mathbf{\phat}_i$ for each voter $i$. We use $\mathbf{P}_{-i}$ to denote the reports of all voters other than $i$. A \emph{budget aggregation mechanism} $\mathcal{A}$ takes as input a preference profile $\mathbf{P}$, and outputs an aggregate division $\mathcal{A}(\mathbf{P})$. 
A mechanism is \emph{continuous} if it is continuous when considered as a function $\mathcal{A}: (\mathbb{R}^m)^n \to \mathbb{R}^m$.
We say that a mechanism is \emph{anonymous} if its output is fixed under permutations of the voters, and \emph{neutral} if a permutation of the alternatives in voters' inputs permutes the output in the same way.

We are interested in mechanisms that satisfy \emph{incentive compatibility}, in the sense of dominant strategy incentive compatibility (also known as strategyproofness). Voters should not be able to change the aggregate division in their favor by misrepresenting their preference. 

\begin{definition}
	\label{def:ic}
	A budget aggregation mechanism $\mathcal{A}$ satisfies incentive compatibility if, for all preference profiles $\mathbf{P}$, voters $i$,
and divisions $\mathbf{p}_i$ and $\mathbf{\phat}_i$, 
$d(\mathcal{A}(\mathbf{\phat}_i,\mathbf{P}_{-i}),\mathbf{p}_i) \ge d(\mathcal{A}(\mathbf{p}_i,\mathbf{P}_{-i}),\mathbf{p}_i)$.
\end{definition}

We are also interested in the basic efficiency notion of \emph{Pareto optimality}. It should not be possible to change the aggregate so that some voter is strictly better off but no other voter is worse off.

\begin{definition}
	A budget aggregation mechanism $\mathcal{A}$ satisfies Pareto optimality if, for all preference profiles $\mathbf{P}$, and all divisions $\mathbf{q}$, if
$d(\mathcal{A}(\mathbf{P}),\mathbf{\phat}_i) > d(\mathbf{q},\mathbf{\phat}_i)$
for some voter $i$, then there exists a voter $j$ for which
$d(\mathcal{A}(\mathbf{P}),\mathbf{\phat}_j) < d(\mathbf{q},\mathbf{\phat}_j)$.
\label{def:po}
\end{definition}

Observe that the definitions of incentive compatibility and Pareto optimality depend only on the voters' preference relations, not the exact utility model. Results pertaining to these properties therefore hold for any utility function that induces the same ordinal preferences as $\ell_1$ utilities.

We also consider a fairness property that we call proportionality:
Suppose each voter is single-minded, in that their ideal point is a division in which the entire budget goes to a single alternative. Then it is natural to split the resource in proportion to the number of voters supporting each alternative. For example, if 6 voters report $(1,0,0)$, 3 voters report $(0,1,0)$, and 1 voter reports $(0,0,1)$, then the aggregate should be $(0.6,0.3,0.1)$. We call this property \emph{proportionality}.

\begin{definition}
	\label{def:proportional}
	A voter is \emph{single-minded} if their preferred division is a unit vector.
	A budget aggregation mechanism $\mathcal{A}$ is \emph{proportional} if, for every preference profile $\mathbf{P}$ consisting of only single-minded voters, and every alternative $j$, we have $\mathcal{A}(\mathbf{P})_j = \sum_{j \in [m]} \phat_{i,j}/n$
\end{definition}

We note that proportionality is a fairly weak axiom, only applying to a small subset of possible profiles. However, as we will see later, it is already strong enough to be incompatible with Pareto optimality within the class of moving phantom mechanisms that we introduce in this paper.

Finally, we introduce two properties from social choice theory that make sense for mechanisms that are defined for instances with a variable number of voters.
One of these is participation, which says that voters should always prefer (truthfully) participating to not contributing a report.

\begin{definition}
	A budget aggregation mechanism $\mathcal{A}$ satisfies \emph{participation} if, for all $\mathbf{p}_i$ and all preference profiles $\mathbf{P}_{-i}$, $d(\mathcal{A}(\mathbf{p}_i,\mathbf{P}_{-i}),\mathbf{p}_i) \le d(\mathcal{A}(\mathbf{P}_{-i}),\mathbf{p}_i)$.
\end{definition}

The second property is reinforcement, which states that if we have two preference profiles which yield the same aggregate division, then the profile obtained by combining these profiles should yield that aggregate division as well. For $\mathbf{P} = ( \mathbf{\phat}_1, \mathbf{\phat}_2, \ldots, \mathbf{\phat}_{n_P} )$ and $\mathbf{Q} = ( \mathbf{\qhat}_1, \mathbf{\qhat}_2, \ldots, \mathbf{\qhat}_{n_Q} )$, let $\mathbf{P} \cup \mathbf{Q} = ( \mathbf{\phat}_1, \ldots, \mathbf{\phat}_{n_P}, \mathbf{\qhat}_1, \ldots, \mathbf{\qhat}_{n_Q} )$.

\begin{definition}
	A budget aggregation mechanism $\mathcal{A}$ satisfies reinforcement if, for all preference profiles $\mathbf{P}$ and $\mathbf{Q}$ with $\mathcal{A}(\mathbf{P})=\mathcal{A}(\mathbf{Q})$, it holds that $\mathcal{A}(\mathbf{P}\cup \mathbf{Q}) = \mathcal{A}(\mathbf{P})=\mathcal{A}(\mathbf{Q})$.
\end{definition} \section{Two alternatives}
\label{sec:binary}

To build intuition, we begin by considering the case in which $m=2$. 
Due to the normalization of inputs and of the output, and with $\ell_1$ preferences, the problem is one-dimensional: a budget division is completely determined by the fraction spent on the first of the two alternatives.
This allows us to directly import Moulin's \citeyearpar{Moul80} famous characterization of \emph{generalized median} rules as the only incentive compatible mechanisms for voters with single-peaked preferences over a single-dimensional quantity.\footnote{Our preference model using $\ell_1$ imposes slightly more structure than just single-peakedness, namely that voters are indifferent between points that are equidistant to their peak. However, this restriction does not enlarge the class of incentive compatible mechanisms, at least if we impose continuity \citep{MB11}.}
\begin{theorem}[\citealp{Moul80}]
	\label{thm:binary}
	For $m=2$, an anonymous and continuous budget aggregation mechanism $\mathcal{A}$ is incentive compatible if and only if there are $\alpha_0 \ge \alpha_1 \ge \ldots \ge \alpha_{n}$ in $[0,1]$ such that, for all profiles $\mathbf P$, \begin{align*}
		\mathcal{A}(\mathbf P)_1 &= \med(p_{1,1}, p_{2,1}, \ldots, p_{n,1}, \alpha_0, \alpha_1, \ldots, \alpha_{n}), \\
		\mathcal{A}(\mathbf P)_2 &= \med(p_{1,2}, p_{2,2}, \ldots, p_{n,2}, 1-\alpha_0, 1-\alpha_1, \ldots, 1-\alpha_{n}).
	\end{align*}
\end{theorem}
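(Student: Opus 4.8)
The plan is to reduce the two-alternative case to Moulin's one-dimensional characterization and then translate the resulting phantoms into the stated form. First I would note that, since both inputs and outputs are divisions, the mechanism is determined by its first coordinate: set $f(p_{1,1},\dots,p_{n,1}) := \mathcal{A}(\mathbf P)_1 \in [0,1]$, and observe that if the output is to be a valid division then $\mathcal{A}(\mathbf P)_2 = 1-\mathcal{A}(\mathbf P)_1$ is forced. Moreover $d((p,1-p),(q,1-q)) = 2\,|p-q|$, so each voter's induced preference over the scalar output $f$ is single-peaked on $[0,1]$ with peak $p_{i,1}$. It is in fact \emph{symmetric} single-peaked, a strict subdomain of single-peakedness, so a priori it could admit more incentive compatible rules than Moulin's domain; the footnote's appeal to \citep{MB11} tells us that, under continuity, it does not. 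Hence it suffices to characterize the continuous, anonymous, incentive compatible scalar rules $f\colon [0,1]^n\to[0,1]$ on the full single-peaked domain.

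For the ``if'' direction, suppose $f$ is the generalized median $\med(p_{1,1},\dots,p_{n,1},\alpha_0,\dots,\alpha_n)$. Anonymity and continuity of $\med$ are immediate, and no continuity hypothesis is even needed here. For incentive compatibility I would fix the reports of all voters other than $i$ together with the phantoms, sort the resulting $2n$ numbers, and observe that as voter $i$'s report varies over $[0,1]$ the value of $f$ sweeps out exactly the closed interval between the $n$th and $(n+1)$th of these sorted numbers; reporting one's true peak $p_{i,1}$ then returns the point of that interval nearest to $p_{i,1}$, which is optimal for a single-peaked voter. Finally, the output is a valid division: each reported input satisfies $p_{i,2}=1-p_{i,1}$ and the second-coordinate phantoms are $1-\alpha_k$, so applying the order-reversing bijection $x\mapsto 1-x$, which commutes with $\med$, gives $\mathcal{A}(\mathbf P)_2 = 1-\mathcal{A}(\mathbf P)_1$.

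For the ``only if'' direction I would invoke Moulin's theorem \citep{Moul80}: every continuous, anonymous, incentive compatible scalar rule on the single-peaked domain equals $\med(x_1,\dots,x_n,\beta_0,\dots,\beta_n)$ for some constants $\beta_0,\dots,\beta_n\in[0,1]$. In outline, incentive compatibility forces $f$ to be nondecreasing in each argument and ``uncompromising'' --- moving a report that lies strictly below (resp.\ above) the current outcome still further away does not change the outcome --- and iterating uncompromisingness drives any profile to one in which every report is $0$ or $1$, where anonymity pins the value down; continuity handles the knife-edge profiles in which a report coincides with the outcome, and is also what rules out the discontinuous strategy-proof rules that fail to be generalized medians. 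Relabeling the $\beta_k$ in decreasing order as $\alpha_0\ge\alpha_1\ge\cdots\ge\alpha_n$ --- harmless since $\med$ ignores order --- yields the formula for $\mathcal{A}(\mathbf P)_1$, and the formula for $\mathcal{A}(\mathbf P)_2$ follows from it by the same $x\mapsto 1-x$ identity, using $p_{i,2}=1-p_{i,1}$.

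The main obstacle is the usual one here: essentially all of the content lives in Moulin's characterization. If we want a self-contained proof rather than a citation, the delicate steps are deriving uncompromisingness from incentive compatibility and then carrying out the reduction to $\{0,1\}$-profiles without getting stuck on profiles where some voter's report equals the current outcome --- exactly the point at which continuity is indispensable. If instead we cite \citep{Moul80}, the one thing that genuinely needs checking is that our preference domain is covered: since symmetric single-peakedness is a strict subdomain, we must appeal to \citep{MB11} to know that restricting to it (with continuity) introduces no new incentive compatible mechanisms.
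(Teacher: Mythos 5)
Your proposal is correct and follows exactly the route the paper takes: the paper offers no proof of its own but cites \citet{Moul80} for the one-dimensional characterization and, in a footnote, \citet{MB11} for the fact that restricting to the symmetric single-peaked preferences induced by $\ell_1$ (under continuity) does not enlarge the class of incentive compatible rules. Your additional details---the reduction via $d((p,1-p),(q,1-q))=2|p-q|$, the sweep of the median over the interval between the $n$th and $(n+1)$th fixed entries for the ``if'' direction, and the $x\mapsto 1-x$ identity for the second coordinate---are all accurate.
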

The numbers $\alpha_k$ are known as \emph{phantoms}. Each mechanism described by Theorem~\ref{thm:binary} can be understood as taking the coordinate-wise median of the reported distributions, after inserting $n+1$ phantom voters (whose report is fixed and independent of the input profile). 

One can check that $\alpha_0, \dots, \alpha_{n}$ define a neutral mechanism if and only if the phantom placements are symmetric, that is if and only if
$ (\alpha_0, \dots, \alpha_{n} ) = (1-\alpha_n, \dots, 1-\alpha_{0} ). $
Note that there are $n+1$ phantoms but only $n$ voters, so that the phantoms can outweigh the voters. For example, when $\alpha_k = 1/2$ for all $k \in \{0,\dots,n\}$ then the mechanism is just the constant mechanism returning $(1/2,1/2)$.
However, if we take $\alpha_0 = 1$ and $\alpha_n = 0$, then these two phantoms ``cancel out'' and there are only $n-1$ phantoms left.
In fact, one can check that the mechanism is Pareto optimal if and only if  $\alpha_0=1$ and $\alpha_{n}=0$~\citep{Moul80}.

A particularly interesting example is the \emph{uniform phantom mechanism}, obtained when placing the phantoms uniformly over the interval $[0,1]$, so that $\alpha_k = 1-k/n$ for each $k \in \{0,\dots,n\}$. This placement of phantom voters appears in a paper by \citet{CPS16b}. They were aiming for mechanisms whose outcome is close to the mean, and they prove that the uniform phantom mechanism yields an aggregate that is closer to the mean than that obtained from any other phantom placements, in the worst case over all profiles.
The uniform phantom mechanism has other attractive properties, including being proportional in the sense of Definition~\ref{def:proportional}.

\begin{proposition} 
	\label{prop:uniform-proportional}
	For $m=2$, the uniform phantom mechanism is the unique (anonymous and continuous) budget aggregation mechanism $\mathcal A$ that is both incentive compatible and proportional.
\end{proposition}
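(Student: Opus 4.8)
The plan is to invoke Theorem~\ref{thm:binary} to turn the statement into a question purely about phantom placements. That theorem says that for $m=2$ every anonymous, continuous, incentive compatible mechanism has the form $\mathcal{A}(\mathbf{P})_1 = \med(\phat_{1,1},\dots,\phat_{n,1},\alpha_0,\dots,\alpha_n)$ for some fixed reals $\alpha_0 \ge \alpha_1 \ge \dots \ge \alpha_n$ in $[0,1]$ (with the second coordinate forced by normalization). Hence it suffices to show that, within this family, the placement $\alpha_k = 1-k/n$ is the unique one yielding a proportional mechanism. The uniform phantom mechanism is automatically anonymous, continuous, and incentive compatible (being a member of the family), so the real content splits into: (i) it is proportional; and (ii) no other phantom placement is.

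Both directions run off a single computation on single-minded profiles. Fix $n_1 \in \{0,1,\dots,n\}$ and let $\mathbf{P}$ be the profile in which $n_1$ voters report $(1,0)$ and the remaining $n_2 = n - n_1$ report $(0,1)$. Then $\mathcal{A}(\mathbf{P})_1$ is the median of the multiset $S$ consisting of $n_1$ copies of $1$, $n_2$ copies of $0$, and the $n+1$ phantoms $\alpha_0,\dots,\alpha_n$, and proportionality demands $\mathcal{A}(\mathbf{P})_1 = n_1/n$. Since $|S| = 2n+1$, its median is the $(n{+}1)$-th order statistic, which equals a value $v$ exactly when at most $n$ elements of $S$ are strictly below $v$ and at most $n$ are strictly above $v$. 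Taking $v = n_1/n$ and peeling off the voter reports (for $1 \le n_1 \le n-1$, all $n_2$ zeros lie strictly below $v$ and all $n_1$ ones strictly above $v$; the endpoint profiles $n_1 \in \{0,n\}$ degenerate in the obvious way), this condition becomes
\[
\#\{k : \alpha_k < n_1/n\} \le n_1 \qquad\text{and}\qquad \#\{k : \alpha_k > n_1/n\} \le n - n_1 .
\]

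To get (ii): fix $k \in \{0,\dots,n\}$ and apply the two inequalities with $n_1 = n-k$. Because the $\alpha_j$ are nonincreasing, $\{j : \alpha_j > n_1/n\}$ is an initial segment of $\{0,\dots,n\}$ and $\{j : \alpha_j < n_1/n\}$ is a final segment; the two cardinality bounds say precisely that index $k$ lies in neither, i.e.\ $\alpha_k \le n_1/n$ and $\alpha_k \ge n_1/n$, so $\alpha_k = (n-k)/n = 1-k/n$. (At $k = 0$ and $k = n$ only one of the two bounds is nontrivial, but together with $\alpha_k \in [0,1]$ it still forces $\alpha_0 = 1$ and $\alpha_n = 0$.) Conversely, for (i), substituting $\alpha_j = (n-j)/n$ one checks directly that $\#\{j : \alpha_j < n_1/n\} = \#\{j : j > n - n_1\} = n_1$ and $\#\{j : \alpha_j > n_1/n\} = \#\{j : j < n - n_1\} = n - n_1$, so both inequalities hold (with equality) for every $n_1$, and the uniform phantom mechanism is proportional. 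Combining (i) and (ii) gives the claimed uniqueness.

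I do not anticipate a genuine obstacle: the only delicate point is the bookkeeping for the median of an odd-size multiset $S$ of $2n+1$ elements — getting the strict-versus-weak inequalities right and handling the boundary profiles $n_1 = 0$ and $n_1 = n$, where some voter reports coincide with the target value $n_1/n$ — and then correctly translating the resulting cardinality conditions on the sorted phantoms into the pointwise identities $\alpha_{n-n_1} = n_1/n$.
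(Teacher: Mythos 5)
Your proof is correct and follows essentially the same route as the paper's: reduce to phantom placements via Theorem~\ref{thm:binary}, then evaluate the mechanism on the single-minded profiles with $n_1$ voters at $(1,0)$ to pin down each $\alpha_k$. The paper simply observes directly that $\alpha_k$ is the median of that $(2n+1)$-element multiset, whereas you reach the same identity $\alpha_k=1-k/n$ through the order-statistic counting conditions; this is just a more explicit version of the same computation.
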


\begin{proof}
  Theorem~\ref{thm:binary} gives us that $\mathcal{A}$ is incentive compatible if and only if it can be written in terms of phantom medians. We therefore need only to consider the additional requirement of proportionality.
	The uniform phantom mechanism is proportional, because if $\mathbf P$ consists of $n-k$ voters reporting $(1,0)$ and $k$ voters reporting $(0,1)$, then $\mathbf{A}(\mathbf P)_1 = \alpha_k = (n-k)/n$, as required.
	
	For uniqueness, suppose $\alpha_0, \dots, \alpha_n$ are phantom positions that induce a proportional mechanism. 
	Let $k\in \{ 0, \ldots, n \}$. We show that $\alpha_k = 1-k/n$. 
	Let $\mathbf{P}$ be a profile consisting of only single-minded voters with $n_1 = n-k$ voters reporting $\mathbf{\phat}_i=(1,0)$. Then $\alpha_k$ is the median, and proportionality requires that $\alpha_k = n_1/n = (n-k)/n = 1 - k/n$.
\end{proof}

Another natural way to place the phantoms is one that takes the coordinate-wise median.
When $n+1$ is even, this is achieved by placing half the phantoms at 0 and the other half at 1, outputting precisely the median of the reported values on each coordinate. When $n+1$ is odd, we place $n/2$ phantoms at 0, $n/2$ phantoms at 1, and we place a single phantom at $1/2$ to preserve neutrality. This mechanism outputs the point between the left and right medians that is closest to $1/2$. 
The resulting mechanism returns an aggregate $\mathbf p$ that minimizes the sum of distances between the reports $(\mathbf{p}_1, \mathbf{p}_2, \ldots, \mathbf{p}_n)$ and $\mathbf p$.
We will generalize this mechanism to larger $m$ in Section~\ref{sec:utilitarian}. 

\section{A Class of Incentive Compatible Mechanisms for Higher Dimensions}
\label{sec:higher-dimension}

For $m = 2$, we have a complete picture of incentive-compatible mechanisms, thanks to Moulin's characterization. For $m \ge 3$, it is less clear how to construct examples of incentive-compatible mechanisms. One could try to take a generalized median in each alternative independently, but the result of such a mechanism would not respect the normalization constraint.

\begin{example}
	\label{ex:uniform-not-normalized}
	Let us consider a simple numerical example. Let $n=m=3$, and suppose voter reports are $\mathbf{p}_1=(0, 0.5, 0.5), \mathbf{p}_2=(\frac{1}{3}, \frac{2}{3}, 0)$, and $\mathbf{p}_3=(0.9,0,0.1)$. Using the uniform phantom mechanism to aggregate the votes on each alternative independently yields $(\frac{1}{3}, 0.5, \frac{1}{3})$. Note that these sum to more than 1 (indeed, as we will see later, this is true in general for the uniform phantom mechanism on more than two alternatives). 
	
	A simple modification is to normalize the output of the uniform phantom mechanism by the appropriate multiplicative constant. In this example, we need to divide by $\frac{7}{6}$, which yields $\mathbf{q} = (\frac{2}{7}, \frac{3}{7}, \frac{2}{7})$. Unfortunately, this mechanism is not incentive compatible. To see this, suppose that the second voter instead reports $\mathbf{\phat}_2 = (\frac{3}{8}, \frac{5}{8}, 0)$. The normalized uniform phantom mechanism now selects the distribution $\mathbf{q}' = (\frac{9}{29}, \frac{12}{29}, \frac{8}{29})$. But $d(\mathbf{q}, \mathbf{p}_2) = \frac{4}{7}$ and $d(\mathbf{q}', \mathbf{p}_2) = \frac{16}{29} < \frac{4}{7}$, so voter 2 obtained a preferred outcome by manipulating.
\end{example}

However, there is a way of extending the idea of generalized medians to the higher-dimensional setting. The basic idea is that if a coordinate-wise generalized median violates the normalization constraint, then we can adjust the placement of the phantoms, increasing or decreasing the sum of the generalized medians as needed. Such a procedure might, in principle, give voters incentives to manipulate in order to affect the phantom placements. However, our class of \emph{moving phantom mechanisms} manages to avoid this problem.

\begin{definition}
	\label{def:moving-phantom}
	Let $\mathcal{F} = \{ f_k : k \in \{0, \ldots, n \} \}$ be a family of functions, which we call a  \emph{phantom system}, where $f_k : [0, 1] \to [0,1]$ is a continuous, weakly increasing function with $f_k(0)=0$ and $f_k(1)=1$ for each $k$, and we have $f_0(t) \ge f_1(t) \ge \cdots \ge f_n(t)$ for all $t\in[0,1]$.
	Then, the \emph{moving phantom mechanism} $\mathcal{A}^\mathcal{F}$ is defined so that for all profiles $\mathbf P$ and all $j \in [m]$,
	\begin{equation}
	\label{eq:Af-def}
	\mathcal{A}^\mathcal{F}(\mathbf{P})_j = \med(f_0(t^*), \ldots, f_{n}(t^*), \phat_{1,j}, \ldots, \phat_{n,j}),
	\end{equation}
	where $t^*$ is chosen so that $t^* \in \{ t : \sum_{j \in [m]} \med(f_0(t), \ldots, f_{n}(t),\phat_{1,j}, \ldots, \phat_{n,j})=1 \}$. 
	For brevity, we write $\mathcal{F}(t) = (f_0(t), \ldots, f_{n}(t))$ and abbreviate the median in \eqref{eq:Af-def} to $\med(\mathcal{F}(t),\mathbf{P}_{i \in [n],j})$.
\end{definition}

Let us examine the definition. Each $f_k$ represents a phantom, and the phantom system $\mathcal{F}$ represents a ``movie'' in which all phantoms continuously increase from 0 to 1, with the function argument $t$ defining an instantaneous snapshot of the phantom positions. The moving phantom mechanism $\mathcal{A}^\mathcal{F}$ defined by $\mathcal{F}$ identifies a particular snapshot in time, $t^*$, for which the sum of generalized medians over all coordinates is exactly 1. One can check that at least one such $t^*$ exists, and that the output of the mechanism is independent of which of these $t^*$ is chosen.

\begin{proposition}
	\label{prop:existence}
	The moving phantom mechanism $\mathcal{A}^\mathcal{F}$ is well-defined for every phantom system $\mathcal{F}$ satisfying the conditions of Definition~\ref{def:moving-phantom}.
\end{proposition}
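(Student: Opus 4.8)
The plan is to show two things: (i) that the set $T(\mathbf{P}) = \{ t \in [0,1] : \sum_{j \in [m]} \med(\mathcal{F}(t), \mathbf{P}_{i \in [n], j}) = 1 \}$ is nonempty, so that a valid choice of $t^*$ exists; and (ii) that the output $\mathcal{A}^\mathcal{F}(\mathbf{P})_j$ does not depend on which $t^* \in T(\mathbf{P})$ is chosen, so the mechanism is a well-defined function of $\mathbf{P}$ alone. For (i), define $g(t) = \sum_{j \in [m]} \med(f_0(t), \ldots, f_n(t), \phat_{1,j}, \ldots, \phat_{n,j})$. Each $f_k$ is continuous, and the median of a fixed number of continuous functions is continuous, so $g$ is continuous on $[0,1]$. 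At $t = 0$ all phantoms are at $0$, so $\med(0, \ldots, 0, \phat_{1,j}, \ldots, \phat_{n,j}) = 0$ for every $j$ (there are $n+1$ phantoms at $0$ and only $n$ reports, so $0$ is a median), giving $g(0) = 0 \le 1$. At $t = 1$ all phantoms are at $1$, so symmetrically $g(1) = m \ge 1$ (using $m \ge 1$; in the nontrivial case $m \ge 2$ this is strict). By the intermediate value theorem there exists $t^* \in [0,1]$ with $g(t^*) = 1$, so $T(\mathbf{P}) \neq \emptyset$.

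For (ii), I would argue that $g$ is \emph{monotone} in the regions that matter, or more precisely that on the level set $T(\mathbf{P})$ the coordinate-wise medians are constant. The key observation is that each $f_k$ is weakly increasing, hence for each fixed coordinate $j$ the function $t \mapsto \med(f_0(t), \ldots, f_n(t), \phat_{1,j}, \ldots, \phat_{n,j})$ is weakly increasing in $t$ (the median of a tuple is a monotone function of each entry, and here the phantom entries move up together while the report entries stay fixed). Consequently $g$ is weakly increasing on all of $[0,1]$. Therefore $T(\mathbf{P})$ is a closed interval $[t_-, t_+]$, and on this interval the sum $g$ is constantly $1$ while each summand is weakly increasing; a sum of weakly increasing functions that is constant forces each summand to be constant on that interval. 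Hence for every $j$, $\med(\mathcal{F}(t), \mathbf{P}_{i \in [n], j})$ takes the same value for all $t \in T(\mathbf{P})$, so $\mathcal{A}^\mathcal{F}(\mathbf{P})_j$ is unambiguously defined. (As a bonus this shows the output is automatically normalized, $\sum_j \mathcal{A}^\mathcal{F}(\mathbf{P})_j = g(t^*) = 1$, and lies in $[0,1]^m$ since each median of points in $[0,1]$ lies in $[0,1]$.)

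The main obstacle is the step in (ii) that a constant sum of weakly increasing functions forces each to be constant: this is elementary (if $g(t_1) = g(t_2)$ with $t_1 < t_2$ and each summand $g_j$ satisfies $g_j(t_1) \le g_j(t_2)$, then $\sum_j (g_j(t_2) - g_j(t_1)) = 0$ with all terms nonnegative, so each term vanishes), but it is the crux of why the mechanism is well-defined rather than merely existent. A secondary point requiring a line of care is the claim that $\med$ of continuous functions is continuous and that $\med$ is monotone in each argument — both follow from the fact that the $k$-th order statistic of a fixed-length tuple is a continuous, coordinatewise-nondecreasing function of the tuple, which I would either cite as standard or verify in one sentence.
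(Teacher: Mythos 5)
Your proof is correct and follows essentially the same route as the paper's: continuity plus the endpoint values $g(0)=0$ and $g(1)=m$ give existence via the intermediate value theorem, and weak monotonicity of the coordinate-wise medians gives independence from the choice of $t^*$. You spell out the uniqueness step (a constant sum of weakly increasing summands forces each summand to be constant on the level set) more explicitly than the paper does, but the underlying argument is the same.
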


\begin{proof}
	First note that the function $t \mapsto \sum_{j \in [m]} \med(\mathcal{F}(t),\mathbf{P}_{i \in [n],j})$ is continuous and increasing in $t$, because $f_k$ is continuous and increasing, and these properties are preserved under taking the median and sum.
	This implies that, provided the set $\{ t : \sum_{j \in [m]} \med(\mathcal{F}(t),\mathbf{P}_{i \in [n],j})=1 \}$ is non-empty, the aggregate $\mathcal{A}^\mathcal{F}(\mathbf{P})$ does not depend on the choice of $t^*$.

	When $t=0$, $\sum_{j \in [m]} \med(\mathcal{F}(t),\mathbf{P}_{i \in [n],j})= 0$, since all $n+1$ phantom entries are 0.
	When $t=1$, $\sum_{j \in [m]} \med(\mathcal{F}(t),\mathbf{P}_{i \in [n],j})= m > 1$, since all $n+1$ phantom entries are 1.
	By the Intermediate Value Theorem, using continuity, there exists $t \in [0,1]$ with $\sum_{j \in [m]} \med(\mathcal{F}(t),\mathbf{P}_{i \in [n],j})=1$.
\end{proof}

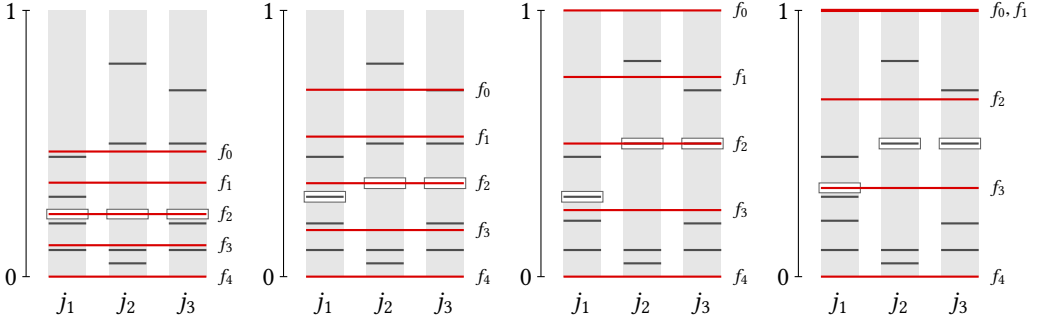
\begin{figure*}[!t]
	\centering
	\begin{subfigure}[b]{0.24\textwidth}
		\begin{tikzpicture}
		[yscale=3.5, vote/.style={black!70,thick}]
\draw (0,0) -- (0,1); \draw (-0.05,1) -- (0.05,1); \draw (-0.05,0) -- (0.05,0); \node (y0) at (-0.2,0) {0};
		\node (y1) at (-0.2,1) {1};
		
\begin{scope}[xshift=-0.2cm]
		\fill[fill=black!10] (0.5,0) rectangle (1,1);
\draw[black!60, thin, fill=white] (0.475,0.217) rectangle (1.025,0.253);
		\node (x1) at (0.75,-0.1) {$j_1$};
		\draw[vote] (0.5,0.3) -- (1,0.3);
		\draw[vote] (0.5,0.45) -- (1,0.45);
		\draw[vote] (0.5,0.2) -- (1,0.2);
		\draw[vote] (0.5,0.1) -- (1,0.1);
\end{scope}
		
		\begin{scope}[xshift=0.6cm]
\fill[fill=black!10] (0.5,0) rectangle (1,1);
\draw[black!60, thin, fill=white] (0.475,0.217) rectangle (1.025,0.253);
		\node (x1) at (0.75,-0.1) {$j_2$};
		\draw[vote] (0.5,0.5) -- (1,0.5);
		\draw[vote] (0.5,0.05) -- (1,0.05);
		\draw[vote] (0.5,0.1) -- (1,0.1);
		\draw[vote] (0.5,0.8) -- (1,0.8);
\end{scope}
		
		\begin{scope}[xshift=1.4cm]
\fill[fill=black!10] (0.5,0) rectangle (1,1);
\draw[black!60, thin, fill=white] (0.475,0.217) rectangle (1.025,0.253);
		\node (x1) at (0.75,-0.1) {$j_3$};
		\draw[vote] (0.5,0.2) -- (1,0.2);
		\draw[vote] (0.5,0.5) -- (1,0.5);
		\draw[vote] (0.5,0.7) -- (1,0.7);
		\draw[vote] (0.5,0.1) -- (1,0.1);
\end{scope}

\draw[red!85!black,thick] (0.3,0.0) -- (2.4,0.0);
		\draw[red!85!black,thick] (0.3,0.118) -- (2.4,0.118);
		\draw[red!85!black,thick] (0.3,0.235) -- (2.4,0.235);
		\draw[red!85!black,thick] (0.3,0.353) -- (2.4,0.353);
		\draw[red!85!black,thick] (0.3,0.470) -- (2.4,0.470);
		
		\node[text width=1cm, font=\tiny, anchor=west] at (2.45,0.470) {$f_0$};		
		\node[text width=1cm, font=\tiny, anchor=west] at (2.45,0.353) {$f_1$};		
		\node[text width=1cm, font=\tiny, anchor=west] at (2.45,0.235) {$f_2$};		
		\node[text width=1cm, font=\tiny, anchor=west] at (2.45,0.118) {$f_3$};		
		\node[text width=1cm, font=\tiny, anchor=west] at (2.45,0.0) {$f_4$};		
\end{tikzpicture}
\end{subfigure}
	\begin{subfigure}[b]{0.24\textwidth}
		\begin{tikzpicture}
		[yscale=3.5, vote/.style={black!70,thick}]
\draw (0,0) -- (0,1); \draw (-0.05,1) -- (0.05,1); \draw (-0.05,0) -- (0.05,0); \node (y0) at (-0.2,0) {0};
		\node (y1) at (-0.2,1) {1};
		
\begin{scope}[xshift=-0.2cm]
		\fill[fill=black!10] (0.5,0) rectangle (1,1);
		\draw[black!60, thin, fill=white] (0.475,0.28) rectangle (1.025,0.32);
		\node (x1) at (0.75,-0.1) {$j_1$};
		\draw[vote] (0.5,0.3) -- (1,0.3);
		\draw[vote] (0.5,0.45) -- (1,0.45);
		\draw[vote] (0.5,0.2) -- (1,0.2);
		\draw[vote] (0.5,0.1) -- (1,0.1);
\end{scope}
		
		\begin{scope}[xshift=0.6cm]
\fill[fill=black!10] (0.5,0) rectangle (1,1);
		\draw[black!60, thin, fill=white] (0.475,0.331) rectangle (1.025,0.371);
		\node (x1) at (0.75,-0.1) {$j_2$};
		\draw[vote] (0.5,0.5) -- (1,0.5);
		\draw[vote] (0.5,0.05) -- (1,0.05);
		\draw[vote] (0.5,0.1) -- (1,0.1);
		\draw[vote] (0.5,0.8) -- (1,0.8);
\end{scope}
		
		\begin{scope}[xshift=1.4cm]
\fill[fill=black!10] (0.5,0) rectangle (1,1);
		\draw[black!60, thin, fill=white] (0.475,0.331) rectangle (1.025,0.371);
		\node (x1) at (0.75,-0.1) {$j_3$};
		\draw[vote] (0.5,0.2) -- (1,0.2);
		\draw[vote] (0.5,0.5) -- (1,0.5);
		\draw[vote] (0.5,0.7) -- (1,0.7);
		\draw[vote] (0.5,0.1) -- (1,0.1);
\end{scope}
		
\draw[red!85!black,thick] (0.3,0.0) -- (2.4,0.0);
		\draw[red!85!black,thick] (0.3,0.175) -- (2.4,0.175);
		\draw[red!85!black,thick] (0.3,0.351) -- (2.4,0.351);
		\draw[red!85!black,thick] (0.3,0.526) -- (2.4,0.526);
		\draw[red!85!black,thick] (0.3,0.702) -- (2.4,0.702);
		
		\node[text width=1cm, font=\tiny, anchor=west] at (2.45,0.702) {$f_0$};		
		\node[text width=1cm, font=\tiny, anchor=west] at (2.45,0.526) {$f_1$};		
		\node[text width=1cm, font=\tiny, anchor=west] at (2.45,0.351) {$f_2$};		
		\node[text width=1cm, font=\tiny, anchor=west] at (2.45,0.175) {$f_3$};		
		\node[text width=1cm, font=\tiny, anchor=west] at (2.45,0.0) {$f_4$};		
\end{tikzpicture}
\end{subfigure}
	\begin{subfigure}[b]{0.24\textwidth}
		\begin{tikzpicture}
		[yscale=3.5, vote/.style={black!70,thick}]
\draw (0,0) -- (0,1); \draw (-0.05,1) -- (0.05,1); \draw (-0.05,0) -- (0.05,0); \node (y0) at (-0.2,0) {0};
		\node (y1) at (-0.2,1) {1};
		
\begin{scope}[xshift=-0.2cm]
		\fill[fill=black!10] (0.5,0) rectangle (1,1);
		\draw[black!60, thin, fill=white] (0.475,0.28) rectangle (1.025,0.32);
		\node (x1) at (0.75,-0.1) {$j_1$};
		\draw[vote] (0.5,0.3) -- (1,0.3);
		\draw[vote] (0.5,0.45) -- (1,0.45);
		\draw[vote] (0.5,0.21) -- (1,0.21);
		\draw[vote] (0.5,0.1) -- (1,0.1);
\end{scope}
		
		\begin{scope}[xshift=0.6cm]
\fill[fill=black!10] (0.5,0) rectangle (1,1);
		\draw[black!60, thin, fill=white] (0.475,0.48) rectangle (1.025,0.52);
		\node (x1) at (0.75,-0.1) {$j_2$};
		\draw[vote] (0.5,0.5) -- (1,0.5);
		\draw[vote] (0.5,0.05) -- (1,0.05);
		\draw[vote] (0.5,0.1) -- (1,0.1);
		\draw[vote] (0.5,0.81) -- (1,0.81);
\end{scope}
		
		\begin{scope}[xshift=1.4cm]
\fill[fill=black!10] (0.5,0) rectangle (1,1);
		\draw[black!60, thin, fill=white] (0.475,0.48) rectangle (1.025,0.52);
		\node (x1) at (0.75,-0.1) {$j_3$};
		\draw[vote] (0.5,0.2) -- (1,0.2);
		\draw[vote] (0.5,0.5) -- (1,0.5);
		\draw[vote] (0.5,0.7) -- (1,0.7);
		\draw[vote] (0.5,0.1) -- (1,0.1);
\end{scope}
		
\draw[red!85!black,thick] (0.3,0.0) -- (2.4,0.0);
		\draw[red!85!black,thick] (0.3,0.25) -- (2.4,0.25);
		\draw[red!85!black,thick] (0.3,0.5) -- (2.4,0.5);
		\draw[red!85!black,thick] (0.3,0.75) -- (2.4,0.75);
		\draw[red!85!black,thick] (0.3,1.0) -- (2.4,1.0);
		
		\node[text width=1cm, font=\tiny, anchor=west] at (2.45,1.0) {$f_0$};		
		\node[text width=1cm, font=\tiny, anchor=west] at (2.45,0.75) {$f_1$};		
		\node[text width=1cm, font=\tiny, anchor=west] at (2.45,0.5) {$f_2$};		
		\node[text width=1cm, font=\tiny, anchor=west] at (2.45,0.25) {$f_3$};		
		\node[text width=1cm, font=\tiny, anchor=west] at (2.45,0.0) {$f_4$};		
\end{tikzpicture}
\end{subfigure}
	\begin{subfigure}[b]{0.24\textwidth}
		\begin{tikzpicture}
		[yscale=3.5, vote/.style={black!70,thick}]
\draw (0,0) -- (0,1); \draw (-0.05,1) -- (0.05,1); \draw (-0.05,0) -- (0.05,0); \node (y0) at (-0.2,0) {0};
		\node (y1) at (-0.2,1) {1};
		
\begin{scope}[xshift=-0.2cm]
		\fill[fill=black!10] (0.5,0) rectangle (1,1);
		\draw[black!60, thin, fill=white] (0.475,0.315) rectangle (1.025,0.352);
		\node (x1) at (0.75,-0.1) {$j_1$};
		\draw[vote] (0.5,0.3) -- (1,0.3);
		\draw[vote] (0.5,0.45) -- (1,0.45);
		\draw[vote] (0.5,0.21) -- (1,0.21);
		\draw[vote] (0.5,0.1) -- (1,0.1);
\end{scope}
		
		\begin{scope}[xshift=0.6cm]
\fill[fill=black!10] (0.5,0) rectangle (1,1);
		\draw[black!60, thin, fill=white] (0.475,0.48) rectangle (1.025,0.52);
		\node (x1) at (0.75,-0.1) {$j_2$};
		\draw[vote] (0.5,0.5) -- (1,0.5);
		\draw[vote] (0.5,0.05) -- (1,0.05);
		\draw[vote] (0.5,0.1) -- (1,0.1);
		\draw[vote] (0.5,0.81) -- (1,0.81);
\end{scope}
		
		\begin{scope}[xshift=1.4cm]
\fill[fill=black!10] (0.5,0) rectangle (1,1);
		\draw[black!60, thin, fill=white] (0.475,0.48) rectangle (1.025,0.52);
		\node (x1) at (0.75,-0.1) {$j_3$};
		\draw[vote] (0.5,0.2) -- (1,0.2);
		\draw[vote] (0.5,0.5) -- (1,0.5);
		\draw[vote] (0.5,0.7) -- (1,0.7);
		\draw[vote] (0.5,0.1) -- (1,0.1);
\end{scope}
		
\draw[red!85!black,thick] (0.3,0.0) -- (2.4,0.0);
		\draw[red!85!black,thick] (0.3,0.333) -- (2.4,0.333);
		\draw[red!85!black,thick] (0.3,0.666) -- (2.4,0.666);
		\draw[red!85!black,very thick] (0.3,1.0) -- (2.4,1.0);
		
		\node[text width=1cm, font=\tiny, anchor=west] at (2.45,1.0) {$f_0, f_1$};		
		\node[text width=1cm, font=\tiny, anchor=west] at (2.45,0.666) {$f_2$};		
		\node[text width=1cm, font=\tiny, anchor=west] at (2.45,0.333) {$f_3$};		
		\node[text width=1cm, font=\tiny, anchor=west] at (2.45,0.0) {$f_4$};		
\end{tikzpicture}
\end{subfigure} \caption{A moving phantom mechanism operating on an instance with $n=4$ and $m=3$.
}
\label{fig:moving-phantom-example}
\end{figure*}

To build intuition, we consider an example moving phantom mechanism, shown in Figure~\ref{fig:moving-phantom-example}. There are three alternatives, each occupying a column on the horizontal axis, and four voters. Voter reports are indicated by gray horizontal line segments, with their magnitude $\phat_{i,j}$ indicated by their vertical position. The phantom placements are indicated by the red lines and labeled $f_0, \ldots, f_4$. For each alternative, the median of the four agent reports and the five phantoms is indicated by a rectangle.

The four snapshots shown in Figure~\ref{fig:moving-phantom-example} display increasing values of $t$. Observe that the position of each phantom (weakly) increases from left to right, as does the median on each alternative. Although the vertical axis is not labeled, for simplicity of presentation, normalization here occurs in the second image from the left. In the leftmost image, the sum of the highlighted entries is less than 1, while in the two rightmost images it is more than 1.

For simplicity, the definition of moving phantom mechanisms treats the number of voters $n$ as fixed.  To allow $n$ to vary, it is necessary to define a family of phantom systems, one for each $n$.  In the next two sections, we give two examples of such families, but for this section we keep the presentation simple by considering only a fixed $n$. On the other hand, for each fixed $n$, a phantom system defines a mechanism that works for any number $m$ of projects.

Moving phantom mechanisms satisfy some important basic properties. They are all anonymous, neutral, and continuous. Here neutrality is a design choice---one could imagine defining moving phantom mechanisms for which the movement of the phantoms depends on the alternative. 

We now show our main result in this section, that every moving phantom mechanism is incentive compatible. Before proving the result formally, we provide some intuition. If $i$ changes her report from $\mathbf{p}_i$ to $\mathbf{\phat}_i$, the effect on the aggregate can be decomposed into two parts. First, we can think of holding the phantoms fixed at the snapshot dictated by the truthful instance, while changing $i$'s report to $\mathbf{\phat}_i$. Second, we can think of repositioning the phantoms to the snapshot required to guarantee normalization of the aggregate vector after $i$ reports $\mathbf{\phat}_i$. To prove incentive compatibility, we show that any change that the aggregate division undergoes in the first stage can only be bad for voter $i$, pushing the aggregate away from $\mathbf{p}_i$. Change in the second stage can push the aggregate towards $\mathbf{p}_i$, helping voter $i$, but the magnitude of this change is upper bounded by the magnitude of the harmful change in the first stage.

\begin{theorem}
Every moving phantom mechanism is incentive compatible.
\label{thm:ic}
\end{theorem}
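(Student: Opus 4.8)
The plan is to formalize the two-stage decomposition sketched in the paragraph before the theorem. Fix a voter $i$, a truthful report $\mathbf{p_i}$, a deviation $\mathbf{\phat_i}$, and the other reports $\mathbf{P_{-i}}$. Let $t_0$ be a normalization time for the truthful profile $(\mathbf{P_{-i}}, \mathbf{p_i})$ and $t_1$ a normalization time for the deviating profile $(\mathbf{P_{-i}}, \mathbf{\phat_i})$. Write $\mathbf{a} = \mathcal{A}^\mathcal{F}(\mathbf{P_{-i}}, \mathbf{p_i})$, $\mathbf{b} = \mathcal{A}^\mathcal{F}(\mathbf{P_{-i}}, \mathbf{\phat_i})$, and introduce the intermediate point $\mathbf{c}$ with $c_j = \med(\mathcal{F}(t_0), \mathbf{P_{-i}}, \phat_{i,j})$ for each $j$ — that is, the deviating report evaluated at the truthful phantom positions (note $\mathbf{c}$ need not be normalized). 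The goal $d(\mathbf{b}, \mathbf{p_i}) \ge d(\mathbf{a}, \mathbf{p_i})$ will follow from two claims: (i) $d(\mathbf{c}, \mathbf{p_i}) \ge d(\mathbf{a}, \mathbf{p_i})$ (the first stage can only hurt), and (ii) $d(\mathbf{b}, \mathbf{p_i}) \ge d(\mathbf{c}, \mathbf{p_i})$ (the renormalization in the second stage cannot recover more than was lost), combined carefully — actually it will be cleaner to bound $d(\mathbf{a},\mathbf{p_i})$ from above by something and $d(\mathbf{b},\mathbf{p_i})$ from below by the same thing, so I should keep track of signed coordinate-wise movements rather than just distances.

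For claim (i), the key is a one-dimensional fact about medians: with all other arguments (the $n$ phantoms $f_k(t_0)$ and the $n-1$ other voters' values $\phat_{\ell,j}$) held fixed, the function $x \mapsto \med(\text{fixed values}, x)$ is nondecreasing in $x$ and moreover is \emph{nonexpansive} and "pulled toward $x$" — precisely, $\med(\cdots, x)$ lies weakly between its value at $x = p_{i,j}$ and the point $x$ itself, so moving $i$'s report on coordinate $j$ from $p_{i,j}$ to $\phat_{i,j}$ moves the coordinate-$j$ median weakly away from $p_{i,j}$ (or leaves it fixed), by at most $|p_{i,j} - \phat_{i,j}|$. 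Summing over $j$ gives $d(\mathbf{c}, \mathbf{p_i}) \ge d(\mathbf{a}, \mathbf{p_i})$ and simultaneously that each coordinate of $\mathbf{c}$ is on the far side of $a_j$ from $p_{i,j}$. In particular $\sum_j c_j$ may differ from $1$; let $\Delta = \sum_j c_j - 1$. The amount of "damage" done in stage one, measured as $d(\mathbf{c},\mathbf{p_i}) - d(\mathbf{a},\mathbf{p_i})$, should be shown to be at least $|\Delta|$ — intuitively because the total coordinate-wise displacement $\sum_j |c_j - a_j|$ equals $|\Delta|$ when all displacements have the same sign (which they do, since all medians move the same direction as the phantoms do... wait, no — here the phantoms are fixed and only $i$'s report moves) — this needs care; the right statement is $\sum_j |c_j - a_j| \ge |\sum_j (c_j - a_j)| = |\Delta|$, and each such displacement is away from $p_{i,j}$, so it contributes fully to increasing the distance.

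For claim (ii), going from $\mathbf{c}$ to $\mathbf{b}$ we fix $i$'s report at $\mathbf{\phat_i}$ and move the phantom time from $t_0$ to $t_1$. Since every $f_k$ is monotone in $t$, all coordinates of the median move in the \emph{same} direction (all up if $t_1 > t_0$, all down if $t_1 < t_0$), and the total signed movement $\sum_j (b_j - c_j) = 1 - \sum_j c_j = -\Delta$ is exactly what's needed to restore normalization. The most this monotone, same-sign shift can help voter $i$ is if every coordinate moves toward $p_{i,j}$; even then the total help is at most $\sum_j |b_j - c_j| = |\Delta|$, since the movements are same-signed so their absolute values sum to the absolute value of their sum. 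Hence $d(\mathbf{b}, \mathbf{p_i}) \ge d(\mathbf{c}, \mathbf{p_i}) - |\Delta| \ge d(\mathbf{a}, \mathbf{p_i}) + |\Delta| - |\Delta| = d(\mathbf{a}, \mathbf{p_i})$, which is the claim.

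The main obstacle I anticipate is making the bookkeeping in claim (i) fully rigorous — specifically, proving that the first-stage loss $d(\mathbf{c},\mathbf{p_i}) - d(\mathbf{a},\mathbf{p_i})$ is at least $|\Delta| = |\sum_j c_j - \sum_j a_j|$. The subtlety is that while each coordinate $c_j$ is on the far side of $a_j$ from $p_{i,j}$, the displacements $c_j - a_j$ need not all have the same sign across $j$ (the deviation $\phat_{i,j} - p_{i,j}$ varies in sign across coordinates, since both are normalized). So $\sum_j |c_j - a_j|$ could exceed $|\Delta|$, which is fine (that only helps), but I must confirm the inequality $\sum_j |c_j - a_j| \ge |\Delta|$ directly, which is just the triangle inequality $\sum_j |c_j - a_j| \ge |\sum_j (c_j - a_j)|$, and then use that each $|c_j - a_j| = |c_j - p_{i,j}| - |a_j - p_{i,j}|$ exactly (because $a_j$ is between $p_{i,j}$ and $c_j$), so summing gives $d(\mathbf{c},\mathbf{p_i}) - d(\mathbf{a},\mathbf{p_i}) = \sum_j |c_j - a_j| \ge |\Delta|$. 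A parallel care is needed in claim (ii) to ensure the help is bounded by $|\Delta|$ and not more; there the same-sign property of the second-stage movement (from monotonicity of $\mathcal{F}$ in $t$) is exactly what saves us, and should be stated as a clean lemma. Packaging these median facts as two short one-dimensional lemmas up front will keep the main argument clean.
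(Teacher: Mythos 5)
Your proposal is correct and follows essentially the same two-stage decomposition as the paper's proof: the intermediate point $\mathbf{c}$ is exactly the paper's $\med(\mathcal{F}(t^*),\phat_{i,j},\mathbf{P}_{-i,j})$, your $|\Delta|$ is the paper's $\lvert\sum_j y_j\rvert$, and both arguments rest on the same two facts---that with phantoms fixed each coordinate of the median moves only away from $p_{i,j}$ (so the stage-one loss is $\sum_j|y_j|\ge|\Delta|$), and that the renormalization shift is coordinate-wise same-signed by monotonicity of $\mathcal{F}$ in $t$ (so the stage-two gain is at most $|\Delta|$). Your version is, if anything, slightly tidier in avoiding the paper's ``assume $\sum_j y_j\ge 0$'' case split by working with $|\Delta|$ throughout.
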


\begin{proof}
	Let $\mathcal{F}$ define a moving phantom mechanism $\mathcal{A}^\mathcal{F}$. Consider some report $\mathbf{\phat}_i \neq \mathbf{p}_i$, and fix the reports of all other voters $\mathbf{P}_{-i}$. Let $t^*$ determine the phantom placement for reports $(\mathbf{p}_i,\mathbf{P}_{-i})$ and $\hat{t}^*$ for reports $(\mathbf{\phat}_i,\mathbf{P}_{-i})$. 

	Consider the effect of $i$'s misreport from $\mathbf{p}_i$ to $\mathbf{\phat}_i$ while \emph{holding the phantoms fixed at $\mathcal{F}(t^*)$}. Then, because phantom placements are fixed on each alternative, any change that voter $i$ can cause on alternative $j$ by misreporting must be away from her ideal value $p_{i,j}$. For each $j\in [m]$, we have
	\begin{alignat*}{4}
		\med(\mathcal{F}(t^*),\phat_{i,j}, \mathbf{P}_{-i,j}) &\ge \med(\mathcal{F}(t^*), p_{i,j}, \mathbf{P}_{-i,j}) &\qquad& 
		\text{if } p_{i,j} \le \mathcal{A}^\mathcal{F}(\mathbf{p}_i,\mathbf{P}_{-i})_j < \phat_{i,j}, \\
		\med(\mathcal{F}(t^*),\phat_{i,j}, \mathbf{P}_{-i,j}) &\le \med(\mathcal{F}(t^*),p_{i,j}, \mathbf{P}_{-i,j}) && 
		\text{if } \phat_{i,j} \le \mathcal{A}^\mathcal{F}(\mathbf{p}_i,\mathbf{P}_{-i})_j < p_{i,j}, \\
		\med(\mathcal{F}(t^*),\phat_{i,j}, \mathbf{P}_{-i,j}) &= \med(\mathcal{F}(t^*),p_{i,j}, \mathbf{P}_{-i,j}) &&
		\text{otherwise.}
	\end{alignat*}

	Let $y_j = \med(\mathcal{F}(t^*),\phat_{i,j}, \mathbf{P}_{-i,j})-\med(\mathcal{F}(t^*),p_{i,j}, \mathbf{P}_{-i,j})$ denote the change caused on alternative~$j$ by voter $i$'s misreport, subject to holding the phantom placement fixed at $\mathcal{F}(t^*)$. By the above,
	\begin{equation}
		\label{eqn:proof-part-1}
		\sum_{j \in [m]} |p_{i,j}-\med(\mathcal{F}(t^*),\phat_{i,j}, \mathbf{P}_{-i,j})|=\sum_{j \in [m]} |p_{i,j}-\mathcal{A}^\mathcal{F}(\mathbf{p_i}, \mathbf{P}_{-i})_j| + \sum_{j \in [m]} |y_j|.
	\end{equation}

	Next, we consider the change that results from moving the phantoms from $\mathcal{F}(t^*)$ to $\mathcal{F}(\hat{t}^*)$.
	Assume that $\sum_{j \in [m]} y_j \ge 0$ (otherwise, a similar argument applies). Then $\sum_{j \in [m]} \med(\mathcal{F}(t^*),\phat_{i,j}, \mathbf{P}_{-i,j}) \ge 1$, which implies that $\hat{t}^* \le t^*$ since the sum is monotonic in $t$ (see the proof of Proposition~\ref{prop:existence}).
	Thus, for the aggregate division $\mathcal{A}^\mathcal{F}(\mathbf{\phat}_i,\mathbf{P}_{-i})$ we have that 
	\[
	  \mathcal{A}^\mathcal{F}(\mathbf{\phat}_i,\mathbf{P}_{-i})_j= \med(\mathcal{F}(\hat{t}^*),\phat_{i,j}, \mathbf{P}_{-i,j}) \le \med(\mathcal{F}(t^*),\phat_{i,j}, \mathbf{P}_{-i,j})
	   \quad\text{for all $j\in[m]$.}
	\] 
	Therefore,
	\begin{align}
		\sum_{j \in [m]} |\med(\mathcal{F}(t^*),\phat_{i,j}, \mathbf{P}_{-i,j})-\mathcal{A}^\mathcal{F}(\mathbf{\phat}_i,\mathbf{P}_{-i})_j| &= \sum_{j \in [m]} (\med(\mathcal{F}(t^*),\phat_{i,j}, \mathbf{P}_{-i,j})-\mathcal{A}^\mathcal{F}(\mathbf{\phat}_i,\mathbf{P}_{-i})_j) \nonumber \\
		&= \sum_{j \in [m]} (\med(\mathcal{F}(t^*),\phat_{i,j}, \mathbf{P}_{-i,j}))-\sum_{j \in [m]} \mathcal{A}^\mathcal{F}(\mathbf{\phat}_i,\mathbf{P}_{-i})_j \nonumber \\
		&= \sum_{j \in [m]} (\med(\mathcal{F}(t^*),\phat_{i,j}, \mathbf{P}_{-i,j}))- 1 \nonumber \\
		&= \sum_{j \in [m]} (\med(\mathcal{F}(t^*),\phat_{i,j}, \mathbf{P}_{-i,j}))-\sum_{j \in [m]} \mathcal{A}^\mathcal{F}(\mathbf{p}_i,\mathbf{P}_{-i})_j \nonumber \\
		&= \sum_{j \in [m]} (\med(\mathcal{F}(t^*),\phat_{i,j}, \mathbf{P}_{-i,j})- \mathcal{A}^\mathcal{F}(\mathbf{p}_i,\mathbf{P}_{-i})_j) \nonumber \\
		&= \sum_{j \in [m]} y_j \label{eqn:proof-part-2}
	\end{align}
	We complete the proof by noting that
	\begin{align*}
		d(\mathbf{p}_i, \mathcal{A}^\mathcal{F}(\mathbf{\phat}_i, \mathbf{P}_{-i})) &= \sum_{j \in [m]} |p_{i,j}-\mathcal{A}^\mathcal{F}(\mathbf{\phat}_i,\mathbf{P}_{-i})_j|\\
		&\ge \sum_{j \in [m]} |p_{i,j}-\med(\mathcal{F}(t^*),\phat_{i,j}, \mathbf{P}_{-i,j})|\\&\qquad \qquad -\sum_{j \in [m]} |\mathcal{A}^\mathcal{F}(\mathbf{\phat}_i,\mathbf{P}_{-i})_j-\med(\mathcal{F}(t^*),\phat_{i,j}, \mathbf{P}_{-i,j})|\\
		&= \sum_{j \in [m]} |p_{i,j}-\mathcal{A}^\mathcal{F}(\mathbf{p_i}, \mathbf{P}_{-i})_j| + \sum_{j \in [m]} |y_j| - \sum_{j \in [m]} y_j\\
		&\ge \sum_{j \in [m]} |p_{i,j}-\mathcal{A}^\mathcal{F}(\mathbf{p_i}, \mathbf{P}_{-i})_j| = d(\mathbf{p}_i, \mathcal{A}^\mathcal{F}(\mathbf{p_i}, \mathbf{P}_{-i})),
	\end{align*}
	where the first and last equalities follow from the definition of the $\ell_1$ metric, the first inequality holds by an application of the triangle inequality, the second equality follows from Equations~\ref{eqn:proof-part-1} and~\ref{eqn:proof-part-2}, and the final inequality holds because $|x| \ge x$ for all $x \in \mathbb{R}$.
\end{proof}

In addition to incentive compatibility, moving phantom mechanisms satisfy a natural monotonicity property that says that if some voter increases her report on alternative $j$, and decreases her report on all other alternatives, then the aggregate weight on alternative $j$ should not decrease.

\begin{definition}
	A budget aggregation mechanism $\mathcal{A}$ satisfies \emph{monotonicity} if, for all $\mathbf{p}_i$, $\mathbf{p}'_i$ with $p_{i,j} > p'_{i,j}$ for some $j$ and $p_{i,k} \le p'_{i,k}$ for all $k \neq j$, 
	\[ \mathcal{A}(\mathbf{p}_i,\mathbf{P}_{-i})_j \ge \mathcal{A}(\mathbf{p}'_i,\mathbf{P}_{-i})_j. \]
\end{definition}

\begin{theorem}
	Every moving phantom mechanism satisfies monotonicity.
	\label{thm:monotonicity}
\end{theorem}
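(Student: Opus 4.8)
The plan is to fix the voter $i$ and the two reports $\mathbf{p_i}$ (weight $p_{i,j}$ on the distinguished alternative $j$) and $\mathbf{p'_i}$ (weight $p'_{i,j} \le p_{i,j}$ on $j$ and weight $p'_{i,k} \ge p_{i,k}$ on every $k \neq j$), and to show directly that $\mathcal{A}^\mathcal{F}(\mathbf{p_i},\mathbf{P_{-i}})_j \ge \mathcal{A}^\mathcal{F}(\mathbf{p'_i},\mathbf{P_{-i}})_j$. Let $t^*$ and $\hat t^*$ be normalization times for the profiles $(\mathbf{p_i},\mathbf{P_{-i}})$ and $(\mathbf{p'_i},\mathbf{P_{-i}})$ respectively; by Proposition~\ref{prop:existence} the outputs do not depend on which valid time we pick. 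Abbreviate $h(t)=\med(\mathcal{F}(t),p_{i,j},\mathbf{P}_{-i,j})$ and $h'(t)=\med(\mathcal{F}(t),p'_{i,j},\mathbf{P}_{-i,j})$, so the claim is $h(t^*)\ge h'(\hat t^*)$. I will use two elementary facts: (a) for each coordinate, $t\mapsto \med(\mathcal{F}(t),\cdot\,,\mathbf{P}_{-i,\cdot})$ is weakly increasing, since each $f_k$ is (this was already used in the proof of Proposition~\ref{prop:existence}); and (b) the median is weakly monotone in each of its arguments, so for fixed $t$ we have $h(t)\ge h'(t)$, while on every coordinate $k\neq j$ the median for $\mathbf{p_i}$ is at most the median for $\mathbf{p'_i}$.

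If $t^*\ge\hat t^*$, we are done immediately: $h(t^*)\ge h(\hat t^*)\ge h'(\hat t^*)$ by (a) then (b). The main obstacle is the remaining case $t^*<\hat t^*$, where moving the phantoms up from $\mathcal{F}(t^*)$ to $\mathcal{F}(\hat t^*)$ can push $h'$ strictly above $h'(t^*)$, in exactly the direction that threatens the inequality. To control this, I would bring in the normalization constraint. Write $g(t)=\sum_{k\in[m]}\med(\mathcal{F}(t),p_{i,k},\mathbf{P}_{-i,k})$ and let $g'$ be the analogous sum for $\mathbf{p'_i}$; both are weakly increasing in $t$ with $g(t^*)=g'(\hat t^*)=1$, so in this case $g'(t^*)\le 1$. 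Since every coordinate median is weakly increasing in $t$ and the $m$ medians sum to $g'(\hat t^*)=1$ at time $\hat t^*$ and to $g'(t^*)$ at time $t^*$, the increase on coordinate $j$ alone is bounded by the total increase over all coordinates: $h'(\hat t^*)-h'(t^*)\le 1-g'(t^*)$.

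On the other hand, $g(t^*)-g'(t^*)$ is a sum over coordinates of the change produced, at fixed phantoms $\mathcal{F}(t^*)$, by replacing $\mathbf{p'_i}$ with $\mathbf{p_i}$; by (b) its coordinate-$j$ term is exactly $h(t^*)-h'(t^*)\ge 0$ and every other term is $\le 0$, so $g(t^*)-g'(t^*)\le h(t^*)-h'(t^*)$. Chaining these bounds, $h'(\hat t^*)-h'(t^*)\le 1-g'(t^*)=g(t^*)-g'(t^*)\le h(t^*)-h'(t^*)$, hence $h'(\hat t^*)\le h(t^*)$. Translating back through Definition~\ref{def:moving-phantom}, $\mathcal{A}^\mathcal{F}(\mathbf{p'_i},\mathbf{P_{-i}})_j=h'(\hat t^*)\le h(t^*)=\mathcal{A}^\mathcal{F}(\mathbf{p_i},\mathbf{P_{-i}})_j$, which is the desired monotonicity.

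The only genuinely delicate point is the $t^*<\hat t^*$ case, and the key idea there is that normalization lets us rewrite the phantom-driven increase $1-g'(t^*)$ as the fixed-phantom median difference $g(t^*)-g'(t^*)$, which coordinate-wise monotonicity of the median then collapses to the single coordinate-$j$ term $h(t^*)-h'(t^*)$. Everything else is a routine application of monotonicity of medians in their arguments and of the aggregate sum $g$ in $t$.
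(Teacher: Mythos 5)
Your proof is correct and follows essentially the same route as the paper: the same case split on whether the normalization time increases or decreases, with the easy case handled by monotonicity of the median in $t$ and in its arguments, and the hard case resolved via the normalization constraint. The paper packages the hard case as a direct term-by-term comparison of $1-\sum_{j'\neq j}\med(\mathcal{F}(\cdot),\cdot,\mathbf{P}_{-i,j'})$ across the two profiles, whereas you chain through the intermediate quantity $h'(t^*)$, but the two derivations are logically equivalent.
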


 \begin{proof}
 	Let $\mathbf{p}_i$, $\mathbf{p}'_i$ be such that $p_{i,j} > p'_{i,j}$ for some $j$ and $p_{i,k} \le p'_{i,k}$ for all $k \neq j$. Let $t^*$ determine the phantom placement for reports $(\mathbf{p}_i,\mathbf{P}_{-i})$ and $t'^*$ for reports $(\mathbf{p}'_i,\mathbf{P}_{-i})$.

 	Suppose that $t'^* < t^*$. We have
 		\[ \mathcal{A}(\mathbf{p}'_i,\mathbf{P}_{-i})_j = \med (\mathcal{F}(t'^*),p'_{i,j}, P_{-i,j})
 		\le \med (\mathcal{F}(t^*),p_{i,j}, P_{-i,j})
 		= \mathcal{A}(\mathbf{p}_i,\mathbf{P}_{-i})_j \]
 	where the inequality holds because $p_{i,j} > p'_{i,j}$ and $f_k(t^*) \ge f_k(t'^*)$ for all $k \in \{ 0, \ldots, n \}$.

 	Next, suppose that $t'^* > t^*$. Then
 	\begin{align*}
 		\mathcal{A}(\mathbf{p}'_i,\mathbf{P}_{-i})_j = 1-\sum_{j' \neq j} \mathcal{A}(\mathbf{p}'_i,\mathbf{P}_{-i})_{j'}
 		&= 1-\sum_{j' \neq j} \med (\mathcal{F}(t'^*),p'_{i,j'}, P_{-i,j'})\\
 		&\le 1-\sum_{j' \neq j} \med (\mathcal{F}(t^*),p_{i,j'}, P_{-i,j'})
 		= \mathcal{A}(\mathbf{p}_i,\mathbf{P}_{-i})_j
 		\end{align*}
 	where the inequality holds as $p_{i,j'} < p'_{i,j'}$ for all $j' \neq j$ and $f_k(t^*) \le f_k(t'^*)$ for all $k \in \{ 0, \ldots, n \}$.
 \end{proof}

Let us briefly discuss how to compute the output of a moving phantom mechanism. In principle, the precise time $t^* \in [0,1]$ at which the output of the mechanism is normalized may have many decimal digits, and for badly-behaved $\mathcal{F}$ it may even be irrational. For the same reason, the mechanism may produce an irrational division, so the precise computation of the output may not be possible. However, for mechanisms for which the phantom system $\mathcal{F}$ is piecewise linear, we can show that $t^*$ has few digits and the output is always rational, so polynomial-time computation is possible using a binary search on $t$. A phantom system is \emph{piecewise linear} if we can partition time into $r$ time intervals, such that within each interval, all the phantoms move at constant speed. All the specific moving phantom mechanisms that we consider in the remainder of the paper have a piecewise linear phantom system, and can thus be efficiently evaluated.

\begin{theorem}
	\label{thm:rational}
	Suppose that $\mathcal F$ is a piecewise linear phantom system, so that there are rationals $0 = t_1 < t_2 < \cdots < t_r = 1$ such that for each $k = 0, \dots, n$ and each $s \in [r-1]$, there are rational coefficients $a_k^s \in \mathbb Q$ and $b_k^s \in \mathbb Q$ with
	\[
		f_k(t) = a_k^s \cdot t + b_k^s \quad \text{whenever $t_s \le t \le t_{s+1}$}.
	\]
	Let $\mathbf P$ be a profile in which all reports $p_{i,j}$ are rational. Then there exists a rational time $t^* \in \mathbb Q$ for which $\mathcal A^{\mathcal F}$ is normalized, and $t^*$ can be specified in polynomially many bits. Hence, the outcome distribution of $\mathcal A^{\mathcal F}$ is rational and can be computed in polynomial time.
\end{theorem}
\begin{proof}
	We show that $t^*$ and the outcome distribution $p_1, \dots, p_m$ form the solution of a system of linear inequalities with rational coefficients. Since a system of linear inequalities with rational coefficients that has a solution always has a rational solution of polynomial size \citep[Thm.~10.1]{Schr86a}, the first claim of the theorem follows. For the second claim about computability, note that given a (rational) time $t$, we can clearly compute the outcome of $\mathcal A^{\mathcal F}$ at time $t$, and we can use a binary search on $t$ to find the time $t^*$ when $\mathcal A^{\mathcal F}$ is normalized. The binary search takes polynomially many steps since $t^*$ is guaranteed to have polynomial size. (In this analysis of time complexity, the number of voters $n$ is fixed, but the number $m$ of alternatives is allowed to vary.)
	
	We now construct the mentioned system of linear inequalities. Let $t^*$ be a time when the mechanism is normalized, 
	and choose
	$s \in [r-1]$ such that $t_s \le t^* \le t_{s+1}$. Our system of linear inequalities will have variables $t, p_1, \dots, p_m \ge 0$. We add inequalities (also called constraints) to the system one by one. First, we add the constraints $\sum_{j\in[m]} p_j = 1$ and $t_s \le t \le t_{s+1}$. Next, we will add constraints that, for each alternative $j \in [m]$, encode $p_j$ as the value that is the median for $j$.
	
	Let $j \in [m]$. Suppose first that $\mathcal A^{\mathcal F}(\mathbf{P})_j$ is a voter report, i.e. $\mathcal A^{\mathcal F}(\mathbf{P})_j = p_{i,j}$ for some $i \in N$. Then we add the constraint $p_j = p_{i,j}$. Then, for each phantom $k = 0, \dots, n$, we add the constraint
	\begin{align*}
		a_k^s \cdot t + b_k^s \le p_j & \quad \text{if $f_k(t^*) \le \mathcal A^{\mathcal F}(\mathbf{P})_j$,} \\
		a_k^s \cdot t + b_k^s \ge p_j & \quad \text{if $f_k(t^*) \ge \mathcal A^{\mathcal F}(\mathbf{P})_j$,}
	\end{align*}
	which implies that $p_j$ is the median of $p_{1,j}, \dots, p_{n,j}, f_0(t), \dots, f_n(t)$.
	
	Otherwise, we have that $\mathcal A^{\mathcal F}(\mathbf{P})_j$ is not a voter report, so $\mathcal A^{\mathcal F}(\mathbf{P})_j = f_{k'}(t^*)$ for some $k' \in \{0, \dots, n\}$. Then again for each phantom $k = 0, \dots, n$, we add the constraint
	\begin{align*}
		a_k^s \cdot t + b_k^s \le p_j & \quad \text{if $f_k(t^*) \le \mathcal A^{\mathcal F}(\mathbf{P})_j$,} \\
		a_k^s \cdot t + b_k^s \ge p_j & \quad \text{if $f_k(t^*) \ge \mathcal A^{\mathcal F}(\mathbf{P})_j$,}
	\end{align*}
	which implies that $p_j = a_{k'}^s \cdot t + b_{k'}^s$, and that $p_j$ is the median of $p_{1,j}, \dots, p_{n,j}, f_0(t), \dots, f_n(t)$.
	
	Clearly, any solution to this system of constraints encodes an outcome of $\mathcal A^{\mathcal F}$, as required.
\end{proof}

Before we move on to particular moving phantom mechanisms, let us end this section with a tantalizing open question: Does there exist an (anonymous, neutral, continuous) incentive-compatible budget aggregation mechanism that is \emph{not} a moving phantom mechanism? We have not been able to construct any example, and have found that some mechanisms that on first sight seem to have nothing to do with medians end up having an equivalent description as a moving phantom mechanism. For the simpler two-alternative case, we already have a characterization of all incentive-compatible mechanisms (Theorem~\ref{thm:binary}). This class can equivalently be described in terms of moving phantoms, and so the answer to our question for $m=2$ is \emph{no}.

\begin{theorem}
	For $m=2$, moving phantom mechanisms are the only budget aggregation mechanisms that satisfy anonymity, neutrality, continuity, and incentive compatibility.
\end{theorem}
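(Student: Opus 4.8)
The plan is to pin $\mathcal A$ down via Moulin's characterization and then realize it by an explicitly constructed phantom system. One direction is already done: every moving phantom mechanism is anonymous, neutral, and continuous by design, and incentive compatible by Theorem~\ref{thm:ic}. So it remains to show that any anonymous, neutral, continuous, incentive compatible mechanism $\mathcal A$ (for $m=2$) is a moving phantom mechanism. By Theorem~\ref{thm:binary} there are phantoms $\alpha_0 \ge \alpha_1 \ge \cdots \ge \alpha_n$ in $[0,1]$ describing $\mathcal A$, and, as noted after that theorem, neutrality forces the multiset $\{\alpha_0,\dots,\alpha_n\}$ to be symmetric, $\{\alpha_0,\dots,\alpha_n\}=\{1-\alpha_0,\dots,1-\alpha_n\}$.

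Next I would build a phantom system $\mathcal F=\{f_k\}$ that ``passes through'' the static phantom vector $(\alpha_0,\dots,\alpha_n)$ at a fixed time, say $t=\tfrac12$. Concretely, let $f_k$ be piecewise linear with $f_k(0)=0$, $f_k(\tfrac12)=\alpha_k$, $f_k(1)=1$, linear on $[0,\tfrac12]$ and on $[\tfrac12,1]$. A quick check confirms this is a valid phantom system in the sense of Definition~\ref{def:moving-phantom}: each $f_k$ is continuous, weakly increasing, with the required endpoint values, and $f_k(t)\ge f_{k+1}(t)$ for every $t$ --- on $[0,\tfrac12]$ because $\alpha_k\ge\alpha_{k+1}$, and on $[\tfrac12,1]$ because $f_k(t)-f_{k+1}(t)=(\alpha_k-\alpha_{k+1})\cdot 2(1-t)\ge 0$.

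The heart of the proof is checking that $\mathcal A^{\mathcal F}$ reproduces $\mathcal A$. Fix a profile $\mathbf P$. At the snapshot $t=\tfrac12$ the phantoms are exactly $(\alpha_0,\dots,\alpha_n)$, so $\med(\mathcal F(\tfrac12),\phat_{1,1},\dots,\phat_{n,1})=\mathcal A(\mathbf P)_1$ directly by Theorem~\ref{thm:binary}, while $\med(\mathcal F(\tfrac12),\phat_{1,2},\dots,\phat_{n,2})=\med(1-\alpha_0,\dots,1-\alpha_n,\phat_{1,2},\dots,\phat_{n,2})=\mathcal A(\mathbf P)_2$, where the first equality uses the symmetry $\{\alpha_k\}=\{1-\alpha_k\}$. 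Since $\mathcal A(\mathbf P)$ is a division, these two medians sum to $1$, so $t^*=\tfrac12$ is a valid normalizing time for $\mathcal A^{\mathcal F}$ on $\mathbf P$. Because (as observed after Proposition~\ref{prop:existence}) the output of $\mathcal A^{\mathcal F}$ is independent of the choice of normalizing time, we conclude $\mathcal A^{\mathcal F}(\mathbf P)=(\mathcal A(\mathbf P)_1,\mathcal A(\mathbf P)_2)=\mathcal A(\mathbf P)$, and as $\mathbf P$ was arbitrary, $\mathcal A=\mathcal A^{\mathcal F}$.

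I expect the only genuine subtlety to be the third step: a moving phantom mechanism applies the \emph{same} functions $f_k$ to every coordinate, so coordinate~$2$ is governed by the phantoms $\alpha_k$ rather than by $1-\alpha_k$; this is harmless \emph{precisely because} neutrality makes the phantom multiset symmetric, so the two agree. The remaining pieces --- validity of $\mathcal F$, and verifying that $\tfrac12$ is always a normalizing time --- are routine, and the reverse inclusion is already available from Theorem~\ref{thm:ic}.
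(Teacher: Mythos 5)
Your proof is correct and follows essentially the same route as the paper's: invoke Moulin's characterization (Theorem~\ref{thm:binary}), use neutrality to get the symmetric phantom multiset, and realize the static phantoms as a snapshot of a phantom system at some $t^*$. You simply make explicit two details the paper leaves implicit---the concrete piecewise-linear construction of $\mathcal F$ and the verification that the coordinate-2 median is unaffected by replacing $\{1-\alpha_k\}$ with $\{\alpha_k\}$---which is a welcome tightening but not a different argument.
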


\begin{proof}
	Certainly all moving phantom mechanisms satisfy these properties.
	For the other direction, we know from Theorem~\ref{thm:binary} that any mechanism $\mathcal{A}$ satisfying these properties can be described as a generalized median with phantoms $\alpha_0, \dots, \alpha_n$ satisfying, due to neutrality, $ \{ \alpha_0, \dots, \alpha_{n} \} = \{ 1-\alpha_0, \dots, 1-\alpha_{n} \}$.
	We show that $\mathcal{A}$ is equivalent to a moving phantom mechanism.
	Define $\mathcal{A}^\mathcal{F}$ using a phantom system $\mathcal{F}$ for which there exists a $t^* \in [0,1]$ with $f_k(t^*)=\alpha_k$ for every $k \in \{ 0, \ldots, n \}$. 
	Then, for every preference profile $\mathbf{P}$, we have that $\mathcal{A}^\mathcal{F}(\mathbf{P})_1 = \med (\mathcal{F}(t^*), \mathbf{P}_{i \in [n],j}) = (\alpha_0, \ldots, \alpha_n, \mathbf{P}_{i \in [n],j})$, and $\mathcal{A}^\mathcal{F}(\mathbf{P})_2 = 1-\mathcal{A}^\mathcal{F}(\mathbf{P})_1$, matching the output of $\mathcal{A}$.
\end{proof}
 \section{The Independent Markets Mechanism}
\label{sec:im}

We have seen that uniform phantoms is uniquely proportional for $m=2$. By a similar argument to the proof of Proposition~\ref{prop:uniform-proportional}, it is easy to see that any family of functions $\mathcal{F}$ that generates uniform phantoms at some snapshot will be proportional, and will reduce to the uniform phantom mechanism for $m=2$. However, this leaves a large class of moving phantom mechanisms to choose from. In this section, we identify a particular moving phantom mechanism that generalizes the uniform phantom mechanism for arbitrary $m$. Its outcome can be interpreted as a market equilibrium, which will be useful for analyzing the mechanism.

\begin{definition}
	\label{def:im}
	The independent markets mechanism ($\mathcal{A}^{\IM}$) is the moving phantom mechanism defined by the phantom system 
	$f_k(t) = \min \{ t(n-k), 1 \}$
for each $k \in \{ 0, \ldots, n \}$.
\end{definition}

To visualize the phantom placement, observe that for any $t \leq 1/n$, phantoms are being placed at $0, t, 2t, \ldots, nt$. Once $t$ reaches $1/n$, phantoms continue to grow in the same manner, but the higher phantoms get capped at 1.\footnote{As written, $f_n(1)=0$, but Definition~\ref{def:moving-phantom} requires $f_k(1)=1$ for all $k$. This detail does not matter here, since normalization is always achieved without moving phantom $n$, but one could write $f_n$ in a different form to satisfy Definition~\ref{def:moving-phantom} without it changing the behavior of the mechanism.} This is actually the mechanism that we displayed in Figure~\ref{fig:moving-phantom-example}.
Note that, when $t = 1/n$, the phantom placement is uniform on $[0,1]$ (as is the case in the third panel of Figure~\ref{fig:moving-phantom-example}); thus, $\mathcal{A}^{\IM}$ reduces to the uniform phantom mechanism for $m=2$.

\begin{example}
	\label{ex:im}
	Consider the profile $\mathbf P$ from Example~\ref{ex:uniform-not-normalized} with voter reports $\mathbf{p}_1=(0, 0.5, 0.5), \mathbf{p}_2=(\frac{1}{3}, \frac{2}{3}, 0)$, and $\mathbf{p}_3=(0.9,0,0.1)$. Consider the placement of the $n+1=4$ phantoms when $t=\frac{2}{9}$. They are placed at $f_0(t)=\frac{6}{9}, f_1(t)=\frac{4}{9}, f_2(t)=\frac{2}{9}, f_3(t)=0$. The generalized medians are
	\begin{align*}
		\med \{ f_0(t), f_1(t), f_2(t), f_3(t), p_{1,1}, p_{2,1}, p_{3,1} \} &= \textstyle \med \{ \frac{6}{9}, \frac{4}{9}, \frac{2}{9}, 0, 0, \frac{1}{3}, 0.9 \}=\frac{1}{3}, \\
		\med \{ f_0(t), f_1(t), f_2(t), f_3(t), p_{1,2}, p_{2,2}, p_{3,2} \} &= \textstyle \med \{ \frac{6}{9}, \frac{4}{9}, \frac{2}{9}, 0, 0.5, \frac{2}{3}, 0 \}=\frac{4}{9}, \\
		\med \{ f_0(t), f_1(t), f_2(t), f_3(t), p_{1,3}, p_{2,3}, p_{3,3} \} &= \textstyle \med \{ \frac{6}{9}, \frac{4}{9}, \frac{2}{9}, 0, 0.5, 0, 0.1 \}=\frac{2}{9}.
	\end{align*}
	Because $\frac13 + \frac49+\frac29 = 1$, these are normalized, and hence we have $t^*=\frac29$. Thus $\mathcal{A}^{\IM}(\mathbf{P})=(\frac{1}{3},\frac{4}{9},\frac{2}{9})$.
\end{example}

\subsection{Market Interpretation}

Why do we call this mechanism the independent markets mechanism? To explain this, we first establish a connection between the market clearing price in a simple single-good market and the median of some familiar-looking numbers.

Suppose we are selling a single divisible good, of which a total amount of $x\in [0, \infty)$ is available.
Each of $n$ voters has a budget of 1, and a value $v_i \in [0, \infty)$ per unit of the good. 
We assume voters have quasilinear utilities $u_i(x_i) = x_i \cdot (v_i - \pi)$, where $\pi \ge 0$ is the price per unit of the good. 
For a price $\pi$, the demand of voter $i$, $D_i(\pi) \in [0,\infty]$ must satisfy
\begin{alignat*}{3}
	&D_i(0)=\infty &&\\
	&D_i(\pi)=1/\pi &&\text{ for all }0<\pi<v_i\\
	&D_i(v_i) \in [0,1/v_i] &&\\
	&D_i(\pi)=0 &&\text{ for all }\pi>v_i
\end{alignat*}
Thus, each voter demands as much of the good as their budget of 1 allows at price $\pi$, as long as the price per unit is lower than their value per unit. 
When the price is equal to the voter's value, we allow the demand to be anywhere between $0$ and $\frac{1}{\pi}$, since the voter is indifferent between purchasing any amount of the good at this price, subject to her budget.
The \emph{market clearing price} $c$ is the price at which the supply of the good ($x$) equals the total demand. Formally, we say that $c$ is a market clearing price if there exist demands $D_i(c)$ such that
\begin{equation}
	\textstyle\sum_{i \in [n]} D_i(c) = x.
\end{equation}

It turns out that a unique market clearing price $c$ always exists and is equal to the median of the $n$ voter values $v_i$ and of $n+1$ ``phantom values'' which are uniformly distributed on the interval $[0, n/x]$.
To the best of our knowledge, this connection has not previously been appreciated in the literature.

\begin{lemma}
	\label{thm:market-median}
	In the market defined above, a unique market clearing price $c$ always exists and equals 
\[\med (0, 1/x, \ldots, (n-1)/x, n/x, v_1, \ldots, v_n).\]
\end{lemma}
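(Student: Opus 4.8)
The plan is to compute the market clearing price $c$ directly from the definition and show that the defining inequality for $c$ flips exactly at the claimed median. First I would sort the voter values as $v_{(1)} \le v_{(2)} \le \cdots \le v_{(n)}$, and observe that the total demand $\sum_{i \in [n]} D_i(\pi)$ as a function of $\pi > 0$ is a nonincreasing step function: for $\pi$ in an interval where exactly $\ell$ of the values $v_i$ satisfy $v_i > \pi$, the demand is $\ell/\pi$. So on the interval $(v_{(k)}, v_{(k+1)}]$ (taking $v_{(0)} = 0$ and $v_{(n+1)} = \infty$), exactly $n-k$ voters are active and total demand equals $(n-k)/\pi$.

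The key step is then to identify, for a given supply level $x$, which interval contains the clearing price. Since $c = \sup\{\pi : \sum_i D_i(\pi) > x\}$ and demand is nonincreasing with jumps only at the $v_{(k)}$, the clearing price is determined by comparing $x$ against the values $(n-k)/v_{(k)}$, i.e.\ the demand just above each breakpoint. Concretely, within the interval $(v_{(k)}, v_{(k+1)}]$ the condition $(n-k)/\pi > x$ is equivalent to $\pi < (n-k)/x$. This means that if $(n-k)/x$ falls inside $(v_{(k)}, v_{(k+1)}]$, then $c = (n-k)/x$; otherwise the supremum is pinned to one of the breakpoints $v_{(k)}$. I would package both cases uniformly by showing $c = \med\{ v_{(k)} : k \in [n]\} \cup \{ (n-k)/x : k \in \{0,\dots,n\} \}$: the $n+1$ phantom values $0, 1/x, 2/x, \dots, n/x$ are exactly the quantities $(n-k)/x$ as $k$ ranges over $\{0,\dots,n\}$, and the median of these $2n+1$ numbers selects precisely the value where the ``demand exceeds supply'' predicate transitions from true to false.

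To make the median identification rigorous, I would use the characterization that $\med$ of $2n+1$ numbers is the unique value $z$ such that at least $n+1$ of the numbers are $\ge z$ and at least $n+1$ are $\le z$. Letting $c^\star$ denote the claimed median, I would count: among the $v_i$, the number that are $> c^\star$ combined with the number of phantoms $j/x$ that are $> c^\star$ should total at most $n$, and symmetrically for the $<$ direction; translating these counts back into the inequality $\sum_i D_i(\pi) > x$ at prices just below and just above $c^\star$ shows $\sup\{\pi : \sum_i D_i(\pi) > x\} = c^\star$.

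The main obstacle is careful bookkeeping at the boundary cases: the discontinuity of $D_i$ at $\pi = v_i$ (demand is $1/\pi$ for $\pi < v_i$ but $0$ at $\pi = v_i$), ties among the $v_i$ or between a $v_i$ and a phantom $j/x$, and the degenerate situations $x = 0$ (where $c = \infty$ should match the median being $n/x = \infty$) and $x$ very large (where $c$ can be $0$). I expect the supremum in the definition of $c$ to absorb these boundary subtleties cleanly — it is exactly designed so that the answer is the breakpoint value regardless of which side the strict inequality holds on — but verifying this requires checking the count of numbers on each side of $c^\star$ in each regime. The rest is routine.
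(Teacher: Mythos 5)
Your plan is correct and rests on the same core observation as the paper's proof: total demand at price $\pi$ is $(\#\{i : v_i > \pi\})/\pi$, so demand exceeds the supply $x$ exactly when $\pi$ is below $\ell/x$ for $\ell$ the number of active voters, which is precisely the threshold structure recorded by the equally spaced phantoms $0, 1/x, \ldots, n/x$. The paper merely organizes the same counting argument in the other direction---case-splitting on whether the median is a phantom or a voter value and verifying the supremum condition there, rather than locating the crossing of the demand step function first---and the boundary subtleties you flag (the discontinuity of $D_i$ at $\pi = v_i$ and ties) are handled there exactly as you anticipate.
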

\begin{proof}
	We distinguish the cases that the median is a phantom $a/x$ or a voter report $v_i$.
	
	Suppose that the median is $a/x$ for some $a \in \{0, \dots, n\}$. Then we can partition the (phantom and voter report) entries, with the exception of the phantom at $a/x$, into sets $A$ and $B$ with $|A|=|B|=n$, where $A$ consists only of entries less than or equal to $a/x$, and $B$ consists only of entries greater than or equal to $a/x$. Note that $B$ contains $n-a$ phantoms, and thus $n-(n-a)=a$ voter reports.
		
	We show that $\pi = a/x$ is a market clearing price. 
	At this price, we can set the demand of each voter $i \in N$ whose report is in $B$ to $D_i(\pi) = 1/\pi = x/a$, and we can set the demand of each voter whose report is not in $B$ to 0. Since $B$ contains $a$ voter reports, the total demand of all voters is $a \cdot x/a = x$, and hence $\pi = a/x$ is a market clearing price. To see that it is the unique clearing price, note that at any price $\pi<a/x$, each voter whose report is in $B$ has demand $D_i(\pi) = 1/\pi > x/a$, so the total demand of all voters in $B$ is greater than $x$. At any price $\pi>a/x$, each voter whose report is in $A$ has demand $D_i(\pi)=0$, while each of the $a$ voters whose report is in $B$ has demand at most $D_i(\pi)=1/\pi<x/a$, so the total demand of all voters is less than $x$.
	
	Next, suppose that the median $y$ is not a phantom entry, and say that we have $a/x < y < (a+1)/x$ for some $a \le n-1$ (note that the median cannot be greater than $n/x$, because it cannot be higher than the largest phantom value). Then we can partition the (phantom and voter report) entries, not including a single voter with $v_i=y$ (one such voter must exist because the median coincides with some entry, and no phantom entry lies at $y$), into sets $A$ and $B$ each of size $n$, where $A$ consists only of entries less than or equal to $y$, and $B$ consists only of entries greater than or equal to $y$. Again, $B$ contains $n-a$ phantom reports, and thus $a$ voter reports. 
	
	We show that $\pi = y$ is a market clearing price. 
	For each voter $i$ with $v_i=y$ whose report is in $B$, let $D_i(v_i)=1/v_i=1/y$. For each voter $i$ with $v_i=y$ whose report is in $A$, let $D_i(\pi)=0$. Finally, for the single voter $i$ with $v_i=y$ whose report is neither in $A$ nor in $B$ , set $D_i(v_i)=x-a/y$. Note that $D_i(v_i) \in [0, 1/v_i]$ because $a/x<y$ implies $x-a/y>0$ and $y<(a+1)/x$ implies $x-a/y<1/y$. Therefore, these settings are consistent with the requirements of a demand function.
	At price $\pi=y$, each of the $a$ voters whose report is in $B$ has demand $D_i(y)=1/y$, the voter whose report is in neither $A$ nor $B$ has demand $D_i(y)=x-a/y$, and all other voters have demand 0, so the total demand is exactly $x$. Hence, $\pi = y$ is a market clearing price.
	To see that it is the unique clearing price, note that at all prices $\pi<y$, each voter $i$ whose report is in $B$, as well as the voter whose report is in neither $A$ nor $B$ has demand $1/\pi > 1/y$. The total demand is thus greater than $(a+1) \cdot 1/y > x$. Moreover, at all prices $\pi>y$, all voters $i$ whose report is not in $B$ have demand $D_i(\pi)=0$, and all voters $i$ whose report is in $B$ have demand $D_i(\pi) \le 1/\pi<1/y$. The total demand is thus less than $a\cdot 1/y<x$. Therefore, $y$ is the unique clearing price.
\end{proof}

The ``market'' connection to independent markets is now clear: For each alternative $j$, we set up a market in which we sell an amount $x$ of a good; this amount is the same across markets. Voter $i\in[n]$ has value $\phat_{i,j}$ for the good sold in market $j$, and has a budget of 1 in each market. The markets are ``independent'' because, while each voter is engaged in every market, the budget of 1 for each market can only be used to buy the good sold in that market. Using Lemma~\ref{thm:market-median}, we can derive the market clearing prices in each of these markets. If we write $t = 1/x$, then these prices correspond exactly to the output of $\mathcal{A}^{\IM}$ with the phantoms as placed at time $t$. Changing the phantom placement by varying $t$ to normalize the output is equivalent to varying the amount $x$ of the good sold in each market until the clearing prices across markets sum to 1. While we prevent phantoms from moving above 1 in the definition of independent markets---complying with Definition~\ref{def:moving-phantom}---the exact positions of these phantoms do not affect the clearing price since all reports are at most 1.

Returning to Example~\ref{ex:im}, we can verify the outcome using the market interpretation, by setting the quantity of goods to be sold in each market to $x^*=1/t^*=4.5$. In the market corresponding to alternative 1, the market clears at price $\pi_1=\frac{1}{3}$, at which price voter 3 demands $1/\pi_1=3$ goods, voter 1 demands nothing, and the demand of voter 2 can be set to $D_2(\frac{1}{3})=1.5$, which lies between 0 and $1/v_i=3$. Supply therefore equals demand.
It can be checked that the market prices also match the independent markets outcome for alternatives 2 and 3.

An interesting property of independent markets is that normalization can always be achieved with $x^* \ge n$, that is, a supply of at least $n$ units of each good.
\begin{proposition}
	\label{prop:uniform-overnormalizes}
Suppose we run independent markets with supply $x \le n$. Then the clearing prices sum to at least 1. 
\end{proposition}
\begin{proof}
	Let $\pi_1, \dots, \pi_m$ be the clearing prices for supply $x$.
	We may assume that for each agent $i \in [n]$, there exists a project $j$ for which $\pi_j < \hat p_{i,j}$, since otherwise we have $\sum_{j \in [m]} \pi_j \ge \sum_{j \in [m]} \hat p_{i,j} = 1$ and we are done. Thus, for each agent there exists at least one market where the agent spends all  available funds of 1 (by definition of demand). Since each agent spends at least 1, total spending across all markets is at least $n$.
	Now, the cost of purchasing all available goods is $x \cdot \sum_{j \in [m]}\pi_j$. Since the markets clear, this is the overall amount of money that the agents spend. Hence $x \cdot \sum_{j \in [m]}\pi_j \ge n$. Since $x \le n$, we obtain $\sum_{j \in [m]}\pi_j \ge 1$, as desired.
\end{proof}
 In the moving phantom interpretation, Proposition~\ref{prop:uniform-overnormalizes} means that $t^* \le \frac1n$ will always suffice for achieving normalization, so that the phantom movement will never need to continue past the uniform phantom placement $\mathcal F^{\IM}(\frac1n) = (1, \frac{n-1}{n}, \dots, \frac1n, 0)$. 
 In turn, this shows that if we used the uniform phantom mechanism on each project independently (as discussed in Example~\ref{ex:uniform-not-normalized}) the outcome will always sum to at least 1. The independent markets mechanism always returns a distribution $\mathbf p$ where each coordinate is at most the result of taking uniform phantoms in each project independently.

The market system we have described yields an incentive-compatible aggregator, since it corresponds to a moving phantom mechanism. There are other market-based aggregation mechanisms described in the literature, most famously the parimutuel consensus mechanism of \citet{EG59}. That mechanism differs from ours in that voters have only a single budget of 1 which they can use in all of the markets.
(The supply of goods can be fixed at $x=n$, which guarantees that prices are normalized, because total spending is fixed.)
For the case $m=2$, it does not matter whether markets are independent or not, and our mechanism is equivalent to the one of \citet{EG59}. It follows that the parimutuel consensus mechanism is incentive compatible for $m=2$ (in our $\ell_1$ sense). However, for $m\ge 3$, the mechanism is manipulable,\footnote{Let $\mathbf{p}_1=(0,0.5,0.5)$, $\mathbf{p}_2=(0.5,0.5,0)$. Parimutuel consensus yields prices $(1/3,1/3,1/3)$, at distance $2/3$ from $\mathbf{p}_1$. If voter 1 instead reports $\mathbf{\phat}_1=(0,0,1)$, the price vector is $(0.25,0.25,0.5)$, at distance $0.5$ from $\mathbf{p}_1$. This is an example of an agent manipulating prices in a competitive market, which has been studied in more general settings \citep[e.g.,][]{roberts1976incentives}.}
 and hence cannot be represented as a moving phantom mechanism.
We point the reader to the work of \citet{GGP18} for a detailed overview of other settings in which market mechanisms have been used in the context of public decision making.

\subsection{Voting Game Interpretation}

We have seen descriptions of $\mathcal{A}^{\IM}$ as a moving phantom mechanism and as clearing prices of a market system. We next give a game-theoretic description: the independent markets mechanism can be seen as the unique Nash equilibrium outcome of a voting game inspired by range voting~\citep{S00}. The game works as follows. Each voter receives one ``credit'' per alternative, and chooses how much of that credit to place on the alternative. That is, each voter chooses a vector $\mathbf{s}_i \in [0,1]^m$. The outcome of the game is the division $\mathbf p$ where $p_j$ is proportional to $\sum_{i \in [n]} s_{i,j}$, the number of credits spent on alternative $j$. Voters choose their spending so as to obtain an outcome that is as close to their ideal point $\mathbf\phat_i$ as possible, according to $\ell_1$ distance.

\begin{theorem}
The voting game defined above has a unique outcome $\mathbf p$ that can be obtained in Nash equilibrium, and it is equal to the outcome of the independent markets mechanism.
\end{theorem}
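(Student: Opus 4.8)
The plan is to prove the two inclusions separately: (a) some Nash equilibrium produces the output of $\mathcal{A}^{\IM}$, and (b) every Nash equilibrium produces exactly that output. Fix the profile $\mathbf{P}$, and for a strategy profile $(\mathbf{s}_i)_{i\in[n]}$ write $T_{-i,j}=\sum_{i'\neq i}s_{i',j}$, $T_{-i}=\sum_j T_{-i,j}$, and $R=T_{-i}+\lvert\mathbf{s}_i\rvert_1$, so the outcome on alternative $j$ is $p_j=(T_{-i,j}+s_{i,j})/R$. The only computation used repeatedly is that raising $s_{i,j}$ by a small $\epsilon$ moves the outcome by $\tfrac{\epsilon}{R}(e_j-\mathbf{p})+o(\epsilon)$ (toward vertex $j$), lowering $s_{i,j}$ moves it toward $\mathbf{p}-e_j$, and simultaneously raising $s_{i,j}$ and lowering $s_{i,j'}$ by $\epsilon$ changes only coordinates $j$ (up by $\epsilon/R$) and $j'$ (down by $\epsilon/R$).

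For direction (b), I would first establish a best-response structure lemma: in any Nash equilibrium with outcome $\mathbf{p}$, every voter $i$ has $s_{i,j}=1$ whenever $\hat p_{i,j}>p_j$ and $s_{i,j}=0$ whenever $\hat p_{i,j}<p_j$. Suppose not, and let $J_+=\{j:\hat p_{i,j}>p_j\}$, $J_-=\{j:\hat p_{i,j}<p_j\}$, $J_0$ the rest; since $\mathbf{p}\ne\hat p_i$, $J_+$ and $J_-$ are nonempty. If some $j_+\in J_+$ had $s_{i,j_+}<1$ and some $j_-\in J_-$ had $s_{i,j_-}>0$, the swap move strictly lowers $d(\mathbf{p},\hat p_i)$ by $2\epsilon/R$, a contradiction; hence all of $J_+$ is at $1$ or all of $J_-$ is at $0$. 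In the first case every $j\in J_+$ satisfies $p_j=(T_{-i,j}+1)/R>0$, so if some $j_-\in J_-$ still had $s_{i,j_-}>0$, lowering it changes $d(\mathbf{p},\hat p_i)$ by $-\tfrac{2\epsilon}{R}\sum_{j\in J_+}p_j<0$ (the $J_0$-coordinates only help), again a contradiction; symmetrically in the second case, raising an under-funded $j_+\in J_+$ changes $d(\mathbf{p},\hat p_i)$ by $-\tfrac{2\epsilon}{R}\sum_{j\in J_-}p_j<0$. So both $J_+$ is at $1$ and $J_-$ is at $0$. Summing spending over all voters then yields, for every $j$, $\#\{i:\hat p_{i,j}>p_j\}\le p_jS\le\#\{i:\hat p_{i,j}\ge p_j\}$ where $S=\sum_i\lvert\mathbf{s}_i\rvert_1$ and $\sum_j p_j=1$. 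The argument in the proof of Lemma~\ref{thm:market-median} shows these inequalities force $p_j=\med(0,1/S,2/S,\dots,n/S,\hat p_{1,j},\dots,\hat p_{n,j})$ for each $j$; since $\sum_j p_j=1$, this exhibits $\mathbf{p}$ as the output of $\mathcal{A}^{\IM}$ at the snapshot corresponding to supply $S$, i.e.\ $\mathbf{p}=\mathcal{A}^{\IM}(\mathbf{P})$.

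For direction (a), let $\mathbf{p}^*=\mathcal{A}^{\IM}(\mathbf{P})$ and let $x^*$ be the common supply for which the independent-markets clearing prices equal $\mathbf{p}^*$; define strategies by $s_{i,j}=1$ if $\hat p_{i,j}>p^*_j$, $s_{i,j}=0$ if $\hat p_{i,j}<p^*_j$, and for voters indifferent in market $j$ choose $s_{i,j}\in[0,1]$ so that $\sum_i s_{i,j}=x^*p^*_j$ (possible because that market clears). Then the total spend is $x^*$ and the outcome is $\mathbf{p}^*$. To verify this is a Nash equilibrium, fix $i$ with $\mathbf{p}^*\ne\hat p_i$ (if $\mathbf{p}^*=\hat p_i$, voter $i$ is already at their ideal) and note $T_{-i}>0$: some $j$ has $\hat p_{i,j}<p^*_j$, hence $p^*_j>0$, and since voter $i$ contributes $0$ there, $T_{-i,j}=x^*p^*_j>0$. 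Let $J_+,J_-,J_0$ be defined relative to $\mathbf{p}^*$. For any deviation yielding outcome $\mathbf{q}$ with $R'=T_{-i}+\lvert\mathbf{s}_i'\rvert_1$, pick the subgradient $g$ of $d(\cdot,\hat p_i)$ at $\mathbf{p}^*$ with $g_j=\operatorname{sign}(p^*_j-\hat p_{i,j})$ off $J_0$ and $g_j=\operatorname{sign}(q_j-p^*_j)$ on $J_0$; convexity of $\ell_1$ gives $d(\mathbf{q},\hat p_i)-d(\mathbf{p}^*,\hat p_i)\ge\langle g,\mathbf{q}-\mathbf{p}^*\rangle$, and using $\sum_j(q_j-p^*_j)=0$ this inner product equals both $2\sum_{j\in J_+}(p^*_j-q_j)$ plus nonnegative $J_0$-terms and $2\sum_{j\in J_-}(q_j-p^*_j)$ plus nonnegative $J_0$-terms. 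Since $s_{i,j}=1$ on $J_+$ gives $\sum_{J_+}q_j\le\tfrac1{R'}\sum_{J_+}(T_{-i,j}+1)=\tfrac{x^*}{R'}\sum_{J_+}p^*_j$, the first form is $\ge0$ when $R'\ge x^*$; since $s_{i,j}=0$ on $J_-$ gives $\sum_{J_-}q_j\ge\tfrac{x^*}{R'}\sum_{J_-}p^*_j$, the second form is $\ge0$ when $R'\le x^*$. Either way $d(\mathbf{q},\hat p_i)\ge d(\mathbf{p}^*,\hat p_i)$, so no deviation is profitable.

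The main obstacle is the equilibrium verification in (a): the deviation outcome $\mathbf{q}$ can move $\mathbf{p}^*$ either toward or away from $\hat p_i$ coordinate by coordinate, and a single one-sided estimate does not suffice — the trick is to split on whether the total spend $R'$ after deviation is above or below $x^*$, so that in each regime exactly one of the boundary conditions $s_{i,j}\in\{0,1\}$ (on $J_+$ or on $J_-$) becomes the useful one. A secondary point, in both directions, is verifying that the denominators $R$, $R'$, and $T_{-i}$ are strictly positive, so the outcome map and its local behavior are well-defined; this is exactly where one uses that every under-funded coordinate of voter $i$ must receive positive weight from the other voters.
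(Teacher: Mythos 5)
Your proof is correct, and while its skeleton matches the paper's---the same best-response structure lemma ($s_{i,j}=1$ on undervalued coordinates, $s_{i,j}=0$ on overvalued ones, proved by essentially the same local perturbation calculus) and the same market-derived strategy profile for existence---both verification steps are carried out by genuinely different arguments. For uniqueness, the paper posits a second equilibrium aggregate, locates coordinates where it sits above and below the independent markets outcome, and derives contradictory inequalities about total spending; you instead sum the structure lemma over voters to get $\#\{i:\phat_{i,j}>p_j\}\le p_jS\le \#\{i:\phat_{i,j}\ge p_j\}$ and observe that these clearing inequalities pin $p_j$ down as the generalized median with uniform phantoms $0,1/S,\dots,n/S$, so that normalization identifies any equilibrium outcome with $\mathcal{A}^{\IM}(\mathbf P)$ at the snapshot $t^*=1/S$. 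This proves uniqueness and the identification with the mechanism in one stroke, at the modest cost of needing the (true and easily checked) converse direction of Lemma~\ref{thm:market-median}---that the demand--supply inequalities force the median---plus the observation that capping phantoms at $1$ is immaterial since all reports lie in $[0,1]$. For existence, the paper argues that any better response must reduce total spending and that the reduction can only come from weakly under-demanded coordinates; your subgradient inequality combined with the case split on whether the post-deviation total spend $R'$ lies above or below $x^*$ is a cleaner global argument that handles arbitrary deviations uniformly, and you correctly flag the one place where care is needed (strict positivity of $T_{-i}$ and hence of the denominators). The only blemish is the parenthetical ``the $J_0$-coordinates only help'' in the structure lemma: those coordinates are in fact pushed away from the voter's peak and hurt, but your net formula $-\tfrac{2\epsilon}{R}\sum_{j\in J_+}p_j$ already accounts for them and is correct, so nothing breaks.
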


The idea of the proof is to set $\mathbf{s}_{i,j}$ equal to the amount that agent $i$ spends in the market for alternative $j$, under the market setup that we described earlier. We show that when every agent casts ``vote'' $\mathbf{s}_i$, the system is at (its unique) equilibrium, and that the spending is proportional to the independent markets outcome.

\begin{proof}
	Let $\mathbf{s}_{-i}$ denote the spending profile of all voters other than $i$, and let $\mathbf{q}$ denote the aggregate division in which the weight on an alternative is proportional to the number of credits spent on that alternative. Suppose that $\mathbf{s}_i$ is a best response for voter $i$. We first show that for every alternative $j$, $q_j<p_{i,j}$ implies $s_{i,j}=1$ and $q_j>p_{i,j}$ implies $s_{i,j}=0$ (we only prove the former, but the latter follows from a symmetric argument).
That is, voters will always prefer to increase their spending on alternatives that they consider ``undervalued'', and decrease their spending on ``overvalued'' alternatives.
	
	To prove this, assume for a contradiction that $q_j < p_{i,j}$ and $s_{i,j}<1$. 
	Since $\mathbf{q}$ and $\mathbf{p}_i$ are both normalized, there must exist some alternative $j'$ for which $q_{j'}>p_{i,j'}$. 
	By marginally increasing $s_{i,j}$ while holding $s_{i,\hat{j}}$ constant for all $\hat{j} \neq j$, voter $i$ can move the aggregate division from $\mathbf{q}$ to $\tilde{\mathbf{q}}$, where $\tilde{q}_{j} = q_j+\epsilon < p_{i,j}$, and $\tilde{q}_{\hat{j}} = q_{\hat{j}} - \epsilon_{\hat{j}}$ for all $\hat{j} \neq j$, with $\sum_{\hat{j} \neq j} \epsilon_{\hat{j}} = \epsilon$ and $\tilde{q}_{j'}>p_{i,j'}$. This implies that
	\begin{align*}
d(\mathbf{q},\mathbf{p}_i) - d(\tilde{\mathbf{q}},\mathbf{p}_i) 
&= |q_j - p_{i,j}| - |\tilde{q}_j-p_{i,j}| + |q_{j'} - p_{i,j'}| - |\tilde{q}_{j'}-p_{i,j'}| + \sum_{\hat{j} \neq j,j'} (|q_{\hat{j}} - p_{i,\hat{j}}| - |\tilde{q}_{\hat{j}}-p_{i,\hat{j}}|)\\
&\ge \epsilon + \epsilon_{j'} - \sum_{\hat{j} \neq j,j'} \epsilon_{\hat{j}}
= 2 \epsilon_{j'} >0
\end{align*}
and hence $d(\mathbf{q},\mathbf{p}_i) > d(\tilde{\mathbf{q}},\mathbf{p}_i)$, contradicting that $\mathbf{s}_i$ is a best response for $i$.

	Next, we show that there exists an equilibrium of the voting game that produces the same outcome as independent markets. To this end, consider the market interpretation. For every alternative, some amount $x^*=1/t^*$  of a divisible good is sold to voters with a budget of 1 each. The amount of money that each voter spends in each market defines some spending profile $\mathbf{s}$.\footnote{This spending profile is not unique, since for voters with $p_{i,j}$ equal to the clearing price for alternative $j$, there is some flexibility as to which voters pay for and get assigned goods, and which do not.} Importantly, whenever $p_{i,j}>\mathcal{A}^{\IM}(\mathbf{P})_j$, voter $i$ spends their full budget on alternative $j$ (i.e. $s_{i,j}=1$), and whenever $p_{i,j}<\mathcal{A}^{\IM}(\mathbf{P})_j$, voter $i$ spends nothing on alternative $j$ (i.e. $s_{i,j}=0$). Further, the total amount spent by all voters on alternative $j$ is $x^* \cdot \mathcal{A}^{\IM}(\mathbf{P})_j$ -- the amount of goods sold multiplied by the price per unit -- which is proportional to the aggregate division $\mathcal{A}^{\IM}(\mathbf{P})$. Therefore, the induced spending profile does produce the same aggregate division as independent markets when aggregated under the rules of the voting game. We now show that this spending profile is a Nash equilibrium.
	
	To do so, consider the spending vector $\mathbf{s}_i$ of some voter $i$, and suppose they have a better response $\tilde{\mathbf{s}}_i$. Denote the aggregate division when $i$ spends $\mathbf{s}_i$ by $\mathbf{q}$ (this division is $\mathcal{A}^{\IM}(\mathbf{P})$, but we use $\mathbf{q}$ for short), and the aggregate division when $i$ spends $\tilde{\mathbf{s}}_i$ by $\tilde{\mathbf{q}}$. 
	Since $d(\tilde{\mathbf{q}}, \mathbf{p})<d(\mathbf{q}, \mathbf{p})$, there must exist some alternative $j$ with $q_j < p_{i,j}$ and $q_j < \tilde{q}_j$, or with $q_j > p_{i,j}$ and $q_j > \tilde{q}_j$. 
	Suppose without loss of generality that the former case holds (if not, a very similar argument applies). Because $q_j < p_{i,j}$, we have $s_{i,j}=1$ by definition of $\mathbf{s}$. 
	Therefore, the only way for 
	\[ q_j = \frac{s_{i,j}+\sum_{i' \neq i}s_{i',j}}{\sum_{j'} s_{i,j'}+\sum_{i' \neq i} \sum_{j'} s_{i',j'}} < \frac{\tilde{s}_{i,j}+\sum_{i' \neq i}s_{i',j}}{\sum_{j'} \tilde{s}_{i,j'}+\sum_{i' \neq i} \sum_{j'} s_{i',j'}} = \tilde{q}_j \]
	is if we have $\sum_{j'} s_{i,j'} > \sum_{j'} \tilde{s}_{i,j'}$. But, because $s_{i,j'}=0$ for all $j'$ with $q_{j'} > p_{i,j'}$, 
	it must be the case that $q_{j'} \le p_{i,j'}$ for all alternatives with $s_{i,j'}>\tilde{s}_{i,j'}$.
	Further, any alternative with $\tilde{q}_{j'}<q_{j'}$ must have $\tilde{s}_{i,j'}<s_{i,j'}$; this follows from the fact that $\sum_{j'} s_{i,j'} > \sum_{j'} \tilde{s}_{i,j'}$.
	Putting the previous two observations together, we have that $q_{j'} \le p_{i,j'}$  for all alternatives with $\tilde{q}_{j'}<q_{j'}$. Therefore
	\begin{align*}
		d(\mathbf{q},\mathbf{p}_i)-d(\tilde{\mathbf{q}},\mathbf{p}_i) &= \sum_{\tilde{q}_{j'}<q_{j'}} (|q_{j'}-p_{i,j'}| - |\tilde{q}_{j'}-p_{i,j'}|) + \sum_{\tilde{q}_{j'} \ge q_{j'}} (|q_{j'}-p_{i,j'}| - |\tilde{q}_{j'}-p_{i,j'}|)\\
		&= \sum_{\tilde{q}_{j'}<q_{j'}} (\tilde{q}_{j'}-q_{j'}) + \sum_{\tilde{q}_{j'} \ge q_{j'}} (|q_{j'}-p_{i,j'}| - |\tilde{q}_{j'}-p_{i,j'}|)\\
		&\le \sum_{\tilde{q}_{j'}<q_{j'}} (\tilde{q}_{j'}-q_{j'}) + \sum_{\tilde{q}_{j'} \ge q_{j'}} (\tilde{q}_{j'}-q_{j'}) =0
		\end{align*}
	which contradicts that $\tilde{\mathbf{s}}_i$ is a better response than $\mathbf{s}_i$ for voter $i$.
	
	Next we show that the voting game has a \emph{unique} equilibrium aggregate division.\footnote{In contrast, the exact spending profile is clearly not unique. For instance, in an instance with only a single voter, that voter can enforce their belief exactly as long as they spend on each outcome in proportion to their belief, with no restriction on the magnitude of their spending.} 
	We know that the independent markets aggregate $\mathbf{q}$ is an equilibrium with spending profile $\mathbf{s}$. For contradiction, suppose that there is some other equilibrium aggregate $\tilde{\mathbf{q}}$ with spending profile $\tilde{\mathbf{s}}$. Then there is an alternative $j$ for which $\tilde{q}_j>q_j$ and an alternative $j'$ for which $\tilde{q}_{j'}<q_{j'}$. 
	Thus, there are weakly fewer voters with $p_{i,j} \ge \tilde{q}_j$ than with $p_{i,j} > q_j$. 
	From our discussion of best responses in the first part of the proof, we know that only voters with $p_{i,j} \ge \tilde{q}_j$ can have $\tilde{s}_{i,j}>0$, and that all voters with $p_{i,j} > q_j$ have $s_{i,j}=1$. It follows that $\sum_i s_{i,j} \ge \sum_i \tilde{s}_{i,j}$. 
	But, because $\tilde{q}_j>q_j$, global spending across all alternatives must be lower under $\tilde{\mathbf{s}}$ than under $\mathbf{s}$, that is, $\sum_i \sum_{\hat{j}} s_{i,\hat{j}} > \sum_i \sum_{\hat{j}} \tilde{s}_{i,\hat{j}}$. 
	By an identical argument, $\sum_i s_{i,j'} \le \sum_i \tilde{s}_{i,j'}$, implying that $\sum_i \sum_{\hat{j}} s_{i,\hat{j}} < \sum_i \sum_{\hat{j}} \tilde{s}_{i,\hat{j}}$, contradicting the previous sentence. Hence, the equilibrium aggregate division $\mathbf{q} = \smash{\mathcal{A}^{\IM}(\mathbf{P})}$ is unique.
\end{proof}

To illustrate the voting game interpretation, consider again Example~\ref{ex:im}. Define spending vectors $\mathbf{s}_1=(0,1,1), \mathbf{s}_2=(0.5,1,0), \mathbf{s}_3=(1,0,0)$. These vectors sum to $(1.5,2,1)$, which is proportional to $\mathcal{A}^{\IM}(\mathbf{P})=(\frac{1}{3},\frac{4}{9},\frac{2}{9})$. It can be checked that no voter wishes to unilaterally change their spending vector. For example, if voter 1 increases the amount she spends on alternative 1, she will increase the first coordinate of the outcome while decreasing the second and third coordinates, all of which increase the distance between the outcome and her preferred budget.

\subsection{Properties of Independent Markets}

Using the different interpretations of $\mathcal{A}^{\IM}$, we will now establish that this mechanism satisfies several appealing axioms from social choice theory.
As we mentioned at the beginning of Section~\ref{sec:im}, since $\mathcal{A}^{\IM}$ includes the uniform phantom placement, it satisfies proportionality (Definition~\ref{def:proportional}).\footnote{This condition is not necessary; a necessary and sufficient condition is that for all $i_1, i_2, \ldots, i_m$ with $\sum_{j=1}^m i_j =n$, there exists $t$ for which $f_{n-i_j}(t)=i_j/n$ for each $j \in [m]$.}
In fact, we can prove that independent markets satisfies proportionality for the more general class of \emph{$k$-approval profiles}.

\begin{definition}
	\label{def:k-approval}
	Let $k \ge 1$. 
	A budget division $\mathbf p_i$ is \emph{$k$-uniform} if it is uniform over $k$ entries.
	We say that voter $i$ approves a project $j$ if $p_{i,j} = \frac1k$.
	A mechanism satisfies \emph{$k$-proportionality} if for every profile consisting only of $k$-uniform votes, for each project $j$, the outcome spending on $j$ is proportional to the number of voters who approve $j$.
\end{definition}

For each $k\ge 1$, the phantom system of the independent markets mechanism places phantoms uniformly between $0$ and $\frac1k$ at some point. Our proof shows that any moving phantom mechanism with this property satisfies $k$-proportionality.

\begin{theorem}
	\label{thm:k-approval}
	$\mathcal{A}^{\IM}$ satisfies $k$-proportionality for all $k \le m$. In fact, $\mathcal{A}^\mathcal{F}$ is $k$-proportional, $2 \le k \le m-2$, if and only if there exists $t \in [0,1]$ with $\mathcal F(t) = (f_0(t), \frac{n-1}{n} \cdot \frac1k, \dots,  \frac{1}{n} \cdot \frac1k, 0)$, where $f_0(t) \ge \frac1k$.
\end{theorem}
\begin{proof}
	Let $k \le m$.
	Suppose that $n_j$ voters approve alternative $j$. Thus $\sum_{j=1}^m n_j = nk$, since each voter approves $k$ alternatives. When $\mathcal F(t) = (f_0(t), \frac{n-1}{n} \cdot \frac1k, \dots,  \frac{1}{n} \cdot \frac1k, 0)$ and $f_0(t) \ge \frac1k$, we have
	\[ \text{med} (\mathcal F(t), \mathbf{P}_{i \in [n], j}) = \frac{n_j}{n} \cdot \frac{1}{k}. \]
	Note that $\sum_{j=1}^m \text{med} (\mathcal F(t), \mathbf{P}_{i \in [n], j}) = 1$, so the output is normalized for phantom placement $\mathcal F(t)$. Since the output on each alternative is proportional to the number of voters approving that alternative, $\mathcal{A}^\mathcal{F}$ is $k$-proportional.
	
	Conversely, suppose there is no such $t$. Then there must exist phantoms $f_i$ and $f_h$, $i \in \{ 1, \ldots, n \}$ and $h \in \{ 0, \ldots, n-1 \}$, and $t' \in [0,1]$ for which $f_i(t') >\frac{n-i}{n} \cdot \frac1k$ and $f_h(t') < \frac{n-h}{n} \cdot \frac1k$. Assume without loss of generality that $i>h$.
	Consider a profile in which voters $1, \ldots, n-i$ approve alternatives $1, \ldots, k$, voters $n-i+1, \ldots, n-h$ approve alternatives $2, \ldots, k+1$, and voters $n-h+1, \ldots, n$ approve alternatives $3, \ldots, k+2$. This profile is well-defined since each voter approves $2 \le k \le m-2$ alternatives. 
	Further, $n-i$ voters approve alternative 1 and $n-h$ voters approve alternative 2. 
	Therefore, we have $\text{med} (\mathcal F(t'), \mathbf{P}_{i \in [n], 1}) = \min ( f_i(t'), \frac1k ) > \frac{n-i}{n} \cdot \frac1k$ and $\text{med} (\mathcal F(t'), \mathbf{P}_{i \in [n], 2}) = f_h(t') < \frac{n-h}{n} \cdot \frac1k$. Since $\text{med} (\mathcal F(t), \mathbf{P}_{i \in [n], j})$ is increasing in $t$ for every alternative $j$, there does not exist a $t$ for which $\text{med} (\mathcal F(t), \mathbf{P}_{i \in [n], 1}) =  \frac{n-i}{n} \cdot \frac1k$ and $\text{med} (\mathcal F(t), \mathbf{P}_{i \in [n], 2}) = \frac{n-h}{n} \cdot \frac1k$, as required by $k$-proportionality.
\end{proof}

An equivalent definition of $k$-proportionality says that on profiles consisting only of $k$-uniform votes, the mechanism should equal the mean of the reported distributions. This suggests a further generalization of this proportionality notion: on profiles where each voter $i$ submits a $k_i$-uniform distribution (where the $k_i$ may vary), the mechanism should equal the mean. However, it is easy to see that this property is incompatible with incentive compatibility: Let $n = m = 3$, and consider the two profiles $\mathbf P = ((1,0,0), (0.5, 0.5, 0), (0.5, 0, 0.5))$ and $\mathbf P' = ((1,0,0),(0,1,0),(0.5,0,0.5))$. If mechanism $\mathcal A$ were to satisfy the most general proportionality requirement, this would imply $\mathcal A(\mathbf P) = (\frac23, \frac16, \frac16)$ and $\mathcal A(\mathbf P') = (\frac12, \frac13, \frac16)$. Thus, $\mathcal A$ is not incentive compatible, because in profile $\mathbf P$, voter $2$ has an incentive to misreport to obtain profile $\mathbf P'$. Note that the outcome at $\mathbf P'$ is closer to voter $2$'s ideal point than the outcome at $\mathbf P$ for every project. Thus, this general proportionality criterion is incompatible even for much weaker notions of incentive compatibility.

Independent markets satisfies several other desirable properties. By using our equivalent descriptions of this mechanism, it is possible to see this easily. For example, using the voting game interpretation, it becomes clear that independent markets satisfies participation.

\begin{theorem}
	$\mathcal{A}^{\IM}$ satisfies participation.
\end{theorem}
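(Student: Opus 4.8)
The plan is to reuse the voting-game interpretation, which gives a clean handle on how adding voter $i$ changes the outcome. Write $\mathbf{q} = \mathcal{A}^{\IM}(\mathbf{P_{-i}})$ for the aggregate without voter $i$, and $\mathbf{q'} = \mathcal{A}^{\IM}(\mathbf{p_i},\mathbf{P_{-i}})$ for the aggregate when voter $i$ joins and reports truthfully. Both are the unique Nash equilibrium outcomes of the corresponding voting games. The goal is to show $d(\mathbf{q'},\mathbf{p_i}) \le d(\mathbf{q},\mathbf{p_i})$, i.e. that joining the game never moves the outcome farther (in $\ell_1$) from $i$'s ideal point.

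The key structural fact I would extract is a monotone comparison between the two equilibria analogous to the reasoning already used in the uniqueness argument. Specifically, I claim that for every alternative $j$ with $q_j > p_{i,j}$ we have $q'_j \le q_j$, and for every $j$ with $q_j < p_{i,j}$ we have $q'_j \ge q_j$. Intuitively, when voter $i$ joins she spends her full credit on every ``undervalued'' alternative (those with $q_j < p_{i,j}$) and nothing on ``overvalued'' ones (those with $q_j > p_{i,j}$), which can only push the aggregate toward $\mathbf{p_i}$ on those coordinates. To make this rigorous I would argue by contradiction: suppose some overvalued $j$ has $q'_j > q_j$; then, comparing the equilibrium spending profiles $\mathbf{s}$ (on $N\setminus\{i\}$, at equilibrium $\mathbf{q}$) and $\mathbf{s'}$ (on $N$, at equilibrium $\mathbf{q'}$), the other voters' spending on $j$ can only have weakly decreased (fewer of them have $p_{i',j}$ above the higher threshold $q'_j$), while $i$'s contribution $s'_{i,j} = 0$ since $j$ is overvalued for $i$ relative to $q$ — but I need this relative to $q'$, so a short monotonicity bootstrap is needed: establish the sign comparison on a ``first'' violating coordinate and propagate, exactly mirroring the argument in the uniqueness proof (comparing total spending on the two sides and deriving the contradiction $\sum_{i'}\sum_{\tilde{j}} s_{i',\tilde{j}}$ both $>$ and $<$ the primed sum). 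The symmetric case for undervalued $j$ is identical.

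Given the sign comparison, the conclusion is a routine $\ell_1$ computation: split the coordinates into those where $q_j \ge p_{i,j}$ (on which $p_{i,j} \le q'_j \le q_j$, so $|q'_j - p_{i,j}| \le |q_j - p_{i,j}|$) and those where $q_j < p_{i,j}$ (on which $q_j \le q'_j$; here $|q'_j - p_{i,j}| - |q_j - p_{i,j}| \le q'_j - q_j$ since moving right toward or past $p_{i,j}$ changes the distance by at most the step size). Summing, $d(\mathbf{q'},\mathbf{p_i}) - d(\mathbf{q},\mathbf{p_i}) \le \sum_{j:\,q_j<p_{i,j}}(q'_j - q_j) - \sum_{j:\,q_j\ge p_{i,j}}(q_j - q'_j)$. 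Since both $\mathbf{q}$ and $\mathbf{q'}$ are normalized, $\sum_j (q'_j - q_j) = 0$, so the first sum equals the second, giving $d(\mathbf{q'},\mathbf{p_i}) - d(\mathbf{q},\mathbf{p_i}) \le 0$, as desired.

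The main obstacle is the sign-comparison lemma between the $n$-voter and $(n{-}1)$-voter equilibria. The delicate point is that ``overvalued/undervalued'' is defined relative to $\mathbf{q}$, whereas voter $i$'s best-response behavior in the larger game is governed by $\mathbf{q'}$; one must show these classifications are consistent on the coordinates that matter, which is where the ``first violating coordinate'' bootstrap and the global-spending double-counting argument (borrowed almost verbatim from the uniqueness proof) do the work. Everything after that is bookkeeping with the triangle inequality and normalization.
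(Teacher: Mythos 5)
Your overall strategy---comparing the $(n-1)$-voter equilibrium $\mathbf{q}$ with the $n$-voter equilibrium $\mathbf{q'}$ coordinate by coordinate---is not the paper's, and as sketched it has two genuine gaps. First, your final $\ell_1$ bookkeeping needs more than the sign-comparison lemma you state. On a coordinate with $q_j \ge p_{i,j}$ you write ``$p_{i,j} \le q'_j \le q_j$,'' but your lemma only delivers $q'_j \le q_j$; the lower bound $q'_j \ge p_{i,j}$ (no overshoot past $i$'s peak) is exactly what makes the per-coordinate estimate $|q'_j - p_{i,j}| - |q_j - p_{i,j}| \le q'_j - q_j$ true, and it fails without it (e.g.\ $p_{i,j}=0.4$, $q_j=0.5$, $q'_j=0.1$ gives a left side of $+0.2$ against a right side of $-0.4$). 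Nothing in your argument controls how far an ``overvalued'' price can drop once the market supply increases, so this is an unproved and load-bearing claim. Second, the uniqueness argument does not transfer ``almost verbatim.'' It works in one direction: if an overvalued coordinate's price rose, voter $i$ and the old spenders contribute weakly less there, so total spending strictly fell, while any coordinate whose price fell must have attracted weakly more spending, forcing total spending to strictly rise---contradiction. But in the direction you also need (an undervalued coordinate's price falling), the balancing coordinate whose price rose may now receive up to one fresh unit from voter $i$ herself, so you cannot conclude that spending there, or in total, decreased; the double-counting does not close. Both gaps seem to require a genuinely global argument using normalization across all coordinates, not the local ``fewer demanders at a higher price'' reasoning.

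For comparison, the paper's proof avoids the cross-game comparison entirely: it first observes that $\mathcal{A}^{\IM}(\mathcal{A}^{\IM}(\mathbf{P_{-i}}),\mathbf{P_{-i}}) = \mathcal{A}^{\IM}(\mathbf{P_{-i}})$, because in the voting-game interpretation a new voter whose peak equals the current aggregate can spend nothing, leaving the old equilibrium spending profile an equilibrium of the enlarged game (and the equilibrium aggregate is unique). Participation then follows in one line from incentive compatibility applied within the $n$-voter mechanism: truth-telling is at least as good for $i$ as reporting $\mathcal{A}^{\IM}(\mathbf{P_{-i}})$, and the latter yields $\mathcal{A}^{\IM}(\mathbf{P_{-i}})$ itself. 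If you want to salvage your route, the statement you actually need is $q'_j \ge \min(q_j, p_{i,j})$ for every $j$, which together with $\sum_j q'_j = \sum_j q_j = 1$ suffices; but proving that appears to be harder than the theorem itself.
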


\begin{proof}
	Under $\mathcal{A}^{\IM}$, adding a new voter who agrees with the aggregate division does not change the aggregate: $\mathcal{A}^{\IM}(\mathcal{A}^{\IM}(\mathbf{P}_{-i}),\mathbf{P}_{-i}) = \mathcal{A}^{\IM}(\mathbf{P}_{-i})$. This can be seen by noting that for the new voters, it is an equilibrium spending strategy to spend nothing, thus not changing the spending profile.
Then by incentive compatibility, $d(\mathcal{A}^{\IM}(\mathbf{p}_i,\mathbf{P}_{-i}), \mathbf{p}_i) \le d(\mathcal{A}^{\IM}(\mathcal{A}^{\IM}(\mathbf{P}_{-i}),\mathbf{P}_{-i}),\mathbf{p}_i) = d(\mathcal{A}^{\IM}(\mathbf{P}_{-i}),\mathbf{p}_i)$.
\end{proof}

Using the market interpretation, we can verify that independent markets satisfies reinforcement.

\begin{theorem}
	$\mathcal{A}^{\IM}$ satisfies reinforcement.
\end{theorem}

\begin{proof}
	Let $\mathbf{P}$ and $\mathbf{R}$ be profiles with $\mathcal{A}^{\IM}(\mathbf{P})=\mathcal{A}^{\IM}(\mathbf{R})=\mathbf{q}$. We utilize the market interpretation of independent markets. Suppose that for profile $\mathbf{P}$, market prices are normalized when $x_P^*$ goods are sold and for profile $\mathbf{R}$, market prices are normalized when $x_R^*$ goods are sold. Now consider the combined profile $\mathbf{P} \cup \mathbf{R}$ when $x^*_P+x^*_R$ goods are sold. For every alternative $j$, and every price $\pi \in [0,1]$, the total demand is equal to the total demand in profile $P$ at price $\pi$ plus the total demand in profile $R$ at price $\pi$, since a voter's demand depends only on their valuation and the price $\pi$. Likewise, the total supply is, by definition, the sum of the supplies in each individual instance. Therefore, the market clearing prices in the combined instance when $x^*_P+x^*_R$ goods are sold are equal to the (normalized) aggregate vector $\mathbf{q}$.
\end{proof}

 \section{Pareto Optimality and Social Welfare}
\label{sec:utilitarian}

We now study moving phantom mechanisms from an efficiency perspective, focusing on Pareto optimality. We will discover that there is only a single moving phantom mechanism that is Pareto optimal. This mechanism can be interpreted as maximizing utilitarian welfare. On the other hand, the independent markets mechanism is not Pareto optimal.
If voter 1 reports $(0.8, 0.2, 0)$ and voter 2 reports $(0.8, 0, 0.2)$, then independent markets chooses $(0.6, 0.2, 0.2)$, which is dominated by $(0.8, 0.1, 0.1)$. On this example, independent markets even fails to be \emph{range respecting}, which requires that $\min_{i\in [n]} \phat_{i,j} \le \mathcal{A}(\mathbf{P})_j \le \max_{i\in [n]} \phat_{i,j}$ for all $j\in[m]$. 

One can show that a moving phantom mechanism $\mathcal{A}^{\mathcal{F}}$ is range respecting if $f_0(t) = 1$ and $f_n(t) = 0$ for all $t\in[0,1]$ except for an initial period where phantom $0$ moves from $0$ to $1$ while all other phantoms remain at 0, and a period at the end where phantom $n$ moves from $0$ to $1$ while all other phantoms are at 1.\footnote{This mirrors a result in Section~\ref{sec:binary}; if the outer two phantoms are at 0 and 1, the $n-1$ remaining phantoms cannot outweigh the $n$ voter reports.}
Using this condition, it is easy to construct moving phantom mechanisms that are both proportional and range respecting. 
For instance, one can modify $\mathcal{A}^{\IM}$ so that the highest phantom is always at 1. This makes the mechanism range respecting while preserving $k$-proportionality.
Alternatively, we can take the phantom system
	\[ f_k(t) = \max\{0, 1 - (1-t)(n - k)\} \quad \text{for each $k \in \{ 0, \ldots, n \}$}, \]
that is equivalent to placing phantoms uniformly between $y$ and $1$ (rather than between $0$ and $x$, as independent markets does).\footnote{We thank an anonymous reviewer for suggesting this mechanism.}
Using Proposition~\ref{prop:uniform-overnormalizes}, it follows that $y \le 0$, and so the resulting mechanism is range respecting.
This mechanism is proportional, but it fails $k$-proportionality.

While many moving phantom mechanisms are range respecting, it is much more difficult to find a mechanism in this class which is Pareto optimal. Usually, it is possible to construct a profile in which the mechanism outputs a vector $\mathbf p$ all of whose entries are phantom reports, and then a Pareto improvement can be obtained by perturbing this vector in the directions where the majority of voter reports lie. Such a perturbation is not possible if the phantoms lie at 0 or 1, which turns out to be the only escape. As we prove below, no mechanism $\mathcal{A}^{\mathcal{F}}$ can be Pareto optimal if there is any time point $t$ when two phantoms are both strictly between 0 and 1.

This condition is extremely restrictive, and a moment's thought reveals that there is only one phantom system which avoids having two interior phantoms: All phantoms start at 0, and then, one by one, one of the phantoms is moved to 1. At each $t$, at most one phantom lies strictly between 0 and 1 while travelling. We call this phantom system $\mathcal{F}^*$. It can be formalized as
\[ f_k(t) = \begin{cases}
0 & 0 \le t \le \frac{k}{n+1},\\
t(n+1)-k & \frac{k}{n+1} <t< \frac{k+1}{n+1}, \\
1 & \frac{k+1}{n+1} \le t \le 1.
\end{cases}
\]

Below we will show that $\mathcal{A}^{\mathcal{F}^*}$ precisely corresponds to the budget aggregation mechanism that maximizes voter welfare, breaking ties in favor of the maximum entropy division.  It will immediately follow that $\mathcal{A}^{\mathcal{F}^*}$ is indeed Pareto optimal.
Combined with Theorem~\ref{thm:po} below, which shows that all other moving phantom mechanisms are Pareto inefficient, this implies that the welfare-maximizing mechanism is the unique Pareto-optimal moving phantom mechanism.

\subsection{Characterizing Pareto Optimality}

The proof of Theorem~\ref{thm:po} shows, by induction, that each phantom needs to move all the way to~1 before the next phantom can leave its position at~0. In case this does not happen, based on the approximate phantom positions, we construct a profile where the mechanism is Pareto inefficient. These constructions are of two kinds: an easier case when the interior phantoms are low (lying below $\frac{1}{n(n-1)}$), and a more involved case when one of the phantoms has moved higher. In both cases, our constructions utilize two types of alternatives. More voters report ``high'' probabilities on alternatives of the first type than on alternatives of the second type. The constructions work so that if two phantoms simultaneously take values between 0 and 1, then the mechanism outputs middling values on all alternatives. Social welfare can be improved by increasing the output on alternatives of the first type, and decreasing the output on alternatives of the second type. By incorporating enough symmetry between voters, we guarantee that social welfare gains are shared equally, so obtain a Pareto improvement.

\begin{theorem}
	\label{thm:po}
	A moving phantom mechanism $\mathcal{A}^{\mathcal{F}}$ cannot be Pareto optimal for any $m \ge n^2$ unless $\mathcal{A}^{\mathcal{F}} = \mathcal{A}^{\mathcal{F}^*}$.
\end{theorem}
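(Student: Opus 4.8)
The plan is to show the contrapositive: if $\mathcal{A}^{\mathcal{F}} \neq \mathcal{A}^{\mathcal{F}^*}$, then there is a time $t_0$ at which two phantoms $f_a(t_0)$ and $f_b(t_0)$ (with $a < b$) are both strictly interior, i.e.\ in $(0,1)$, and from this I would construct a profile on which $\mathcal{A}^{\mathcal{F}}$ is Pareto-dominated. The first step is the easy reduction: the phantom system $\mathcal{F}^*$ is, up to reparametrization of $t$, the \emph{only} legal phantom system in which at most one $f_k$ is ever strictly interior (since the $f_k$ are weakly decreasing in $k$, continuous, go from $0$ to $1$, and nested; if no two are ever simultaneously interior, then each must finish its climb to $1$ before the next leaves $0$). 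So assume such a $t_0$, $a$, $b$ exist. I would then do a case split exactly as the excerpt hints: an ``easy'' case where the interior phantom values are small (below $\tfrac{1}{n(n-1)}$), and a ``hard'' case where at least one interior phantom has climbed higher.

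The core construction uses two \emph{types} of alternatives. Because $m \ge n^2$ is huge, I have room to make the profile highly symmetric across voters while still controlling the median computation. Concretely, I would design the profile so that on each alternative, the multiset of the $n$ voter reports together with the phantoms $f_0(t_0),\dots,f_n(t_0)$ has its median pinned to an interior phantom value — this is where $a<b$ with both interior matters, since it gives a range of $t$ for which the relevant order statistic is a phantom rather than a voter report, hence the mechanism's normalization time is forced to land in that range and the output on these alternatives is ``middling'' (equal to $f_a(t_0)$ or $f_b(t_0)$), neither close to the bulk of the voters' reports nor respecting where the majority mass sits. Type-1 alternatives are those where slightly more voters report a ``high'' value; type-2 where slightly fewer do. I would then exhibit the Pareto improvement $\mathbf q$: push the output up on type-1 alternatives by $\delta$ (total), down on type-2 alternatives by $\delta$ (total), keeping normalization. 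For each voter, the net change in $\ell_1$ distance is a weighted count of how many type-1 vs type-2 alternatives that voter reported ``high'' on; by building the profile with enough combinatorial symmetry (each voter sees the same balance of high-type-1 and high-type-2 alternatives, which is where the $m \ge n^2$ slack is spent), every voter strictly benefits, contradicting Pareto optimality. The two cases differ only in the exact numerical gap used to pin the median and in verifying the perturbation stays feasible (the ``easy'' case needs the interior values below $\tfrac{1}{n(n-1)}$ so that a uniform-ish perturbation across $\Theta(n)$ alternatives doesn't overshoot $0$ or $1$; the ``hard'' case needs a more careful, non-uniform perturbation).

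I would verify the Pareto improvement by the same kind of telescoping $\ell_1$ bookkeeping used in the proof of the voting-game theorem: write $d(\mathcal{A}^{\mathcal{F}}(\mathbf P),\mathbf p_i) - d(\mathbf q, \mathbf p_i)$ as a sum over alternatives of $|q^{\mathcal F}_j - p_{i,j}| - |q_j - p_{i,j}|$, and use the fact that on type-1 alternatives voter $i$'s report lies above the middling value (so moving up helps) on exactly the alternatives where $i$ is a ``high'' reporter, and symmetrically for type-2, with the symmetric design ensuring the helpful terms strictly outweigh the harmful ones for every $i$. Finally I would note that the theorem's hypothesis $m \ge n^2$ is only used to fit the symmetric block design and could presumably be weakened, but $\mathcal{F}^*$ being the lone survivor is the real content.

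The main obstacle is the ``hard'' case: once an interior phantom has climbed above $\tfrac{1}{n(n-1)}$, a naive uniform perturbation of the output across many alternatives can violate feasibility ($p_j \in [0,1]$, $\sum p_j = 1$) or fail to pin the median at a phantom, so the construction must carefully choose how many alternatives of each type to use, at what heights to place the voter reports, and a non-uniform perturbation vector $\mathbf q - \mathcal{A}^{\mathcal{F}}(\mathbf P)$ — all while keeping the per-voter symmetry that forces a \emph{Pareto} (not merely welfare) improvement. Getting the inequalities to be strict simultaneously for all $n$ voters, rather than just on average, is the delicate part and is exactly what the combinatorial block structure is engineered to achieve.
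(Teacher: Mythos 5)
Your plan follows the paper's proof almost exactly: reduce to the existence of a time at which two phantoms are simultaneously interior, split on whether the interior values lie below $\tfrac{1}{n(n-1)}$, build a profile with two types of alternatives (more voters reporting ``high'' on the first type than on the second) using the $m \ge n^2$ room for a symmetric block design, and exhibit a Pareto improvement by shifting mass from type-2 to type-1 alternatives with telescoping $\ell_1$ bookkeeping. So the approach is the right one. But the proposal stops exactly where the content of the theorem begins: no profile is actually written down, and the verification that the mechanism's output is forced into the claimed ``middling'' ranges---the only nontrivial step---is asserted rather than carried out. In particular, your description of the output as ``pinned to an interior phantom value'' on every alternative is what happens only in the easy case. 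In the hard case (an interior phantom above $\tfrac{1}{n(n-1)}$), the paper's construction is genuinely different: it takes $\bar t = \sup\{t : f_{k+1}(t)=0\}$, places voter reports at carefully chosen values near $f_k(\bar t)$ and $\tfrac{1-f_k(\bar t)}{n(n-1)-1}$ so that the output on type-1 alternatives is \emph{sandwiched between voter reports} rather than equal to a phantom, and uses only the qualitative fact that $f_{k+1}(t)>0$ for $t>\bar t$ to conclude that the output on type-2 alternatives is strictly positive, which is what makes the downward perturbation there feasible. Your sketch does not contain this idea and you flag the hard case as unresolved; without it, the argument only excludes phantom systems whose simultaneous interior phantoms stay below $\tfrac{1}{n(n-1)}$.

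Two smaller points. First, you require that \emph{every} voter strictly benefit from the perturbation; this is stronger than needed, and the paper does not achieve it in the hard case (there, one voter strictly gains and the rest are exactly indifferent, which still suffices for a Pareto improvement). Insisting on uniform strict gains would make the hard-case construction harder than necessary. Second, the boundary phantoms $f_0$ and $f_n$ need separate treatment in your reduction: two simultaneously interior phantoms among $\{f_0,f_1\}$ or $\{f_{n-1},f_n\}$ do not feed into the block construction; instead one argues that the positions of $f_0$ and $f_n$ in the relevant range cannot affect the output of any Pareto-optimal mechanism, so the mechanism is equal \emph{as a function} to one in which they move first and last---which is what the conclusion $\mathcal{A}^{\mathcal{F}} = \mathcal{A}^{\mathcal{F}^*}$ (an equality of mechanisms, not of phantom systems) actually requires.
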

\begin{proof}
	Consider a Pareto-optimal moving phantom mechanism $\mathcal{A}^\mathcal{F}$. We begin by showing that if there exists a $t$ with $f_0(t)<1$ and $f_1(t)>0$ then $\mathcal{A}^\mathcal{F}$ can be equivalently expressed as a moving phantom mechanism that does not have such a $t$. 
	It will be helpful to order the entries $\{ p_{i,j} \}$ from largest to smallest on every alternative $j$. We denote the relabeled entries $\bar{p}_{1,j} \ge \ldots \ge \bar{p}_{n,j}$.
	Suppose that $\mathcal{A}^\mathcal{F}(\mathbf{P})_j = \text{med}(f_0(t^*), \ldots, f_{n}(t^*), \phat_{1,j}, \ldots, \phat_{n,j}) <\bar{p}_{n,j}$ for some $j$. Then $\mathcal{A}^\mathcal{F}$ is not Pareto optimal, because increasing $\mathcal{A}^\mathcal{F}(\mathbf{P})_j$ and decreasing $\mathcal{A}^\mathcal{F}(\mathbf{P})_{j'}$ for any coordinate with $\mathcal{A}^\mathcal{F}(\mathbf{P})_{j'}>\bar{p}_{n,j'}$ is a Pareto improvement (such a coordinate must exist, because $\sum_j \bar{p}_{n,j} \le 1$). Therefore, for all preference profiles $\mathbf{P}$, $\mathcal{A}^\mathcal{F}(\mathbf{P})_j = \text{med}(f_0(t^*), \ldots, f_{n}(t^*), \phat_{1,j}, \ldots, \phat_{n,j}) \ge \bar{p}_{n,j}$. This implies that $f_0(t^*) \ge \bar{p}_{n,j}$ and so the exact position of $f_0(t^*)$ has no effect on the mechanism. It would be equivalent to move phantom $f_0$ to position 1 before moving phantom $f_1$.

	A very similar argument can be used to show that there cannot exist a $t$ for which $f_{n-1}(t)<1$ and $f_n(t)>0$. 
	For $n=1$ and $n=2$, this already pins down a unique Pareto-optimal moving phantom mechanism, so we assume $n \ge 3$ and focus on the intermediate phantoms for the rest of the proof.
	Suppose that there exists some index $1 \le k \le n-2$ for which $f_k(t)<1$ and $f_{k+1}(t)>0$ for some $t$. If no such $k$ exists, then $\mathcal{F} = \mathcal{F}^*$ and we are done. The remainder of the proof proceeds in two parts.

	\paragraph{Part 1.}
	We show that for any $t$ for which $f_k(t)<\frac{1}{n(n-1)}$, it must be the case that $f_{k+1}(t)=0$. 
	Suppose otherwise for contradiction, that there exists a $t'$ for which $f_k(t')<\frac{1}{n(n-1)}$ but $f_{k+1}(t')>0$.
	We define an instance with $m=n^2$ alternatives, which are divided into two kinds: we write $X = \{1, \dots, n(n-1)\}$ and $Y = \{n(n-1) + 1, ..., n^2\}$. Thus, $|X| = n(n-1)$ and $|Y| = n$. 
	For each voter $i \in [n]$, we also introduce the sets
	\begin{align*}
		X_i &= \{ ( (i-1)(n-1)+1, \ldots, (i-1)(n-1)+n^2-n-kn+k ) \text{ mod } n(n-1) \} \subseteq X, \\
		Y_i &= \{ n(n-1) + (i \text{ mod } n), n(n-1) +(i+1 \text{ mod } n), \ldots, \\
		&\qquad\qquad\qquad\qquad \dots, n(n-1)+ (i+n-k-2 \text{ mod } n) \} \subseteq Y.
	\end{align*}
	We have $|X_i|=n^2-n-kn+k=(n-1)(n-k)$ and $|Y_i|=n-k-1$.
	Voter $i \in [n]$ reports 
	\[
	\phat_{i,j} = 
	\begin{cases}
		\frac{1}{n^2-kn-1} &  j \in X_i \cup Y_i, \\
		0 & \text{otherwise.}
	\end{cases}
	\]
	Note that $\sum_{j=1}^m \phat_{i,j} = 1$ because $|X_i \cup Y_i|=n^2-kn-1$.
	Further, since each voter makes $|X_i|=(n-1)(n-k)$ non-zero reports among alternatives in  $X$, and the constructed votes treat all of these alternatives symmetrically, each alternative receives $n-k$ non-zero reports. Similarly, each alternative in $Y$ receives $n-k-1$ non-zero reports. 
	Therefore, $\text{med}(\mathcal{F}(t),\mathbf{P}_{i \in [n],j})=\min (f_k(t), \frac{1}{n^2-kn-1})$ for $j \in X$ and $\text{med}(\mathcal{F}(t),\mathbf{P}_{i \in [n],j})=\min (f_{k+1}(t), \frac{1}{n^2-kn-1})$ for $j \in Y$.
	
	By assumption, $f_k(t')<\frac{1}{n(n-1)}<\frac{1}{n^2-kn-1}$ and $f_{k+1}(t')>0$. 
	It will therefore 
	necessarily be the case that
	\begin{alignat*}{3}
		\mathcal{A}^f(\mathbf{P})_j &= f_k(t^*) < \tfrac{1}{n(n-1)} 
		&&\quad \text{for all $j \in X$, and} \\
		\mathcal{A}^f(\mathbf{P})_j &= f_{k+1}(t^*)>0
		&&\quad \text{for all $j \in Y$}.
	\end{alignat*}
	To see this, suppose that $f_k(t^*) \ge \tfrac{1}{n(n-1)}$, which implies that $t^* > t'$. In turn, this implies that $\mathcal{A}^f(\mathbf{P})_j = \min (f_{k}(t^*), \frac{1}{n^2-kn-1}) \ge \tfrac{1}{n(n-1)}$ for $j \in X$ and that $\mathcal{A}^f(\mathbf{P})_j \ge \text{med}(\mathcal{F}(t'),\mathbf{P}_{i \in [n],j}) = \min (f_{k+1}(t'), \frac{1}{n^2-kn-1}) >0$ for $j \in Y$, violating normality of $\mathcal{A}^f(\mathbf{P})$. 
	Conversely, if $f_{k+1}(t^*)=0$, then $t^*<t'$, which implies that $\mathcal{A}^f(\mathbf{P})_j = \min (f_{k}(t^*), \frac{1}{n^2-kn-1}) \le \min (f_{k}(t'), \frac{1}{n^2-kn-1}) < \tfrac{1}{n(n-1)}$ for $j \in X$, again violating normality.
	
	But this is not Pareto optimal. Consider, for some small enough $\epsilon > 0$, increasing $\mathcal{A}^\mathcal{F}(\mathbf{P})_j$ by $\epsilon$ on alternatives $j \in X$, and decreasing $\mathcal{A}^\mathcal{F}(\mathbf{P})_j$ by $\epsilon(n-1)$ on alternatives $j \in Y$. For every voter $i$, there are $|X_i|=(n-k)(n-1)$ alternatives on which the aggregate moves $\epsilon$ closer to $i$'s report, $|X|-|X_i|=k(n-1)$ alternatives for which the aggregate moves $\epsilon$ farther from $i$'s report, $|Y_i|=n-k-1$ alternatives on which the aggregate moves $\epsilon(n-1)$ farther from $i$'s report, and $|Y|-|Y_i|=k+1$ alternatives for which the aggregate moves $\epsilon(n-1)$ closer to $i$'s report. Summing these up, the change moves the aggregate 
	\[
	\left((n-k)(n-1) - k(n-1) - (n-1)(n-k-1) + (n-1)(k+1)\right)\epsilon = (2n-2)\epsilon > 0
	\]
	closer to $\mathbf{\phat}_i$ in $\ell_1$ distance. Since the new aggregate is closer to every voter, it is a Pareto improvement.

	\paragraph{Part 2.}
	We now show that if $f_k(t)<1$ then it must be the case that $f_{k+1}(t)=0$. For contradiction suppose otherwise. Let $\bar{t} = \max \{ t: f_{k+1}(t)=0 \}$ be the final snapshot at which $f_{k+1}(t)=0$. By assumption, $f_k(\bar{t})<1$. We define an instance similar to that above, defining $X$, $Y$, $X_i$, $Y_i$ as before. For each $i \in [n]$, we further partition $X_i$ into two pieces, letting
	\begin{align*}
		X_i^1&= \{ ( (i-1)(n-1)+1, \ldots, (i-1)(n-1)+n-1 ) \text{ mod } n(n-1) \} \subseteq X_i, \\
		X_i^2&=X_i \backslash X_i^1 = \{ ( (i-1)(n-1)+n, \ldots, (i-1)(n-1)+n^2-n-kn+k ) \text{ mod } n(n-1) \}.
	\end{align*}
	Note that $|X_i^1| = n-1$ and $|X_i^2| = (n-k-1)(n-1)$.
	
	Let $\delta >0$ (we will determine the exact value of $\delta$ later). We now specify the voter reports on $X$, handling the reports on alternatives in $Y$ later. For every voter $i$ and each $j \in X$, we let
	\[
		\phat_{i,j} = \begin{cases}
			\frac{1-f_k(\bar{t})-\delta}{n(n-1)-1} & j \in X_i^1 \setminus \{1\}, \\
			\frac{1-f_k(\bar{t})}{n(n-1)-1} & j \in X_i^2 \setminus \{1\}, \\
			f_k(\bar{t})+\delta & j \in (X_i^1 \cup X_i^2) \cap \{1\}, \\
			0 & j \in X \setminus (X_i^1 \cup X_i^2).
		\end{cases}
	\]
	Because we know that $f_k(\bar{t}) \ge \frac{1}{n(n-1)}$ from Part 1 of the proof, we have that 
	\begin{equation}
		\label{eq:new-value-higher}
		\frac{1-f_k(\bar{t})-\delta}{n(n-1)-1} < \frac{1-f_k(\bar{t})}{n(n-1)-1} \le \frac{1-\frac{1}{n(n-1)}}{n(n-1)-1} = \frac{1}{n(n-1)} \le f_k(\bar{t}) < f_k(\bar{t})+\delta,
	\end{equation}
	which in particular implies that we have distributed more probability mass on $X$ for voters $i$ with $1 \in X_i^1 \cup X_i^2$ than for other voters. To set $\delta$, we need to choose some value that guarantees $\sum_{j\in X} \phat_{i,j}<1$ for all $i$. By \eqref{eq:new-value-higher}, it is sufficient to set $\delta$ so that
	\begin{equation}
	\label{eq:delta-def}
	\textstyle
	f_k(\bar{t})+\delta + \left((n-k)(n-1)-1\right)\frac{1-f_k(\bar{t})}{n(n-1)-1} <1.
	\end{equation}
	To see that such a value of $\delta$ exists, note that the left-hand side of  \eqref{eq:delta-def} is continuous in $\delta$ and  takes a value strictly less than 1 when $\delta=0$:
	\begin{align*}
	\textstyle
	f_k(\bar{t}) + ((n-k)(n-1)-1)\frac{1-f_k(\bar{t})}{n(n-1)-1} &< f_k(\bar{t}) + (n(n-1)-1)\textstyle\frac{1-f_k(\bar{t})}{n(n-1)-1}\\
	&= f_k(\bar{t}) + 1 - f_k(\bar{t}) = 1.
	\end{align*}

	We now specify voter reports on the alternatives in $Y$: For each $i$, we distribute the remaining positive mass $1 - \sum_{j\in X} p_{i,j}$ evenly among $j \in Y_i$, and set $\phat_{i,j} = 0$ for $j \in Y \setminus Y_i$.
	
	As before, each alternative $j \in X$ receives $n-k$ non-zero reports. For $j=1$, all of these reports are $\phat_{i,1}=f_k(\bar{t})+\delta$. For $j \ge 2$, there exist $n-k-1$ voters with $\phat_{i,j}=\frac{1-f_k(\bar{t})}{n(n-1)-1}$ and a single voter with $\phat_{i,j}=\frac{1-f_k(\bar{t})-\delta}{n(n-1)-1}$ (since the sets $X_i^1$ are pairwise disjoint). For every alternative $j \in Y$, there are $n-k-1$ non-zero reports. It follows that
	\begin{alignat*}{3}
		\text{med}(\mathcal{F}(\bar{t}),\mathbf{P}_{i \in [n],1})&=f_k(\bar{t}), \\
		\text{med}(\mathcal{F}(\bar{t}),\mathbf{P}_{i \in [n],j})&=\tfrac{1-f_k(\bar{t})-\delta}{n(n-1)-1} &\quad& \text{for all } j \in X \setminus \{1\}, \\
		\text{med}(\mathcal{F}(\bar{t}),\mathbf{P}_{i \in [n],j})&=f_{k+1}(\bar{t})=0 &\quad& \text{for all } j \in Y.
	\end{alignat*}
	The sum of these generalized medians is $1 - \delta$. Therefore $t$ needs to increase to achieve normalization; that is, $t^*>\bar{t}$. By the definition of $\bar{t}$, for any $t > \bar{t}$, we have that $f_{k+1}(t) >0$, and therefore 
	\[
	\mathcal{A}^\mathcal{F}(\mathbf{P})_j=\text{med}(\mathcal{F}(t^*),\mathbf{P}_{i \in [n],j})>0 \quad \text{for all } j \in Y.
	\]
	From this, we can deduce that $f_k(t^*)<f_k(\bar{t})+\delta$, since otherwise we have
	\begin{alignat*}{3}
		\mathcal{A}^\mathcal{F}(\mathbf{P})_1 &=\text{med}(\mathcal{F}(t^*),\mathbf{P}_{i \in [n],1}) \ge f_k(\bar{t})+\delta, \\
		\mathcal{A}^\mathcal{F}(\mathbf{P})_j &\ge \text{med}(\mathcal{F}(\bar{t}),\mathbf{P}_{i \in [n],j})=\tfrac{1-f_k(\bar{t})-\delta}{n(n-1)-1} &\quad&\text{for all }j \in X \setminus \{1\}, \\
		\mathcal{A}^\mathcal{F}(\mathbf{P})_j&>0 &\quad&\text{for all } j \in Y,
	\end{alignat*}
	which would yield $\sum_{j=1}^m\mathcal{A}^\mathcal{F}(\mathbf{P})_j>1$. Similary, we must have $f_{k+1}(t^*)<\frac{1-f_k(\bar{t})}{n(n-1)-1}$, since otherwise
	\begin{alignat*}{3}
		\mathcal{A}^\mathcal{F}(\mathbf{P})_1 &\ge \text{med}(\mathcal{F}(\bar{t}),\mathbf{P}_{i \in [n],1})=f_k(\bar{t}) \\
		\mathcal{A}^\mathcal{F}(\mathbf{P})_j &= \text{med}(\mathcal{F}(t^*),\mathbf{P}_{i \in [n],j})=\tfrac{1-f_k(\bar{t})}{n(n-1)-1} &\quad&\text{for all }j \in X \setminus \{1\}, \\
		\mathcal{A}^\mathcal{F}(\mathbf{P})_j&>0 &\quad&\text{for all } j \in Y,
	\end{alignat*}
	again yielding $\sum_{j=1}^m\mathcal{A}^\mathcal{F}(\mathbf{P})_j>1$.
	
	To summarize, we are guaranteed that
	\begin{alignat*}{3}
		f_k(\bar{t}) &\le \mathcal{A}^\mathcal{F}(\mathbf{P})_1<f_k(\bar{t})+\delta \\
		\tfrac{1-f_k(\bar{t})-\delta}{n(n-1)-1} &\le \mathcal{A}^\mathcal{F}(\mathbf{P})_j<\tfrac{1-f_k(\bar{t})}{n(n-1)-1}
		&\quad& \text{for all $j \in X \setminus \{1\}$}, \\
		0 &<\mathcal{A}^\mathcal{F}(\mathbf{P})_j && \text{for all $j \in Y$.}
	\end{alignat*}

	Now we can define a Pareto improvement to $\mathcal{A}^\mathcal{F}(\mathbf{P})$ of the same form as previously. For some small enough $\epsilon$, increase the aggregate by $\epsilon$ on alternatives $j \in X$, and decrease the aggregate by $\epsilon(n-1)$ on alternatives $j \in Y$. 
	
	For each voter $i$ with $\phat_{i,1} = f_k(\bar{t})+\delta$, the new aggregate is $\epsilon$ better than $\mathcal{A}^\mathcal{F}(\mathbf{P})$ on alternative 1, $\epsilon$ worse on alternatives $j \in X_i^1 \backslash \{ 1 \}$, with $\phat_{i,j} = \frac{1-f_k(\bar{t})-\delta}{n(n-1)-1}$, $\epsilon$ better on alternatives $j \in X_i^2$, with $\phat_{i,j} = \frac{1-f_k(\bar{t})}{n(n-1)-1}$, and $\epsilon$ worse on alternatives $j \in X \backslash X_i$, with $\phat_{i,j} = 0$. On alternatives in $Y$, there are $k+1$ alternatives on which the aggregate moves $\epsilon(n-1)$ closer to voter $i$'s report (on alternatives for which voter $i$ reports $p_{i,j}=0$), and $n-k-1$ alternatives for which the aggregate may or may not have moved $\epsilon(n-1)$ farther from voter $i$'s report. Summing these up, we get that the aggregate has moved at least
	\[
	(1-(n-2)+(n-k-1)(n-1)-k(n-1)-(n-1)(n-k-1)+(n-1)(k+1))\epsilon = 2\epsilon > 0
	\]
	closer to $\mathbf{p}_1$ in $\ell_1$ distance.

	For each voter $i$ with $\phat_{i,1}=0$, the new aggregate is $\epsilon$ worse than $\mathcal{A}^\mathcal{F}(\mathbf{P})$ on alternatives $j \in X_i^1$, with $p_{i,j} = \frac{1-f_k(\bar{t})-\delta}{n(n-1)-1}$, and $\epsilon$ better on alternatives $j \in X_i^2$, with $p_{i,j} = \frac{1-f_k(\bar{t})}{n(n-1)-1}$, and $\epsilon$ worse on alternatives $j \in X \backslash X_i$, with $p_{i,j} = 0$. On alternatives in $Y$, there are $k+1$ alternatives on which the aggregate moves $\epsilon(n-1)$ closer to voter $i$'s report (on alternatives for which voter $i$ reports $p_{i,j}=0$), and $n-k-1$ alternatives for which the aggregate may or may not have moved $\epsilon(n-1)$ farther from voter $i$'s report. Summing these up, we get that the net change of $\ell_1$ distance from $\mathbf{p}_i$ from moving the aggregate is
	\[ (-1-(n-2)+(n-k-1)(n-1)-k(n-1)-(n-1)(n-k-1)+(n-1)(k+1))\epsilon = 0.\]
	Thus, the aggregate has moved strictly closer to some voters and weakly closer to all voters, and hence the change forms a Pareto improvement, contradicting our assumption that $\mathcal{A}^\mathcal{F}$  is Pareto optimal.

	Finally, our construction uses $m = n^2$ alternatives, but we can extend it to larger $m$ by adding dummy alternatives that no voter puts any weight on.
\end{proof}

While the proof of this result requires a large enough $m$, note that moving phantom mechanisms apply to problems with an arbitrary number of projects, so no mechanism other than $\mathcal{A}^{\mathcal{F}^*}$ can provide Pareto optimality for all cases. 

Above, we constructed a mechanism that is range respecting and proportional, though it is not Pareto optimal.
Using our characterization, it is immediate that no Pareto-optimal moving phantom mechanism can be proportional:
On profiles consisting of single-minded voters, $\mathcal{A}^{\mathcal{F}^*}$ selects a division that is also single-minded, following the plurality. Hence, it is not proportional.

\begin{corollary}
For $m\ge n^2$, no moving phantom mechanism is proportional and Pareto optimal.
\end{corollary}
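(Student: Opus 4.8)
The plan is to derive the corollary by combining Theorem~\ref{thm:po} with a direct verification that $\mathcal{A}^{\mathcal{F}^*}$ violates proportionality. Theorem~\ref{thm:po}, together with the fact (established via the welfare characterization below) that $\mathcal{A}^{\mathcal{F}^*}$ \emph{is} Pareto-optimal, shows that for $m \ge n^2$ the mechanism $\mathcal{A}^{\mathcal{F}^*}$ is the \emph{only} moving phantom mechanism that can be Pareto-optimal. Hence it suffices to exhibit a single-minded preference profile on which $\mathcal{A}^{\mathcal{F}^*}$ fails to split the resource in proportion to the number of supporters of each alternative.

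For the witnessing profile I would take $n-1$ voters reporting the unit vector on alternative $1$ and a single voter reporting the unit vector on alternative $2$ (assuming $n \ge 3$, so that this profile is not unanimous); Definition~\ref{def:proportional} then requires the output to be $\big(\tfrac{n-1}{n}, \tfrac{1}{n}, 0, \dots, 0\big)$. To compute $\mathcal{A}^{\mathcal{F}^*}$ on this profile, I would track the coordinate-wise medians as $t$ increases. The structural observation driving the computation is that under $\mathcal{F}^*$ every phantom sits at $0$ or $1$ except for at most one phantom in transit; and on a single-minded profile, the coordinate-$j$ median stays at $0$ until enough phantoms have crossed to $1$, then coincides with the in-transit phantom's height during a single ``activation window,'' and thereafter equals $1$. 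The coordinate of the plurality winner (here alternative $1$, with $n-1$ supporters) activates first, and because its lead is strict, its activation window is disjoint from those of all other alternatives; so throughout that window the sum of medians equals exactly the in-transit phantom's height, which increases continuously from $0$ to $1$. Normalization is therefore attained precisely at the end of this window, at which instant the winner's coordinate has median $1$ and every other coordinate has median $0$. Thus $\mathcal{A}^{\mathcal{F}^*}(\mathbf{P})$ is the unit vector on alternative $1$, which differs from $\big(\tfrac{n-1}{n}, \tfrac{1}{n}, 0, \dots, 0\big)$, so $\mathcal{A}^{\mathcal{F}^*}$ is not proportional, and the corollary follows.

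The only substantive work is the median bookkeeping: for the $\mathcal{F}^*$ placement one must carefully verify how the median of $n_j$ copies of $1$, $n - n_j$ copies of $0$, and the $n+1$ phantoms evolves with $t$, and---the one point where strictness of the plurality is essential---that the activation windows of distinct alternatives do not overlap when a single alternative has strictly the most supporters. Everything else (invoking Theorem~\ref{thm:po}, noting that $\sum_j n_j = n$ so the proportional target is well-defined, and the final comparison) is immediate. I would also note in passing that the construction genuinely needs $n \ge 3$: for $n = 2$ every single-minded profile is either unanimous or an exact two-way tie, and on both of these $\mathcal{A}^{\mathcal{F}^*}$ happens to return the proportional division.
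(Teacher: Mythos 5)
Your proof is correct and follows essentially the same route as the paper: Theorem~\ref{thm:po} reduces the corollary to showing that $\mathcal{A}^{\mathcal{F}^*}$ fails proportionality, and the paper disposes of this with the one-line observation that on single-minded profiles $\mathcal{A}^{\mathcal{F}^*}$ follows the plurality---your explicit $(n-1,1)$ profile and the median bookkeeping simply make that remark rigorous. Your closing caveat about $n=2$ is a genuine catch that the paper glosses over: since proportionality only constrains single-minded profiles, and for two voters every such profile is unanimous or an exact tie, $\mathcal{A}^{\mathcal{F}^*}$ is in fact both proportional and Pareto-optimal when $n=2$, so the corollary as stated implicitly requires $n\ge 3$.
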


\begin{figure*}[!t]
	\centering
			\begin{subfigure}[b]{0.3\textwidth}
		\begin{tikzpicture}
		[yscale=3, vote/.style={black!70,thick}]
\draw (0,0) -- (0,1); \draw (-0.05,1) -- (0.05,1); \draw (-0.05,0) -- (0.05,0); \node (y0) at (-0.2,0) {0};
		\node (y1) at (-0.2,1) {1};
		
\begin{scope}[xshift=-0.2cm]
		\fill[fill=black!10] (0.5,0) rectangle (1,1);
		\draw[black!60, thin, fill=white] (0.475,0.175) rectangle (1.025,0.225);
		\node (x1) at (0.75,-0.1) {$j_1$};
		\draw[vote] (0.5,0.3) -- (1,0.3);
		\draw[vote] (0.5,0.05) -- (1,0.05);
		\draw[vote] (0.5,0.45) -- (1,0.45);
		\draw[vote] (0.5,0.2) -- (1,0.2);
		\draw[vote] (0.5,0.1) -- (1,0.1);
\end{scope}
		
		\begin{scope}[xshift=0.6cm]
\fill[fill=black!10] (0.5,0) rectangle (1,1);
		\draw[black!60, thin, fill=white] (0.475,0.175) rectangle (1.025,0.225);
		\node (x1) at (0.75,-0.1) {$j_2$};
		\draw[vote] (0.5,0.5) -- (1,0.5);
		\draw[vote] (0.5,0.2) -- (1,0.2);
		\draw[vote] (0.5,0.05) -- (1,0.05);
		\draw[vote] (0.5,0.1) -- (1,0.1);
		\draw[vote] (0.5,0.8) -- (1,0.8);
\end{scope}
		
		\begin{scope}[xshift=1.4cm]
\fill[fill=black!10] (0.5,0) rectangle (1,1);
		\draw[black!60, thin, fill=white] (0.475,0.475) rectangle (1.025,0.525);
		\node (x1) at (0.75,-0.1) {$j_3$};
		\draw[vote] (0.5,0.2) -- (1,0.2);
		\draw[vote] (0.5,0.75) -- (1,0.75);
		\draw[vote] (0.5,0.5) -- (1,0.5);
		\draw[vote] (0.5,0.7) -- (1,0.7);
		\draw[vote] (0.5,0.1) -- (1,0.1);
\end{scope}
		
		\draw[red!85!black,very thick] (0.3,0) -- (2.4,0);
		\draw[red!85!black,very thick] (0.3,1) -- (2.4,1);
		
		\node[text width=1cm, font=\tiny, anchor=west] at (2.45,1) {$f_0, f_1, f_2$};		
		\node[text width=1cm, font=\tiny, anchor=west] at (2.45,0) {$f_3, f_4,f_5$};		
\end{tikzpicture}
\end{subfigure}
	\begin{subfigure}[b]{0.3\textwidth}
		\begin{tikzpicture}
		[yscale=3, vote/.style={black!70,thick}]
\draw (0,0) -- (0,1); \draw (-0.05,1) -- (0.05,1); \draw (-0.05,0) -- (0.05,0); \node (y0) at (-0.2,0) {0};
		\node (y1) at (-0.2,1) {1};
		
\begin{scope}[xshift=-0.2cm]
		\fill[fill=black!10] (0.5,0) rectangle (1,1);
		\draw[black!60, thin, fill=white] (0.475,0.225) rectangle (1.025,0.275);
		\node (x1) at (0.75,-0.1) {$j_1$};
		\draw[vote] (0.5,0.3) -- (1,0.3);
		\draw[vote] (0.5,0.05) -- (1,0.05);
		\draw[vote] (0.5,0.45) -- (1,0.45);
		\draw[vote] (0.5,0.2) -- (1,0.2);
		\draw[vote] (0.5,0.1) -- (1,0.1);
\end{scope}
		
		\begin{scope}[xshift=0.6cm]
\fill[fill=black!10] (0.5,0) rectangle (1,1);
		\draw[black!60, thin, fill=white] (0.475,0.225) rectangle (1.025,0.275);
		\node (x1) at (0.75,-0.1) {$j_2$};
		\draw[vote] (0.5,0.5) -- (1,0.5);
		\draw[vote] (0.5,0.2) -- (1,0.2);
		\draw[vote] (0.5,0.05) -- (1,0.05);
		\draw[vote] (0.5,0.1) -- (1,0.1);
		\draw[vote] (0.5,0.8) -- (1,0.8);
\end{scope}
		
		\begin{scope}[xshift=1.4cm]
\fill[fill=black!10] (0.5,0) rectangle (1,1);
		\draw[black!60, thin, fill=white] (0.475,0.475) rectangle (1.025,0.525);
		\node (x1) at (0.75,-0.1) {$j_3$};
		\draw[vote] (0.5,0.2) -- (1,0.2);
		\draw[vote] (0.5,0.75) -- (1,0.75);
		\draw[vote] (0.5,0.5) -- (1,0.5);
		\draw[vote] (0.5,0.7) -- (1,0.7);
		\draw[vote] (0.5,0.1) -- (1,0.1);
\end{scope}
		
		\draw[red!85!black,thick] (0.3,0.25) -- (2.4,0.25);
		\draw[red!85!black,very thick] (0.3,0.0) -- (2.4,0.0);
		\draw[red!85!black,very thick] (0.3,1) -- (2.4,1);
		
		\node[text width=1cm, font=\tiny, anchor=west] at (2.45,1) {$f_0, f_1, f_2$};		
		\node[text width=1cm, font=\tiny, anchor=west] at (2.45,0.25) {$f_3$};		
		\node[text width=1cm, font=\tiny, anchor=west] at (2.45,0) {$f_4,f_5$};		
\end{tikzpicture}
\end{subfigure}
	\begin{subfigure}[b]{0.3\textwidth}
		\begin{tikzpicture}
		[yscale=3, vote/.style={black!70,thick}]
\draw (0,0) -- (0,1); \draw (-0.05,1) -- (0.05,1); \draw (-0.05,0) -- (0.05,0); \node (y0) at (-0.2,0) {0};
		\node (y1) at (-0.2,1) {1};
		
\begin{scope}[xshift=-0.2cm]
		\fill[fill=black!10] (0.5,0) rectangle (1,1);
		\draw[black!60, thin, fill=white] (0.475,0.275) rectangle (1.025,0.325);
		\node (x1) at (0.75,-0.1) {$j_1$};
		\draw[vote] (0.5,0.3) -- (1,0.3);
		\draw[vote] (0.5,0.05) -- (1,0.05);
		\draw[vote] (0.5,0.45) -- (1,0.45);
		\draw[vote] (0.5,0.2) -- (1,0.2);
		\draw[vote] (0.5,0.1) -- (1,0.1);
\end{scope}
		
		\begin{scope}[xshift=0.6cm]
\fill[fill=black!10] (0.5,0) rectangle (1,1);
		\draw[black!60, thin, fill=white] (0.475,0.475) rectangle (1.025,0.525);
		\node (x1) at (0.75,-0.1) {$j_2$};
		\draw[vote] (0.5,0.5) -- (1,0.5);
		\draw[vote] (0.5,0.2) -- (1,0.2);
		\draw[vote] (0.5,0.05) -- (1,0.05);
		\draw[vote] (0.5,0.1) -- (1,0.1);
		\draw[vote] (0.5,0.8) -- (1,0.8);
\end{scope}
		
		\begin{scope}[xshift=1.4cm]
\fill[fill=black!10] (0.5,0) rectangle (1,1);
		\draw[black!60, thin, fill=white] (0.475,0.675) rectangle (1.025,0.725);
		\node (x1) at (0.75,-0.1) {$j_3$};
		\draw[vote] (0.5,0.2) -- (1,0.2);
		\draw[vote] (0.5,0.75) -- (1,0.75);
		\draw[vote] (0.5,0.5) -- (1,0.5);
		\draw[vote] (0.5,0.7) -- (1,0.7);
		\draw[vote] (0.5,0.1) -- (1,0.1);
\end{scope}
		
		\draw[red!85!black,very thick] (0.3,0.0) -- (2.4,0.0);
		\draw[red!85!black,very thick] (0.3,1) -- (2.4,1);
		
		\node[text width=1cm, font=\tiny, anchor=west] at (2.45,1) {$f_0, f_1, f_2, f_3$};		
		\node[text width=1cm, font=\tiny, anchor=west] at (2.45,0) {$f_4,f_5$};		
\end{tikzpicture}
\end{subfigure} \caption{Snapshots of the phantom system $\mathcal{F}^*$ with $t<t^*$ (left), $t=t^*$ (center), and $t>t^*$ (right) on an instance with $n=5$, $m=3$. 
}
\label{fig:utilitarian-visualization}
\end{figure*}
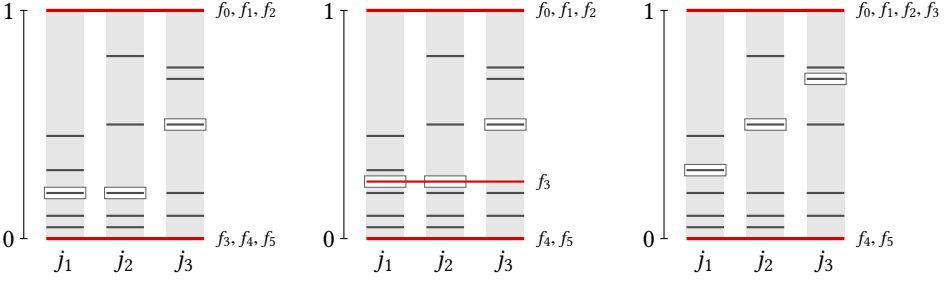

\subsection{Maximizing Utilitarian Social Welfare}

Having narrowed down the space of Pareto-optimal moving phantom mechanisms to at most one mechanism, let us examine the behavior of $\mathcal{A}^{\mathcal{F}^*}$ with the assistance of Figure~\ref{fig:utilitarian-visualization}, which takes the same form as Figure~\ref{fig:moving-phantom-example}. 
Recall from the proof of Theorem~\ref{thm:po} that $\bar{p}_{1,j} \ge \ldots \ge \bar{p}_{n,j}$ denote the entries $\{ p_{i,j} \}$, relabeled and ordered from largest to smallest.
At the snapshot of $\mathcal{F}^*$ for which $f_0(t)= \ldots = f_{k}(t)=1$ and $f_{k+1}(t)=\ldots =f_n(t)=0$, the generalized median selects the order statistic $\bar{p}_{n-k,j}$ for all $j$. We see this in Figure~\ref{fig:utilitarian-visualization} where, in the left image, $k=2$ and the generalized median is the $n-k=3\text{rd}$ highest report on each alternative, and in the right image $k=3$ and the $n-k=2\text{nd}$ highest reports are chosen.

We can think of $\mathcal{F}^*$ as partitioning the phantom ``movie'' into periods defined by which phantom is moving. Initially, all phantoms are at 0, and the generalized medians are $0$ for each $j \in [m]$. Then phantom $f_0$ moves to 1, and the generalized medians are $\bar{p}_{n,j}$. As phantom $f_k$ moves from 0 to 1, the generalized medians progress from $\bar{p}_{n-k+1,j}$ to $\bar{p}_{n-k,j}$, until all phantoms reach 1 and the generalized medians are all 1. By (a discrete analogue of) the intermediate value theorem, there must exist some value $I$ for which $\sum_{j \in [m]} \bar{p}_{I+1,j} \le 1$ and $\sum_{j \in [m]} \bar{p}_{I,j} \ge 1$, and this transition is made during the period in which phantom $n-I$ is moving. In Figure~\ref{fig:utilitarian-visualization}, we have $I=2$ because the sum of the third-highest entries is less than 1 (see the left image), while the sum of the second-highest entries is more than 1 (the right image). 

Normalization therefore occurs during the movement of phantom $f_{n-I}$, and the final value $\mathcal{A}^{\mathcal{F}^*}(\mathbf{P})_j$ lies in the interval $[\bar{p}_{I+1,j},\bar{p}_{I,j}]$. If $f_{n-I}(t^*)$ lies in this interval, then $\mathcal{A}^{\mathcal{F}^*}(\mathbf{P})_j=f_{n-I}(t^*)$, otherwise $\mathcal{A}^{\mathcal{F}^*}(\mathbf{P})_j$ is equal to the endpoint of the interval closest to $f_{n-I}(t^*)$. This is depicted in the center image of Figure~\ref{fig:utilitarian-visualization}, where $f_3(t^*)$ lies between the second and third-highest reports on the first two alternatives, but below the third-highest report on the third alternative. 

Finding the exact value of $f_{n-I}(t^*)$, and therefore the output $\mathcal{A}^{\mathcal{F}^*}(\mathbf{P})$, can be thought of as finding the ``most equal'' division, subject to interval constraints on each alternative. This problem has been studied before, and the (unique) value of $f_{n-I}(t^*)$ can be found in $O (m \log m)$ time by the Divvy algorithm of~\citet{gulati2012demand}.

Given a profile $\mathbf P$, the \emph{social cost} of an outcome $\mathbf p$ is $\sum_{i \in [n]} d(\mathbf \phat_i, \mathbf p)$, and the (utilitarian) social welfare of $\mathbf p$ is the negation of the social cost. In general, there may be multiple divisions that maximize social welfare. For example, if $m=2$, one voter reports $(1,0)$ and another reports $(0,1)$, then all divisions have the same social cost of 2. As it turns out, any division that satisfies the upper and lower bound constraints of $\bar{p}_{I,j}$ and $\bar{p}_{I+1,j}$ maximizes social welfare.
This was derived independently in a different context by~\citet{nehringpuppe}, and related results are obtained by \citet{lindner2011manipulierbarkeit}.

\begin{lemma}
	\label{lem:welfare-maximizer-characterization}
	A division $\mathbf{q}$ maximizes social welfare if and only if $\bar{p}_{I+1,j} \le q_j \le \bar{p}_{I,j}$ for all $j$.
\end{lemma}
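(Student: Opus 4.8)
The plan is to exploit the separability of social cost across alternatives. Writing $g_j(x) = \sum_{i\in[n]} |\hat p_{i,j} - x|$, the social cost of a division $\mathbf q$ is $\sum_{j\in[m]} g_j(q_j)$. Each $g_j$ is convex and piecewise linear, and on the open interval $(\bar p_{k+1,j}, \bar p_{k,j})$ its slope is exactly $n - 2k$, since there $k$ of the reports on alternative $j$ lie above $x$ and $n-k$ lie below. The difficulty is that the unconstrained minimizer of each $g_j$ (the interval between the two coordinate-wise medians) need not sum to one across $j$; the normalization constraint $\sum_j q_j = 1$ couples the coordinates.

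First I would remove this coupling by a \emph{tilt}. Set $h_j(x) = g_j(x) - (n-2I)\,x$, which is again convex and piecewise linear, with slope $2(I-k)$ on $(\bar p_{k+1,j}, \bar p_{k,j})$. This slope is strictly negative for $k \ge I+1$, zero for $k = I$, and strictly positive for $k \le I-1$, so $h_j$ is strictly decreasing on $(-\infty, \bar p_{I+1,j}]$, constant on $[\bar p_{I+1,j}, \bar p_{I,j}]$, and strictly increasing on $[\bar p_{I,j}, \infty)$; hence $h_j$ attains its minimum precisely on $[\bar p_{I+1,j}, \bar p_{I,j}]$ (using the conventions $\bar p_{0,j}=1$ and $\bar p_{n+1,j}=0$ for the extreme admissible values of $I$).

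Next, since $\sum_j q_j = 1$ for every division, $\sum_j g_j(q_j) = \sum_j h_j(q_j) + (n-2I)$, so $\mathbf q$ minimizes social cost over the simplex if and only if it minimizes $\sum_j h_j(q_j)$ there. Now $\sum_j h_j(q_j) \ge \sum_j \min_x h_j(x)$ with equality exactly when $q_j \in [\bar p_{I+1,j}, \bar p_{I,j}]$ for every $j$, and this bound is attainable within the simplex: because $\sum_j \bar p_{I+1,j} \le 1 \le \sum_j \bar p_{I,j}$ and each interval $[\bar p_{I+1,j}, \bar p_{I,j}] \subseteq [0,1]$ is nonempty, the intermediate value theorem lets us pick $q_j$ in each interval with $\sum_j q_j = 1$ (this is exactly the nonemptiness of the claimed set, already witnessed by $\mathcal{A}^{\mathcal{F}^*}(\mathbf P)$). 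Combining the two equivalences, $\mathbf q$ maximizes social welfare iff $\bar p_{I+1,j} \le q_j \le \bar p_{I,j}$ for all $j$.

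The main obstacle is the second step: pinning down the minimizing interval of $h_j$ \emph{exactly} rather than merely up to the choice of slope, which requires care when consecutive order statistics $\bar p_{k,j}$ coincide (so some intervals $(\bar p_{k+1,j}, \bar p_{k,j})$ are degenerate and must be skipped when reading off slopes) and when $I \in \{0,n\}$, where one endpoint of the target interval is supplied by the $\bar p_{0,j}=1$ / $\bar p_{n+1,j}=0$ convention. Once the strict-monotonicity behaviour of $h_j$ outside $[\bar p_{I+1,j}, \bar p_{I,j}]$ is established, everything else is the routine bookkeeping above.
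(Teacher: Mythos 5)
Your proof is correct, and it takes a genuinely different route from the paper's. The paper argues directly: it first computes the social cost of an arbitrary division inside the box $\prod_j[\bar p_{I+1,j},\bar p_{I,j}]$ and observes that the $\epsilon_j$'s cancel, so all such divisions are cost-equivalent; it then shows any division violating a box constraint admits a strict improvement by an explicit two-coordinate exchange (shift $\epsilon$ of mass from an alternative with $q_j>\bar p_{I,j}$ to one with $q_{j'}<\bar p_{I,j'}$, which helps at least $n+1$ of the $2n$ relevant entries and hurts at most $n-1$). Your argument instead decouples the coordinates via the Lagrangian tilt $h_j(x)=g_j(x)-(n-2I)x$, identifies $[\bar p_{I+1,j},\bar p_{I,j}]$ as the exact argmin of each $h_j$ from the sign pattern of the slopes $2(I-k)$, and closes with the observation that the coordinatewise lower bound is attained inside the simplex precisely because $\sum_j\bar p_{I+1,j}\le 1\le\sum_j\bar p_{I,j}$. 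The paper's exchange argument is more elementary and reuses the same perturbation structure as its Pareto-optimality proof (Theorem~\ref{thm:po}); your tilting argument is more structural, makes the role of $I$ transparent (it is the index at which the multiplier $n-2I$ flips the tilted slopes from negative to positive), yields the characterization as a single chain of equivalences without needing a separate existence-of-maximizer step, and generalizes immediately to weighted voters or other separable convex costs. You correctly flag the only delicate points (degenerate intervals when order statistics coincide, and the conventions at $I\in\{0,n\}$), and the strictness of the slopes off the middle interval handles them.
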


\begin{proof}
	Let $\mathbf{q}$ be a division with $\bar{p}_{I+1,j} \le q_j = \bar{p}_{I+1,j} + \epsilon_j \le \bar{p}_{I,j}$ for all $j$, with normalization of $\mathbf{q}$ implying that $\sum_{j \in [m]} \epsilon_j = 1 - \sum_{j \in [m]} \bar{p}_{I+1,j}$. 
Then the social cost of $\mathbf{q}$ is
	\begin{align*}
		\sum_{j \in [m]} \sum_{i \in [n]} |\bar{p}_{i,j}-q_j| &= \sum_{j \in [m]} \left( \sum_{i \in [n]} |\bar{p}_{i,j}-\bar{p}_{I+1,j}| + \sum_{i \ge I+1} \epsilon_j - \sum_{i \le I} \epsilon_j \right)\\
&=\sum_{j \in [m]} \sum_{i \in [n]} |\bar{p}_{i,j}-\bar{p}_{I+1,j}| + (n-2I) \left( 1 - \sum_{j \in [m]} \bar{p}_{I+1,j}\right)
	\end{align*}
Because this expression does not depend on $\epsilon_j$, all such divisions $\mathbf{q}$ have the same social cost.

		We now show that this distance is minimal. Let $\mathbf{q}$ be a division that does not satisfy $\bar{p}_{I+1,j} \le q_j \le \bar{p}_{I,j}$ for some $j$. Suppose $q_j > \bar{p}_{I,j}$ (the case where $q_j < \bar{p}_{I+1,j}$ can be handled similarly). By the definition of $I$, there must exist some alternative $j'$ for which $q_{j'}< \bar{p}_{I,j'}$. Now, consider the division $\mathbf{q'}$ defined by $q'_j = q_j-\epsilon> \bar{p}_{I,j}$ and $q'_{j'}=q_{j'}+\epsilon< \bar{p}_{I,j'}$, with $\mathbf{q'}$ and $\mathbf{q}$ equal on all other coordinates. Compare $\mathbf{q}$ and $\mathbf{q'}$ in terms of $\ell_1$ distance from the reports. They are indistinguishable on all alternatives other than $j$ and $j'$. On alternative $j$, $\mathbf{q'}$ is $\epsilon$ closer than $\mathbf{q}$ to all entries $\bar{p}_{i,j}$ with $i \ge I$, and at most $\epsilon$ farther from all other entries. On alternative $j'$, $\mathbf{q'}$ is $\epsilon$ closer than $\mathbf{q}$ to all entries $\bar{p}_{i,j'}$ with $i \le I$, and at most $\epsilon$ farther from all other entries. Therefore, of the $2n$ entries on alternatives $j$ and $j'$, $\mathbf{q'}$ is $\epsilon$ closer than $\mathbf{q}$ to at least $n+1$ of them, and no more than $\epsilon$ farther than $\mathbf{q}$ from the other $n-1$. Therefore, $\mathbf{q}$ does not maximize social welfare.
\end{proof}

As a corollary of Lemma~\ref{lem:welfare-maximizer-characterization}, we immediately obtain that $\mathcal{A}^{\mathcal{F}^*}$ maximizes social welfare, and is therefore Pareto optimal. 
Social welfare-maximizing mechanisms have been considered before; all that is needed is a suitable tiebreaking procedure to select a single division from the set of maximizers. 
\citet{GKSA16} suggest breaking ties by selecting the lexicographically largest welfare maximizer, but this is not neutral. 
\citet{lindner2011manipulierbarkeit} proposes a different way to break ties, which is neutral: select the welfare maximizer $\mathbf p$ that is closest to the uniform distribution according to $\ell_2$ distance $\sum_{j \in [m]} (p_j-\frac{1}{m})^2$.
As we now show, it turns out that $\mathcal{A}^{\mathcal{F}^*}$ implements this tiebreaking method.

\begin{theorem}
	For every profile $\mathbf{P}$, $\mathcal{A}^{\mathcal{F}^*}$ selects the division that minimizes $\ell_2$ distance to the uniform division, among those that maximize social welfare.
\end{theorem}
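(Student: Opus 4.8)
The plan is to combine the two facts already assembled in this section. By Lemma~\ref{lem:welfare-maximizer-characterization}, the set $W$ of welfare-maximizing divisions is exactly the ``box'' $\{\mathbf{q} : \bar{p}_{I+1,j}\le q_j\le \bar{p}_{I,j}\text{ for all }j\text{ and }\sum_j q_j=1\}$; and, by the discussion preceding that lemma, $\mathcal{A}^{\mathcal{F}^*}(\mathbf{P})$ lies in $W$ and has the explicit ``clamped constant'' form
\[
	\mathcal{A}^{\mathcal{F}^*}(\mathbf{P})_j=\min\bigl\{\bar{p}_{I,j},\ \max\{\bar{p}_{I+1,j},\ c\}\bigr\}\qquad\text{for every }j,\text{ where }c:=f_{n-I}(t^*),
\]
and the normalization $\sum_j \mathcal{A}^{\mathcal{F}^*}(\mathbf{P})_j=1$ holds by the definition of $t^*$. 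So it suffices to show that the unique entropy maximizer over $W$ is precisely a vector of this clamped-constant shape with coordinate sum $1$.

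The key lemma I would prove is a standard ``water-filling'' characterization: for intervals $[\ell_j,u_j]$ with $\ell_j\le u_j$ and $\sum_j \ell_j\le 1\le \sum_j u_j$, the Shannon-entropy maximizer over $\{\mathbf{q}: \ell_j\le q_j\le u_j\text{ for all }j\text{ and }\sum_j q_j=1\}$ is $q^*_j=\min\{u_j,\max\{\ell_j,c^*\}\}$, where $c^*\ge 0$ is any constant with $\sum_j q^*_j=1$ (such a $c^*$ exists because $c\mapsto \sum_j\min\{u_j,\max\{\ell_j,c\}\}$ is continuous, nondecreasing, and ranges from $\sum_j\ell_j\le 1$ to $\sum_j u_j\ge 1$). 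I would prove this by first-order optimality: entropy $H(\mathbf{q})=-\sum_j q_j\log q_j$ is strictly concave, so $q^*$ is the unique maximizer iff its directional derivative $-\sum_j d_j\log q^*_j$ is $\le 0$ along every feasible direction $\mathbf{d}$, i.e.\ every $\mathbf{d}$ with $\sum_j d_j=0$, with $d_j\ge 0$ where $q^*_j=\ell_j$, and with $d_j\le 0$ where $q^*_j=u_j$. Splitting the coordinates into the three clamp regimes and using $q^*_j=c^*$ on interior coordinates, $q^*_j=u_j\le c^*$ on upper-pinned coordinates, and $q^*_j=\ell_j\ge c^*$ on lower-pinned coordinates, one checks $-d_j\log q^*_j\le -d_j\log c^*$ in each regime (in the first with equality; in the other two because multiplying the comparison of logarithms by $-d_j$, which has the appropriate sign, preserves the inequality). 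Summing and using $\sum_j d_j=0$ gives the directional derivative $\le -(\log c^*)\sum_j d_j=0$. Alternatively, this lemma follows immediately from the KKT conditions for the convex program.

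Finally I would close the loop. Both the entropy maximizer $q^*$ (with constant $c^*$) and $\mathcal{A}^{\mathcal{F}^*}(\mathbf{P})$ (with constant $c$) are clamped-constant vectors lying in $W$ whose coordinates sum to $1$. For each $j$ the map $c\mapsto\min\{u_j,\max\{\ell_j,c\}\}$ is nondecreasing, so if $c< c^*$ then every coordinate of $\mathcal{A}^{\mathcal{F}^*}(\mathbf{P})$ is weakly below the corresponding coordinate of $q^*$, and equality of the two coordinate sums forces equality coordinatewise; symmetrically if $c>c^*$. Hence $\mathcal{A}^{\mathcal{F}^*}(\mathbf{P})=q^*$, which is the claim.

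I expect the main obstacle to be boundary bookkeeping rather than any deep idea. First, the appeal to the preceding discussion implicitly uses that $t^*$ falls in the movement window of phantom $f_{n-I}$ and that the generalized-median identity giving the clamped-constant form holds even with ties among voter reports and at the extreme values $0$ and $1$; these deserve a careful (but elementary) case analysis, and if one prefers a self-contained proof the median computation should be redone here. Second, entropy requires the convention $0\log 0=0$, and coordinates forced to $0$ (those with $\bar{p}_{I,j}=0$) must be peeled off before applying the first-order argument, since they contribute $0$ to $H$ while the gradient blows up there and feasible directions have $d_j=0$ on them. Third, the fully degenerate case $\sum_j\bar{p}_{I+1,j}=1$, in which $W$ is a single point and the statement is trivial, should be dispatched separately so that the value $c^*=0$ never appears inside $\log c^*$.
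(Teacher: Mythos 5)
Your proposal is correct, and the first half (reducing the claim to ``the output is the clamped-constant vector $\med\{\bar{p}_{I+1,j},\bar{p}_{I,j},f_{n-I}(t^*)\}$, which lies in the welfare-maximizing box of Lemma~\ref{lem:welfare-maximizer-characterization}'') matches the paper exactly. Where you diverge is the entropy-maximization step. The paper perturbs an arbitrary \emph{competitor}: for any other welfare maximizer $\mathbf{q}$, it uses the median form of $\mathcal{A}^{\mathcal{F}^*}(\mathbf{P})$ to find coordinates $j,j'$ with $q_{j'}<f_{n-I}(t^*)<q_j$ and strict room inside the box, then shifts mass $\epsilon$ from $j$ to $j'$; by strict concavity of $x\mapsto -x\log x$ this strictly increases entropy while preserving welfare maximization, so $\mathbf{q}$ is not the (unique) entropy maximizer, and since a maximizer exists it must be $\mathcal{A}^{\mathcal{F}^*}(\mathbf{P})$. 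You instead verify first-order optimality at the \emph{candidate} itself via a general water-filling lemma (the entropy maximizer over a box intersected with the simplex is the clamped-constant vector) and then identify the clamping level with $f_{n-I}(t^*)$ by a monotonicity-plus-equal-sums argument. The two are dual local-optimality arguments; yours carries a bit more machinery (KKT plus a separate level-identification step) but yields a reusable characterization of the entropy maximizer and forces explicit treatment of the boundary issues you rightly flag ($0\log 0$, coordinates pinned to $0$, the degenerate case $\sum_j\bar{p}_{I+1,j}=1$), whereas the paper's two-coordinate exchange sidesteps them automatically because it never evaluates a logarithm, only the concavity of $-x\log x$. One small point in your favor: the paper asserts existence and uniqueness of the entropy maximizer by calling Shannon entropy ``convex''; it is of course strictly concave on the relevant coordinates, which is exactly what your argument uses.
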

\label{thm:l2-maximization}

\begin{proof}
	From the earlier discussion, we know that $\mathcal{A}^{\mathcal{F}^*}(\mathbf{P})_j = \med \{ \bar{p}_{I+1,j}, \bar{p}_{I,j}, f_{n-I}(t^*) \} \in [\bar{p}_{I+1,j}, \bar{p}_{I,j}]$ for all $j \in [m]$. Therefore, it maximizes social welfare.
	
It remains to show that, subject to these constraints, $\mathcal{A}^{\mathcal{F}^*}(\mathbf{P})$ minimizes $\ell_2$ distance to the uniform division. Consider any other division $\mathbf{q} \neq \mathcal{A}^\mathcal{\mathcal{F}^*}(\mathbf{P})$ with $\bar{p}_{I+1,j} \le q_j \le \bar{p}_{I,j}$. 
Then there must exist an alternative $j$ for which $\bar{p}_{I+1,j} \le \mathcal{A}^\mathcal{\mathcal{F}^*}(\mathbf{P})_j < q_j \le \bar{p}_{I,j}$. 
Further, because $\mathcal{A}^\mathcal{\mathcal{F}^*}(\mathbf{P})_j < \bar{p}_{I,j}$, it must be the case that $f_{n-I}(t^*) \le \mathcal{A}^\mathcal{\mathcal{F}^*}(\mathbf{P})_j= \med \{ \bar{p}_{I+1,j}, \bar{p}_{I,j}, f_{n-I}(t^*) \}$. 
There must also be an alternative $j'$ for which $\bar{p}_{I+1,j'} \le q_{j'} < \mathcal{A}^\mathcal{\mathcal{F}^*}(\mathbf{P})_{j'} \le \bar{p}_{I,j'}$, with $\mathcal{A}^\mathcal{\mathcal{F}^*}(\mathbf{P})_{j'} \le f_{n-I}(t^*)$.

Putting these together, we have that $q_{j'}<f_{n-I}(t^*)<q_j$. We also know that $\bar{p}_{I+1,j}<q_j$ and $q_{j'}<\bar{p}_{I,j'}$. Therefore, adjusting $q_j$ to $q_j - \epsilon$ and $q_{j'}$ to $q_{j'}+\epsilon$, for $\epsilon$ small enough that none of the above strict inequalities are violated, both 
(1) reduces $\ell_2$ distance to uniform,
and (2) respects social-welfare maximization. 
The former is immediately clear when $q_{j'}<\frac{1}{m}$ and $q_j>\frac{1}{m}$, and follows from strict convexity of the function $x \mapsto x^2$ when either $\frac{1}{m} \le q_{j'}$ or $q_j < \frac{1}{m}$.
Therefore, $\mathbf{q}$ is not the unique $\ell_2$ distance-minimizing division among social welfare maximizers, so $\mathcal{A}^{\mathcal{F}^*}$ is. 
\end{proof}

We note that the tiebreaking method of \citet{lindner2011manipulierbarkeit} is equivalent to choosing the welfare-maximizing division with largest Shannon entropy $- \sum_{j\in [m]} p_j \log p_j$. The proof of Theorem~\ref{thm:l2-maximization} goes through almost identically, replacing the convexity of the function $x \mapsto x^2$ by the convexity of $x \mapsto x \log x$.

Like independent markets, $\mathcal{A}^{\mathcal{F}^*}$ satisfies participation. The argument proceeds along exactly the same lines. For an individual voter, participating and reporting the existing aggregate has no effect, while reporting truthfully is at least as beneficial, by incentive compatibility.

\begin{theorem}
	$\mathcal{A}^{\mathcal{F}^*}$ satisfies participation.
\end{theorem}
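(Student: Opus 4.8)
The plan is to follow the same two-step template used to establish participation for $\mathcal{A}^{\IM}$: first show that adding a voter whose report coincides with the current aggregate leaves the aggregate unchanged, i.e.\ $\mathcal{A}^{\mathcal{F}^*}(\mathcal{A}^{\mathcal{F}^*}(\mathbf{P_{-i}}),\mathbf{P_{-i}}) = \mathcal{A}^{\mathcal{F}^*}(\mathbf{P_{-i}})$; and then combine this with the incentive compatibility of $\mathcal{A}^{\mathcal{F}^*}$ (Theorem~\ref{thm:ic}, which applies since $\mathcal{A}^{\mathcal{F}^*}$ is a moving phantom mechanism). Concretely, writing $\mathbf{q} = \mathcal{A}^{\mathcal{F}^*}(\mathbf{P_{-i}})$ and granting the first step, incentive compatibility applied to the profile $(\mathbf{q},\mathbf{P_{-i}})$ with voter $i$ deviating from the report $\mathbf{q}$ to her true preference $\mathbf{p_i}$ gives $d(\mathcal{A}^{\mathcal{F}^*}(\mathbf{p_i},\mathbf{P_{-i}}),\mathbf{p_i}) \le d(\mathcal{A}^{\mathcal{F}^*}(\mathbf{q},\mathbf{P_{-i}}),\mathbf{p_i}) = d(\mathbf{q},\mathbf{p_i}) = d(\mathcal{A}^{\mathcal{F}^*}(\mathbf{P_{-i}}),\mathbf{p_i})$, which is exactly the participation inequality of Definition~\ref{def:po}'s participation analogue.

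The first step is where the argument departs from the $\mathcal{A}^{\IM}$ case, since we no longer have a market or Nash-equilibrium description to lean on; instead we use the characterization of $\mathcal{A}^{\mathcal{F}^*}$ as a social-welfare maximizer (Lemma~\ref{lem:welfare-maximizer-characterization} and the subsequent discussion), which is available for a profile with any number of voters. Let $\mathbf{q} = \mathcal{A}^{\mathcal{F}^*}(\mathbf{P_{-i}})$, so that $\mathbf{q}$ attains the minimum social cost over the profile $\mathbf{P_{-i}}$; call this minimum $W^*$. For the augmented profile $(\mathbf{q},\mathbf{P_{-i}})$, the social cost of an arbitrary division $\mathbf{r}$ is exactly its social cost under $\mathbf{P_{-i}}$ plus $d(\mathbf{q},\mathbf{r})$, hence at least $W^* + d(\mathbf{q},\mathbf{r})$, while the social cost of $\mathbf{q}$ itself is $W^* + 0 = W^*$. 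Since $d(\mathbf{q},\mathbf{r}) > 0$ for every $\mathbf{r} \neq \mathbf{q}$, the division $\mathbf{q}$ is the \emph{unique} social-welfare maximizer for $(\mathbf{q},\mathbf{P_{-i}})$, so $\mathcal{A}^{\mathcal{F}^*}(\mathbf{q},\mathbf{P_{-i}}) = \mathbf{q}$ regardless of how the entropy tie-break is resolved. This gives the first step.

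I expect the only real subtlety — the step needing the most care to phrase correctly — to be this first step: one should resist trying to argue it via the interval endpoints $\bar{p}_{I,j}, \bar{p}_{I+1,j}$ of Lemma~\ref{lem:welfare-maximizer-characterization} directly, since inserting the new report shifts the order statistics and the index $I$. The clean route is the additive-cost/uniqueness observation above, together with the remark that the welfare-maximizing characterization of $\mathcal{A}^{\mathcal{F}^*}$ holds for every number of voters (it is proved for fixed $n$, and $\mathcal{F}^*$ is defined uniformly as a family across all $n$). Everything else is a direct invocation of Theorem~\ref{thm:ic}, mirroring the $\mathcal{A}^{\IM}$ proof line for line.
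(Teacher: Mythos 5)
Your proposal is correct and follows essentially the same route as the paper, which only sketches this proof by noting that ``participating and reporting the existing aggregate has no effect, while reporting truthfully is at least as beneficial, by incentive compatibility.'' Your additive-cost/uniqueness argument for the fixed-point step $\mathcal{A}^{\mathcal{F}^*}(\mathbf{q},\mathbf{P_{-i}})=\mathbf{q}$ is a valid and cleanly stated way to fill in the detail the paper leaves implicit (and correctly sidesteps the shifting order statistics), so no gap remains.
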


The utilitarian mechanism also satisfies reinforcement, which follows from the additivity of social welfare.

\begin{theorem}
	\label{thm:util-reinforcement}
	$\mathcal{A}^{\mathcal{F}^*}$ satisfies reinforcement.
\end{theorem}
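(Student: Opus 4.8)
The plan is to exploit the characterization of $\mathcal{A}^{\mathcal{F}^*}$ proved just above: on any profile it returns the Shannon-entropy-maximizing division among all divisions that minimize social cost. Write $\mathcal{W}(\mathbf{P})$ for the set of divisions minimizing $\mathbf{q} \mapsto \sum_{i} d(\mathbf{\phat_i}, \mathbf{q})$ on profile $\mathbf{P}$, so that $\mathcal{A}^{\mathcal{F}^*}(\mathbf{P})$ is the unique entropy-maximizer over $\mathcal{W}(\mathbf{P})$. The engine of the proof is the obvious additivity of social cost over disjoint voter groups: for every division $\mathbf{q}$, the social cost of $\mathbf{q}$ on $\mathbf{P}\cup\mathbf{R}$ equals its social cost on $\mathbf{P}$ plus its social cost on $\mathbf{R}$.

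The first step is to show that if $\mathbf{p} := \mathcal{A}^{\mathcal{F}^*}(\mathbf{P}) = \mathcal{A}^{\mathcal{F}^*}(\mathbf{R})$, then $\mathcal{W}(\mathbf{P}\cup\mathbf{R}) = \mathcal{W}(\mathbf{P}) \cap \mathcal{W}(\mathbf{R})$. The inclusion $\supseteq$ follows directly from additivity: a division minimizing social cost on both $\mathbf{P}$ and $\mathbf{R}$ minimizes the sum, and $\mathbf{p}$ certifies that this minimum sum is attainable. For $\subseteq$, take $\mathbf{q} \in \mathcal{W}(\mathbf{P}\cup\mathbf{R})$; by additivity and minimality of $\mathbf{q}$ on $\mathbf{P}\cup\mathbf{R}$, the social cost of $\mathbf{q}$ on $\mathbf{P}$ plus that on $\mathbf{R}$ is at most the corresponding sum for $\mathbf{p}$, while each of the two summands for $\mathbf{q}$ is individually at least as large as the corresponding summand for $\mathbf{p}$ (since $\mathbf{p} \in \mathcal{W}(\mathbf{P}) \cap \mathcal{W}(\mathbf{R})$). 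Hence both inequalities are tight and $\mathbf{q} \in \mathcal{W}(\mathbf{P}) \cap \mathcal{W}(\mathbf{R})$.

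The second step concludes. We have $\mathbf{p} \in \mathcal{W}(\mathbf{P}\cup\mathbf{R})$, and $\mathcal{W}(\mathbf{P}\cup\mathbf{R}) = \mathcal{W}(\mathbf{P}) \cap \mathcal{W}(\mathbf{R}) \subseteq \mathcal{W}(\mathbf{P})$. Since, by the characterization theorem, $\mathbf{p}$ is the unique Shannon-entropy maximizer over all of $\mathcal{W}(\mathbf{P})$, it is a fortiori the unique entropy maximizer over the subset $\mathcal{W}(\mathbf{P}\cup\mathbf{R})$. Therefore $\mathcal{A}^{\mathcal{F}^*}(\mathbf{P}\cup\mathbf{R}) = \mathbf{p} = \mathcal{A}^{\mathcal{F}^*}(\mathbf{P}) = \mathcal{A}^{\mathcal{F}^*}(\mathbf{R})$, which is exactly reinforcement.

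I do not expect a genuine obstacle here; the argument is essentially routine once the entropy-maximizing description of $\mathcal{A}^{\mathcal{F}^*}$ is available. The only point requiring a moment's care is the set equality $\mathcal{W}(\mathbf{P}\cup\mathbf{R}) = \mathcal{W}(\mathbf{P}) \cap \mathcal{W}(\mathbf{R})$: additivity alone does not give it, and one must use the hypothesis that $\mathbf{P}$ and $\mathbf{R}$ share a common welfare maximizer $\mathbf{p}$ — otherwise a welfare maximizer of the combined profile need not optimize either profile separately.
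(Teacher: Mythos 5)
Your proof is correct and follows essentially the same route as the paper's: both rest on the additivity of the $\ell_1$ social cost over disjoint voter groups together with the characterization of $\mathcal{A}^{\mathcal{F}^*}$ as the entropy-maximizing welfare maximizer. Your packaging via the identity $\mathcal{W}(\mathbf{P}\cup\mathbf{R}) = \mathcal{W}(\mathbf{P}) \cap \mathcal{W}(\mathbf{R})$ (valid precisely because the two profiles share a welfare maximizer) is a slightly crisper statement of what the paper argues implicitly, but it is not a different argument.
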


 \begin{proof}
 	Let $\mathbf{P}$ and $\mathbf{R}$ be profiles with $\mathcal{A}^\mathcal{\mathcal{F}^*}(\mathbf{P})=\mathcal{A}^\mathcal{\mathcal{F}^*}(\mathbf{R})=\mathbf{q}$. Let $\mathbf{q'}$ be some division with higher entropy than $\mathbf{q}$. Then it must be the case that $d(\mathbf{q'},\mathbf{P})>d(\mathbf{q},\mathbf{P})$ and $d(\mathbf{q'},\mathbf{R})>d(\mathbf{q},\mathbf{R})$. Since the $\ell_1$ distance is additive, $d(\mathbf{q'},\mathbf{P \cup R})>d(\mathbf{q},\mathbf{P \cup R})$.

 	Now let $\mathbf{q'}$ be some division with $d(\mathbf{q'},\mathbf{P \cup R})<d(\mathbf{q},\mathbf{P \cup R})$. Then $d(\mathbf{q'},\mathbf{P})+d(\mathbf{q'},\mathbf{R})<d(\mathbf{q},\mathbf{P})+d(\mathbf{q},\mathbf{R})$. But this is impossible, since $\mathbf{q}$ maximizes social welfare for profiles $\mathbf{P}$ and $\mathbf{R}$.

 	Therefore, any division $\mathbf{q'}\neq \mathbf{q}$ either has strictly lower social welfare on profile $\mathbf{P}\cup \mathbf{R}$, or equal social welfare but lower entropy. So $\mathcal{A}^\mathcal{\mathcal{F}^*}((\mathbf{P} \cup \mathbf{R}))=\mathbf{q}$.
 \end{proof} \section{Conclusion}
\label{sec:conclusion}

We considered the problem of aggregating budget proposals for participatory budgeting. Inspired by the generalized median mechanisms of \citet{Moul80}, we introduced the broad class of moving phantom mechanisms and proved that all mechanisms in this class are incentive compatible under $\ell_1$ voter preferences. We analyzed two moving phantom mechanisms in detail: one that maximizes social welfare while violating the natural fairness notion of proportionality, and another that satisfies proportionality while violating Pareto optimality.

In addition to the properties discussed in the main body of the paper, some other contrasts can be drawn between these two mechanisms. For example, suppose that $n$ is large, and $n/2$ voters report $(1,0)$ while the other $n/2$ report $(0,1)$. Then, under $\mathcal{A}^{\mathcal{F}^*}$, a single voter can completely dictate the outcome. Under independent markets, we can show that the ability of a single voter to affect the outcome vanishes as $n \to \infty$. 

In many participatory budgeting applications, projects will have fixed costs, and so it does not make sense to direct only a small amount of funding to such a project. One can enrich our model by allowing for funding lower bounds, so that a project receives either no funding or at least its funding lower bound. However, in this model, one can show that there does not exist an anonymous, range respecting, and incentive-compatible mechanism \citep[Section~4.7]{peters2019fair}.

It would be interesting to further analyze the utilitarian mechanism $\mathcal{A}^{\mathcal{F}^*}$. For example, can it be implemented using a voting game similar to the one we found for independent markets?
Of course, there are many other moving phantom mechanisms that we have not considered. A natural question to investigate is what other properties can be achieved by other phantom systems. And finally, as we mentioned earlier, when there are only two outcomes, we know that all (anonymous, neutral, and continuous) incentive compatible budget aggregation mechanisms can be represented as moving phantom mechanisms.  It remains an open question whether this continues to hold when the number of outcomes $m$ is more than 2.

Some other open questions are whether there exist incentive-compatible aggregation mechanisms when we define preferences with respect to the $\ell_2$ or the $\ell_\infty$ distance. Also, while generalized medians in the one-dimensional single-peaked setting are even \emph{group} strategyproof, this is not in general true for moving phantom mechanisms, and it is unclear whether sensible group strategyproof mechanisms exist for budget aggregation with $\ell_1$ preferences.
 
\section*{Acknowledgements}
We thank anonymous reviewers at JET and at the ACM EC 2019 conference for helpful suggestions that improved the presentation.
We thank Klaus Nehring and Clemens Puppe for useful discussions, and Alexandra Osipov for pointing out a mistake in an earlier version.
Compared to the working paper, Theorem~\ref{thm:rational}, Proposition~\ref{prop:uniform-overnormalizes}, and Theorem~\ref{thm:k-approval} are new.
Dominik Peters was supported by EPSRC and by ERC under grant 639945 (ACCORD).

\bibliographystyle{plainnat}
\bibliography{budget}

\end{document}